\documentclass[12pt]{article}
\usepackage
[
    a4paper,
    margin = 2.5cm,
]
{geometry}
\usepackage{amsfonts}
\usepackage{amsmath}
\usepackage{amssymb}
\usepackage{amsthm}
\usepackage{graphicx}
\usepackage{tikz}
\usepackage{sgame}
\usetikzlibrary{decorations.pathreplacing}
\usepackage{color}
\usepackage{mathpazo}
\usepackage{times}
\usepackage{caption}
\usepackage{subcaption}
\usepackage{natbib}
\usepackage[labelsep=space]{caption}
\usepackage{soul}
\usepackage{url}
\usepackage[multiple]{footmisc}

\theoremstyle{definition}
\newtheorem{assumption}{Assumption}

\newtheorem{definition}{Definition}
\newtheorem{example}{Example}
\newtheorem{lemma}{Lemma}
\newtheorem{proposition}{Proposition}

\DeclareMathOperator*{\argmax}{arg\,max}

\newcommand{\EE}{\mathbb{E}}

\DeclareMathOperator\supp{supp}

\newcommand{\prob}{\mathrm{Pr}}

\usepackage{hyperref}
\begin{document}

\title{Information Intermediaries in Monopolistic Screening \thanks{~We owe special thanks to Alessandro Pavan and Piotr Dworczak for their support and guidance throughout this project. We also wish to thank Zeinab Aboutalebi, Ian Ball, Sandeep Baliga, {\"O}zlem Bedre-Defolie, Georgy Egorov, Daniel Garcia, Maarten Janssen, Peter Klibanoff, Laurent Mathevet, Niko Matouschek, Jeffrey Mensch, Filippo Mezzanotti, Wojciech Olszewski, Harry Pei, Alvaro Sandroni, Karl Schlag, Marciano Siniscalchi, as well as various seminar audiences at Northwestern University, University of Vienna and European University Institute for valuable comments and suggestions. Kyriazis gratefully acknowledges the support received as part of the project, Digital Platforms: Pricing, Variety, and Quality Provision (DIPVAR), that has received funding from the European Research Council (ERC) under the European Union’s Horizon 2020 research and innovation programme (grant agreement No 853123).}}

\author{Panagiotis Kyriazis\thanks{~Department of Economics, European University Institute (EUI). Email: panagiotis.kyriazis@eui.eu.}
\and Edmund Y. Lou\thanks{~Department of Accounting and Finance, University of Auckland. Email: edmund.lou@auckland.ac.nz.}}

\date{March 2026}

\maketitle
\begin{center}
\end{center}

\begin{abstract}
We investigate the relationship between product offerings, information dissemination, and consumer decision-making in a monopolistic screening environment in which consumers lack information about their valuation of quality-differentiated products. An intermediary, who is driven by the objective of maximizing consumer surplus but is also biased towards high-quality products, provides recommendations after the monopolist announces the menu of product choices. We characterize the monopolist's profit-maximizing finite-item menu. Our results show that as intermediaries place greater emphasis on consumer surplus over product quality, sellers are prompted to strategically expand their product range. Intriguingly, this augmented product variety decreases economic efficiency compared to scenarios where direct seller-to-consumer information provision is the norm. The role of information intermediaries proves pivotal in shaping consumer welfare, market profitability, and overarching economic efficiency. Our insights underscore the complexities introduced by these intermediaries that policymakers and market designers must consider when designing policies centered on consumer learning and market information transparency.
\end{abstract}
\newpage

\section{Introduction}
Consumer welfare is significantly influenced by the prevailing uncertainty that complicates consumers' selection among diverse products. This is particularly evident in the contemporary digital economy, where abundant information exists, capable of guiding consumer decisions and enhancing the matching between consumers and products. This is often facilitated through intermediaries or platforms showcasing these products. Consequently, the extent of information an intermediary opts to reveal plays a pivotal role in shaping market outcomes. This disclosure not only steers consumer choices but also influences sellers' profitability and, in the grand scheme, dictates the economic performance of a market. By exercising discretion over information transparency, intermediaries wield power over market dynamics and the resultant stakeholder benefits.

In this paper, we investigate the interplay between product offerings, information dissemination, and consumer decision-making in a classic screening setting, with a substantial difference: consumers, instead of being endowed with private information about their valuation of products, rely on an intermediary to learn about their preferences. The intermediary, while aiming to optimize the consumer's payoff, is also inclined toward steering the consumers to higher-quality products. Aware of the consumer's dependency on the intermediary for information and the learning that ensues once products are unveiled, the seller strategically designs the menu offered. The optimal menu must serve a dual purpose: not only to effectively screen consumers based on their willingness to pay but also to sway the intermediary's information-disclosure choices for the seller's advantage. Our results uncover how information intermediaries can enhance product variety and consumer gains at the expense of sellers' profits and economic efficiency, thus underscoring their nuanced role in contemporary markets. 

Consider the tech industry as an example. Information intermediaries like tech review platforms or social media influencers often exhibit a bias toward highlighting high-end, cutting-edge technology products, steering the market towards luxury tech brands and influencing consumer choices towards more expensive options. This practice, aimed at maintaining a reputation for expertise, benefits sellers specializing in high-end products and reflects a significant influence on consumer choices due to information asymmetries. However, a potential shift in this bias, driven by various factors such as a desire to appeal to a broader audience or market demands for diverse product reviews, could lead these platforms to feature a wider range of products, including mid-range and budget-friendly options. This diversification in product coverage would lead to a more balanced market where products of various quality levels gain visibility. Such a shift would compel sellers, especially those in the high-end segment, to broaden their offerings to include more affordable options, recognizing a newfound opportunity to attract consumer attention. For consumers, this reduced bias results in exposure to a broader spectrum of products, enabling them to make more informed decisions that align better with their individual needs and budget constraints, thereby reshaping consumer behavior and market outcomes.

Such information intermediaries are ubiquitous in today's marketplace. While their importance is amplified in the digital economy, their influence is much broader. For instance, when a government entity or a corporation seeks new technology or services, such as military technology or innovative employee training to boost productivity, assessing the true value of such services or products can be challenging for decision-makers like politicians or CEOs. Here, specialized advisors, such as military generals or company department heads, provide guidance based on their expertise. Committed to optimizing returns, these advisors may also favor superior quality options. A critical question is how supply will react to these information frictions. To illustrate further, consider the role of environmental agencies or organizations. These entities provide insights into the production processes of goods, aiming to navigate consumers—who may grasp the importance of environmental sustainability but at the same time struggle to evaluate the precise impact of their buying decisions—toward greener choices. This raises certain questions: Will the available product selections be of higher quality, meaning more environmentally sound? And what strategies should sellers employ to optimally refine their offerings? 

In our model, a seller (she) offers goods of heterogeneous quality to screen different types of a consumer (he). In a standard screening framework, as in \cite{mussa_rosen_1978} or \cite{maskin_riley_1984}, the buyer's taste for quality is his private information. The novelty of our setup is that we assume that the buyer's valuation is unknown to both the buyer and the seller at the beginning of the game. Instead, the consumer has access to an information intermediary (he) who provides information regarding the buyer's valuation, \textit{after} the seller has already posted the menu of available products. The intermediary's preferences, as suggested by the aforementioned examples, encompass both the consumer's surplus and the quality of the products. The information provided by the intermediary facilitates consumer learning about individual preferences or, more naturally, about the product characteristics that influence the compatibility between the consumer and the product. Our primary focus lies in understanding how the intermediary's presence impacts market outcomes. To do so, we characterize the optimal finite-item menu that the seller can offer.\footnote{The restriction to finite-item menus is for tractability. This restriction is without loss of optimality when the intermediary's bias is sufficiently large.} The composition of this menu is intimately tied to the intermediary’s goals; as such, it varies according to the intermediary's objective function. Naturally, we then ask how the profit-maximizing menu of offered products is adjusted in response to changes in the intermediary's objective function. In other words, we investigate how it evolves as the intermediary's emphasis shifts between product quality and consumer surplus. We leverage our results to analyze the implications of the change in supply's response to market outcomes of interest, specifically average quality, profits, consumer surplus, distortions, and overall economic efficiency.

Because the intermediary provides information after the monopolist announces the menu, the monopolist has the capacity to shape the intermediary's decisions regarding information disclosure through the menu's design. This, in turn, yields an additional constraint on the monopolist's problem. Since the seller knows what information the intermediary will provide in equilibrium, we interpret the additional constraint as an obedience constraint on the intermediary's behavior: the seller chooses, in addition to the menu of goods, the information the buyer will receive as well. This choice must be optimal for the intermediary in the sense that the intermediary must have no incentive to deviate and provide information in a different way than the one suggested by the monopolist. From a methodological perspective, we model the information provision stage as a Bayesian persuasion problem. Specifically, the intermediary acts as a sender, responsible for selecting and committing to an information structure aimed at influencing the consumer's decision. The consumer, who takes on the role of the receiver, observes the realization of this signal before making a choice from the menu. Thus, our model incorporates a Bayesian persuasion problem as a constraint to the monopolist's problem.

Our primary theoretical contribution lies in the characterization of the optimal finite-item menu that the seller offers and its responsiveness to shifts in the intermediary's bias. We first establish that, under some additional assumptions on the prior distribution from which the consumer's value is drawn, if the intermediary's bias is sufficiently high, the intermediary is essentially redundant. The optimal menu and how information is provided to the consumer coincide with the case where the seller provides information directly to the consumer and there is no intermediary. This happens because the seller offers a single-item menu absent the intermediary and provides information so as to maximize the probability of trade. As a result, the consumer receives no surplus. Even if the intermediary is present, however, the dissonance between the intermediary's and the consumer's objectives is so pronounced that the intermediary essentially is better off by providing information to maximize the probability of trade instead of yielding the consumer a positive payoff, which is exactly the optimal way the seller would provide information. 

On the other hand, when the intermediary's bias is low, the obedience constraint on the intermediary's behavior binds and fully determines the posterior types of the consumer, given the menu the seller posts. The paper's main result is that as the intermediary assigns greater weight to consumer surplus relative to product quality, the optimal menu expands to include a growing number of products. This shift is a strategic response by the seller to improved consumer learning facilitated by a less biased intermediary. In particular, by increasing the number of product options, the monopolist can better tailor the menu to fit the demand of posterior buyer types. To see this, suppose that the seller's profit-maximizing menu is a single-item menu for a given level of the intermediary's bias and suppose that there is a decrease in the bias term. Then, the consumer's posterior value will be closer to their true value, and fewer consumers will purchase the high-quality item. Suppose, now, that the seller decides to include in the menu a second, lower-quality option. This will induce some consumers who do not purchase the high-quality product to purchase the new, more affordable option. Moreover, some consumers will switch from the high-quality product to the lower-quality one. The introduction of the new item will benefit the seller only when the decrease in the intermediary's bias is sufficiently high. Moreover, by the same logic, the seller must include a successively higher number of options in the menu as the bias keeps decreasing. In the limiting case where the intermediary's bias approaches zero so that the intermediary's and the consumer's preferences coincide, it is optimal for the seller to offer a continuum of items. This is because, in this case, the intermediary provides information to the consumer in a way that guarantees that the consumer does not make ex-post inefficient trading decisions. This can only happen when the intermediary induces the consumer to learn the product's true value. Thus, it is as if the consumer has private information about their value, as in the classic monopolistic screening model of \cite{mussa_rosen_1978}.

In terms of market outcomes of interest, while the presence of intermediaries can incentivize suppliers to broaden their product offerings, thereby augmenting consumer benefits, there is a reduction in overall economic efficiency under the intermediary's presence compared to scenarios where sellers directly provide information to consumers.\footnote{For consumers, the increased product variety is beneficial. Especially in the contemporary digital economy, the benefits of expanded product variety in electronic markets to consumer welfare are widely acknowledged. While the competitive nature of these markets leads to consumer gains, such as reduced average selling prices, thereby boosting consumer surplus, \cite{bryn_hu_smith_2003} highlight that the surge in product variety can be an even more significant driver for these gains. This variety increase is facilitated by online retailers' capacity to efficiently catalog and recommend a vast array of products. Our framework seeks to encapsulate the role of these platforms as information intermediaries, emphasizing their ability to guide consumers to appropriate products.} This happens because a decrease in the intermediary's bias reduces the seller's profits. While introducing a wider range of products mitigates the profit loss, it does not fully reverse it. The reason is the convexity of the cost function associated with production. The expected cost of offering a menu with a high number of options is sufficiently higher than the one if the menu features few options. Thus, profits decrease as the intermediary's bias does, and this reduction dominates the gains in consumer payoffs. Moreover, while the consumer's payoff in the presence of the intermediary is higher relative to the case where the seller provides information to the consumer directly, it exhibits a non-monotonic relationship with the bias term. In essence, consumers favor an intermediary with a very low bias over one with a very high bias. Yet, when the difference in bias is not exceedingly substantial, the highly biased intermediary may be preferred. This observation underscores the nuanced forces at play. As a result, efficiency in the economy is also non-monotonic in the intermediary's bias with a decreasing trend. 

The above results have significant implications for designing policies to facilitate consumer learning. The rationale behind such interventions hinges on the impact of information on consumer decision-making, particularly in preventing choice errors where buyers fail to select the product that best suits their needs. Consequently, in situations where the qualities and prices of products are fixed, increasing the availability of information to consumers will enhance their welfare. However, this argument may not hold when sellers are allowed to respond to this information provision. In such cases, a seller's ability to tailor products and prices to different consumer types introduces a non-monotonic change in consumer welfare, as we discussed. Consequently, while some degree of information is undoubtedly better than none, even a slight alteration in the information made available to buyers might reduce consumer payoffs from the interaction. In other words, the relationship between consumer surplus and information availability becomes more nuanced when intermediaries are present, and supply can adapt to consumer learning.

Furthermore, we examine the consequences arising from the seller's constraints in expanding their product variety to its optimal level imposed by external factors. Recognizing that real-world firms often face such restrictions makes this analysis vital. Such constraints introduce additional inefficiencies. The seller, in this context, cannot counterbalance the profit decline resulting from the intermediary's reduced bias. Consequently, consumers face adverse outcomes, as, in essence, a broader product range enhances the alignment between buyers and their desired goods. Therefore, these limitations are detrimental both to monopolist profits and consumer surplus and, thus, lead to a diminished overall total surplus.

Our results highlight the importance of recognizing the information intermediaries' role when assessing market inefficiencies. In our model, the average quality level is higher than what it would be under standard screening models yet lower than if consumers relied solely on direct seller-to-consumer information. Therefore, if a researcher overlooks the intermediary's role, they might inaccurately estimate the actual demand--either underestimating or overestimating it. This oversight would lead them to wrongly judge the consumer's preference for quality and, consequently, the extent of quality distortions.

\subsection*{Related Literature}
Our paper is related to several strands of literature. Our screening setting is in the spirit of \cite{mussa_rosen_1978} and \cite{maskin_riley_1984}. The key distinction is that information about the consumer's preferences for quality or product characteristics that affect the match with the good is provided by an intermediary rather than the consumer having private information.

Our work closely aligns with \cite{bergemann_et_al_2022}, who explore optimal pricing through both mechanism and information design. Their focus is on a seller directly providing information to the consumer. In contrast, we consider the case where such information is provided by an intermediary. In particular, the problem \cite{bergemann_et_al_2022} solve is a relaxed version of our problem since, in our framework, there is an additional constraint on the monopolist's problem that captures the required obedience of the intermediary. We demonstrate that when this intermediary has a significant bias toward high-quality products, the seller's optimal menu coincides with \cite{bergemann_et_al_2022}'s. Yet, if this bias is small, a broader product range is optimal. We further characterize the exact number of items that the optimal menu must feature and how the intermediary's presence affects market outcomes.

A different approach to capturing information frictions in a monopolistic screening environment is pursued in \cite{thereze_2022} and \cite{mensch_ravid_2022}. These papers employ a rational inattention framework, assuming consumers can access relevant information at a cost. We provide an in-depth discussion of how the two approaches differ and complement each other in Section 6. \cite{mensch_2022} adopts a similar rational inattention approach focusing on the sale of a fixed-quality product to single or multiple buyers.

More generally, our work complements a literature examining the interplay of information provision and mechanism design. \cite{roessler_szentes_2017} model a scenario where buyers obtain information before sellers present their offerings. In a related vein, \cite{roessler_ravid_szentes_2022} study a setting where buyer learning and seller pricing happen concurrently. A distinguishing factor between their works and ours lies in the sequence of events. In our model, the monopolist sets the menu before the intermediary imparts information to the consumer. This sequence is pivotal. More importantly, while \cite{roessler_szentes_2017} address optimal buyer learning--a situation we mimic when the intermediary has no bias--\cite{roessler_ravid_szentes_2022} presume consumers pay for information, akin to \cite{thereze_2022} and \cite{mensch_ravid_2022}, while we assume that the intermediary provides information to the consumer for free.

\cite{malenko_tsoy_2019} study an auction framework where biased advisors provide information to uninformed bidders in a cheap-talk communication framework. They use the same functional form to capture the bias term of the expert as we do. They show that dynamic mechanisms dominate static ones and derive the optimal selling mechanism in many scenarios. Our paper differs in that the ``expert" has commitment power relative to \cite{malenko_tsoy_2019} and, moreover, we study the problem of selling multiple quality-differentiated products to a single consumer instead of the allocation of a single object to many buyers.

From a methodological standpoint, our approach leans heavily on the Bayesian Persuasion literature. We leverage findings related to dual price functions and bi-pooling distributions derived by \cite{d_martini_2019}, \cite{deniz_kovac_2019}, \cite{ksm_2021} and \cite{arieli_et_all_2023}. A more detailed discussion of the methodological connections is provided in Section 3.2 when the necessary notions are introduced.

\subsection*{Organization of the Paper}
The rest of the paper proceeds as follows. Section 2 introduces the model. Section 3 introduces necessary tools from the Bayesian Persuasion literature and establishes some preliminary results. Section 4 provides conditions under which the optimal mechanism in our environment coincides with or differs from the one in a setting in which the monopolist provides the information to the consumer. Section 5 characterizes the optimal finite-item menu and establishes the necessity of product variety expansion as the intermediary bias decreases. Section 6 specializes the analysis to the Uniform-Quadratic setting to further shed light on the structure of the optimal menu and its comparative statics. Section 7 connects our approach with a different strand of the literature, which studies the effect of supply responses to information frictions using an information acquisition framework. Section 8 concludes. All proofs and supporting calculations are in the Appendix.

\medskip

\section{Model}

A monopolist (she) sells goods of varying quality to a potential buyer (he). The value of the good to the buyer, his type, is given by a random variable $\theta\in [0, 1]$ which is distributed according to the cumulative distribution function (CDF) $F_0$ with density $f_0$, positive everywhere in the support of $F_0$.\footnote{The fact that the support of the buyer's valuation $\theta$ is the $[0,1]$ interval is a normalization and is, of course, withouta loss. The analysis goes through unchanged for $\theta\in[\underline{\theta},\Bar{\theta}]$ where $\underline{\theta}\geq 0$ and $\Bar{\theta} <\infty$.} Throughout the paper, we assume that $F_0$ has an increasing hazard rate and, thus, is Myerson regular. The game starts with the monopolist offering the buyer a contract $M$ consisting of pairs of menu items $(q,t)\in [0,\bar{q}] \times \mathbb{R}_+$. Each menu item corresponds to a transfer $t$ to be paid to the monopolist by the buyer and the quality $q$ of the product the buyer gets in exchange. We assume the utility of the buyer is quasi-linear. That is, given the buyer's type $\theta$, his utility from the menu item $(q,t)$ is given by
\begin{equation}
    U^B (\theta,q,t) = \theta q-t
\end{equation}

The monopolist's cost of providing quality $q$ is given by $c(q)$ where the function $c:\mathbb{R}_+\to\mathbb{R}_+$ is assumed to be strictly increasing, continuously differentiable, and strictly convex. We assume that the buyer has the option of not buying anything, that is, we require that $(0,0)$ is included in the menu.

Neither the monopolist nor the buyer knows the value of $\theta$. However, the buyer has access to an informational intermediary (he) who can provide information to the buyer about $\theta$. Specifically, after observing the posted menu, the intermediary can pick and commit to an information structure $s:[0,1]\to\Delta ([0,1])$ whose realization is observed by the buyer before his choice of a menu item. We assume that the intermediary's payoff if the buyer chooses the item $(q,t)$ is given by
\begin{equation}
    U^I(\theta,q,t)= (\theta +b)q-t 
\end{equation} 
where the parameter $b$ captures the intermediary's bias toward higher-quality products and is commonly known to all players. Alternatively, one can think of the intermediary's payoff as placing weight one on the consumer's payoff and weight $b$ on the quality of the product. The case where $b=0$ corresponds to the instance where the intermediary's preferences are fully aligned with the buyer's. This case can also be interpreted as the buyer costlessly acquiring information himself.

 Given the realization $s$, the buyer's expected value is denoted by 
\begin{equation}
w:= \EE (\theta|s)
\end{equation}
Our assumptions on the buyer's and advisor's utilities imply that their expected payoff from any menu item depends on the posterior mean $w$. Therefore, the marginal distribution of $w$ pins down the buyer’s and intermediary's expected trade surplus from any menu. It also determines the probability the buyer purchases any item, which, in turn, is sufficient for calculating the monopolist’s profits. In other words, trade outcomes depend only on the marginal distribution of the buyer’s posterior mean, and so we identify each signal with the CDF of this marginal, which we denote by $G$, with support $\supp G$.

Throughout the paper, we assume that when the buyer is indifferent between two items, he purchases the higher-quality one and that if he is indifferent between purchasing an item or not participating in the mechanism, he chooses to participate. That is, we assume that the buyer's indifference is broken in favor of the seller. Since the intermediary is biased towards higher-quality items, this implies that the buyer's indifference is also broken in favor of the intermediary. This, in turn, implies that the intermediary's problem always has a solution (see \cite{arieli_et_all_2023}).

The timing of the game is as follows: First, the monopolist posts the menu $M$. Then, the advisor commits to the information structure $s$, and nature draws $\theta$. Finally, the buyer observes the signal realization, chooses an item from the menu, and payoffs accrue.

This timing of the game implies that the buyer's interim expected payoff depends only on his posterior mean $w$. The revelation principle applies and we can restrict attention to direct mechanisms. We can describe these mechanisms with two mappings, $q:[0,1]\to [0,\bar{q}]$ and $t:[0,1]\to \mathbb{R}_+$ where $q(w)$ and $t(w)$ correspond to the quality and transfer pair that a buyer with posterior mean $w$ chooses. The mechanism must satisfy the standard individual rationality and incentive compatibility constraints:
\begin{equation}
    wq(w)-t(w) \geq wq(w')-t(w')  \ \text{for all} \ w,w'\in \supp G
\end{equation}
\begin{equation}
      wq(w)-t(w) \geq 0 \ \text{for all} \ w\in \supp G
\end{equation}

Given $q(w)$ and $t(w)$ the advisor chooses the information structure $s$ which induces the distribution $G$ over posterior means $w$. As mentioned, we work directly with $G$. As is well known in the Bayesian Persuasion literature, $G$ is the CDF of the marginal distribution of the buyer’s posterior mean for some information structure if and only if it is a mean-preserving contraction of the prior $F_0$ or, equivalently, if and only if $F_0$ is a mean-preserving spread of $G$. Let $\mathcal{F}$ denote the set of CDFs over the interval $[0,1]$. Recall that $F\in\mathcal{F}$ is a mean-preserving spread of $G$ if and only if 
\begin{equation*}
    I_G(\theta):= \int_0^{\theta} (F_0-G)(x)dx \geq 0 \ \text{for all} \ \theta\in [0,1] \ \text{with equality at} \ \theta =1
\end{equation*}
Therefore, we define the set of mean-preserving contractions of $F_0$, which corresponds to the set of feasible distributions over posterior means by
\begin{equation}
MPC(F_0)=\{G\in\mathcal{F}: I_G(\theta) \geq 0 \ \text{for all} \ \theta \ \text{and} \ I_G(1)=0\}
\end{equation}

The intermediary's problem given the menu $(q,t)$ can then be written as

\begin{align*}
     \max_{G\in MPC(F_0)} \int_0^1 \left[(w+b)q(w') -t(w')\right] dG(w) \\
    \text{s.t} \ w'\in\argmax_{\hat{w}\in \supp G\cup \{n\}} \{w'q(\hat{w})-t(\hat{w}\}
\end{align*}
where $\{n\}$ denotes the option of the buyer to not participate in the mechanism. However, since we assume $(q,t)$ is IC and IR, the constraint in the advisor's problem is redundant. 

We define 
\begin{equation}
    u^I(w):= (w+b)q(w)-t(w)
\end{equation}
to be the advisor's indirect utility from inducing posterior mean $w$. The intermediary's problem is a standard linear persuasion problem in which he acts as the sender and the buyer is the receiver. The caveat is that the shape of the advisor's indirect utility function depends endogenously on the menu $(q,t)$ that the seller posts.

Note that $G$ may have gaps, but it is without loss of generality to assume that $q$ is defined on the whole domain $[0,1]$. With this in mind, we refer to a pair of a distribution and a menu $((q,t),G)$ as an outcome. The seller's problem can then be written as:
\begin{align*}
\max_{(q(w),t(w)), G\in MPC(F)} \int_0^1 \left[t(w)-c(q(w))\right] dG(w)\\
    \text{s.t.} \ wq(w)-t(w) \geq wq(w')-t(w')  \ \text{for all} \ w\in[0,1] \tag{B-IC} \\
    wq(w)-t(w) \geq 0 \ \text{for all} \ w\in[0,1] \tag{B-IR} \\
G\in \argmax_{G\in MPC(F_0)} \int_0^1 \left[(w+b)q(w) -t(w)\right] dG(w) \tag{I-OB}
\end{align*}

 The seller's problem is a standard monopolistic screening problem with one additional constraint: the intermediary chooses the distribution over the buyer's posterior means. We can view this constraint as an obedience requirement for the advisor. If the seller posts the menu and suggests a distribution $G$ to the advisor, then he must have no incentive to deviate and choose a different distribution.

\medskip

\section{Preliminary Analysis}

\subsection{Existence of the Optimum}

We begin by showing the existence of a solution to the monopolist's problem. This can be established due to the compactness of the menu space we are considering. Additionally, the intermediary's problem is a linear persuasion problem, which is well-known to have a solution when the sender's utility function exhibits upper-semi-continuity.

In our framework, the upper-semi-continuity of the intermediary's utility is derived from the assumption that the buyer engages in the mechanism when indifferent between participation and non-participation. Furthermore, the buyer opts for the higher quality item when faced with indifference between two items.

\begin{proposition}\label{existence}
    A solution to the monopolist's problem exists.
\end{proposition}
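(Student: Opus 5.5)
The plan is a compactness and upper-semicontinuity argument, built around an outcome-space formulation. Any IC menu can be described by a nondecreasing quality schedule $q:[0,1]\to[0,\bar q]$. Using the envelope representation, the transfers are
\[
t(w)=wq(w)-\int_0^w q(x)\,dx-u(0), \qquad u(0)\ge 0 .
\]
Since the seller optimally sets $u(0)=0$, the menu is pinned down by $q$. The intermediary's indirect utility is then
\[
u^I(w)=bq(w)+\int_0^w q(x)\,dx .
\]
The seller's objective is
\[
\int_0^1 \Big[wq(w)-\int_0^w q - c(q(w))\Big]\,dG(w).
\]
Two modelling choices matter here. First, because of the tie-breaking assumption (higher quality chosen under indifference), I take $q$ to be right-continuous. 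Second, I identify the menu with the nondecreasing function $q$. By Helly's selection theorem, the set of such functions is sequentially compact under pointwise convergence at continuity points, and the right-continuous version can be taken in the limit. $MPC(F_0)$ is weak-$*$ compact: it is a closed subset of $\Delta([0,1])$, because $I_G(\theta)$ is continuous in $G$ under weak convergence.

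Next I take a maximizing sequence $(q_n,G_n)$ of feasible outcomes, that is, ones satisfying (I-OB). Passing to a subsequence, $q_n\to q$ at continuity points of $q$ and $G_n\Rightarrow G$. I must show two things: the limit is feasible, and the seller's value is upper semicontinuous along the sequence.

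\textbf{Step 1: value.} The integrand $\phi_n(w)=wq_n(w)-\int_0^w q_n-c(q_n(w))$ is uniformly bounded and converges pointwise off the countable set of discontinuities of $q$. The difficulty is that $G$ may place atoms exactly at jump points of $q$. At such a jump, the right-continuous limit chooses the higher-quality item, and in an IC menu the buyer is indifferent there. So the question is which item the limiting buyer takes. To handle this, I will work with the intermediary's and seller's payoffs jointly and allow the seller to re-select the limit menu: at atoms of $G$ located at jumps, I choose whichever adjacent item is consistent with the limit of the $G_n$-mass. If needed, I split the atom's mass using a convergent subsequence of the fractions assigned to each side, and then argue that under the assumed tie-breaking the buyer takes the higher quality. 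This is the main obstacle, and it must be checked carefully.

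\textbf{Step 2: obedience in the limit.} This is a closed-graph (Berge-type) argument. For any fixed $H\in MPC(F_0)$ I have $\int u^I_n\,dG_n\ge \int u^I_n\,dH$, and I want to pass to the limit to obtain $\int u^I\,dG\ge\int u^I\,dH$.
On the right-hand side, convergence at continuity points together with bounded convergence gives $\int u^I_n\,dH\to \int u^I\,dH$ whenever $H$ has no atoms at jumps of $q$. Because $u^I$ is upper semicontinuous (nondecreasing and right-continuous), the sender's value $\sup_H\int u^I\,dH$ is attained, and it can be approximated by such $H$ using a small perturbation of atoms to the right.
On the left-hand side, $\limsup\int u^I_n\,dG_n\le\int \bar u^I\,dG$, where $\bar u^I$ is the upper semicontinuous envelope of $u^I$. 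By right-continuity and monotonicity, $\bar u^I$ coincides with $u^I$ after the tie-breaking selection.
Combining the two sides shows that $G$ solves the intermediary's problem against the limit menu, so (I-OB) holds.

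\textbf{Step 3: conclusion.} B-IC and B-IR are preserved by construction via the envelope formula. The seller's value is bounded above, since $t\le \bar q$. So the limit outcome attains the supremum, which proves existence.
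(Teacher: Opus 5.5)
Your compactness setup (Helly for right-continuous monotone $q$, weak$^*$ compactness of $MPC(F_0)$) and the $\limsup$ half of Step 2 are fine. Step 1, however, is a genuine gap, and it is not a corner case. In Proposition 5 every trading posterior type $w_i^*$ is exactly the indifference point between items $i-1$ and $i$, so atoms of $G$ sitting on jumps of $q$ are the typical configuration. With $q$ right-continuous, the integrand $\phi_q(w)=wq(w)-\int_0^w q-c(q(w))$ is not upper semicontinuous at a jump $w_0$. Serving the atom $q^+=q(w_0)$ rather than $q^-=q(w_0-)$ changes profit by $w_0(q^+-q^-)-\bigl(c(q^+)-c(q^-)\bigr)$, which is strictly negative whenever $c'(q^-)\ge w_0$. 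So if non-vanishing $G_n$-mass converging to $w_0$ were buying qualities near $q^-$, the limit outcome would earn strictly less than $\limsup_n\Pi(q_n,G_n)$, where $\Pi(q,G)=\int\phi_q\,dG$. Your suggested remedies do not close this:
\begin{itemize}
\item a posterior mean is a single buyer type, so an atom cannot be split across two items;
\item tie-breaking gives the whole atom $q^+$, which is exactly the bad case;
\item re-selecting $q^-$ at $w_0$ breaks right-continuity, hence upper semicontinuity of $u^I$. It also contradicts the tie-breaking rule, and your Step 2, which certifies obedience only for the right-continuous limit, no longer covers the modified menu.
\end{itemize}

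The missing idea is that obedience must be used inside Step 1 to rule the bad case out. Suppose $G_n$ has a support point $w'$ within $\epsilon_n\to0$ of $w_0$ with $q_n(w')\le q(w_0)-\delta$. Take a continuity point $z\in(w_0,w_0+\gamma)$ of $q$, so that $q_n(z)\ge q(w_0)-\delta/2$ for large $n$. Then $u_n^I(z)-u_n^I(w')\ge b\delta/2$ over a distance below $2\gamma$. A convex price $p_n\ge u_n^I\ge0$ with $p_n=u_n^I$ on $\supp G_n$ must therefore have slope at least $b\delta/(4\gamma)$ on $[z,1]$, whence
\[
\int p_n\,dF_0\;\ge\;\frac{b\delta}{4\gamma}\int_{w_0+\gamma}^{1}(\theta-w_0-\gamma)\,dF_0(\theta).
\]
For small $\gamma$ this exceeds $(1+b)\bar q\ge\int u_n^I\,dG_n=\int p_n\,dF_0$, because $w_0<1$ (no element of $MPC(F_0)$ has an atom at $1$). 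If only strong duality is available, use an $\varepsilon$-optimal convex majorant and a point of the offending set where $p_n-u_n^I\le\varepsilon/\eta$, with $\eta$ the $G_n$-mass of that set. Consequently, under a Skorokhod coupling $q_n(W_n)\to q(W)$ in probability, so $\Pi(q_n,G_n)\to\Pi(q,G)$, and Step 2 applies to the unmodified limit.

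Step 2 also needs a small repair. Atoms cannot simply be shifted right inside $MPC(F_0)$. Instead, peel a thin slice off the bottom of the interval generating the atom; this raises the pool's mean slightly while moving only vanishing mass, and right-continuity of $u^I$ then controls the loss.

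For comparison, the paper avoids your Step 1 altogether. It takes outcomes to be joint laws over (item, state) and menus to be compact sets in the Hausdorff metric, so the profit $\int(t-c(q))\,dG$ is weak$^*$-continuous. The entire burden then falls on the closed graph of the intermediary's best-response correspondence.
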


\subsection{Reformulating the Monopolist's Problem}
We can view the monopolist's problem as having two components: she suggests a distribution over posterior means to the intermediary, which must be incentive compatible, and she chooses a mechanism that specifies the quality exchanged and the transfer for any reported buyer type. Notice that the distribution $G$ might not be continuous or atomless. Moreover, it may have finite support. However, it is without loss of generality to assume that the allocation rule $q$ and transfers $t$ are defined on the whole domain $[0,1]$. Then, standard arguments deliver that IC and IR constraints are satisfied if and only if $q$ is increasing\footnote{Throughout the paper the term increasing means weakly increasing. The same holds for the terms convex, concave, etc.} and the envelope formula is satisfied:
\begin{equation}
    t(w)=wq(w)-\int_0^w q(s) ds -u_0
\end{equation}
where $u_0$ is the utility the lowest type receives, which optimally equals zero. 

With this, we can re-write the advisor's indirect utility function as

\begin{equation}
    u^I(w)= bq(w) +\int_0^w q(s) ds
\end{equation}

and restate the monopolist's problem as

\begin{align*}
\max_{q(w), G\in MPC(F_0)} \int_0^1 \left[wq(w)-\int_0^w q(s) ds-c(q(w))\right] dG(w)\\
\text{s.t.} G\in\argmax_{G\in MPC(F_0)} \int_0^1 \left[bq(w) + \int_0^w q(s) ds\right] dG(w)
\end{align*}

We, therefore, look for the optimal mechanism among incentive compatible outcomes $(q,G)$ such that, given $q(w)$, $G$ satisfies the obedience requirement for the advisor, and, given $G$, $q$ is incentive compatible for the buyer.

\subsubsection{The Dual Price Function and Bi-Pooling Distributions}
We proceed to introduce two objects that will be useful throughout the analysis. Consider the standard linear persuasion problem 
    \begin{align*}
 \max_{G\in MPC(F_0)} \int_0^1 u(x) dG(x)
    \end{align*}

\begin{definition}\label{price_function}
    A function $p:[0,1]\to\mathbb{R}$ is called a \textit{(dual) price function} of $G$ if
    \begin{itemize}
        \item[(i)] $p(x)\geq u(x)$ for every $x\in[0,1]$.
        \item[(ii)] $p$ is convex.
        \item[(iii)] $\supp G \subseteq \{x\in [0,1]: p(x)=u(x)\}$.
        \item[(iv)] $\int_0^1 p(x) dG(x) =\int_0^1 p(x) dF(x)$
    \end{itemize}
\end{definition}

\cite{d_martini_2019} prove that if there exists a price function $p$ of $G$ then $G$ solves the persuasion problem and, if
$u$ satisfies certain  conditions, then for any optimal distribution $G$ such a price function exists. In principle, the conditions required by \cite{d_martini_2019} do not necessarily hold in our environment. However, \cite{deniz_kovac_2019} generalize the result of \cite{d_martini_2019} and provide weaker conditions under which the dual price function exits. Specifically, they require that there exists and $\epsilon>0$ such that $u(\cdot)$ is Lipschitz continuous on $[0,\epsilon]$ and $[1-\epsilon, 1]$. Recall that the intermediary's utility is given by $u^I(w)= bq(w) +\int_0^w q(s) ds$. It follows immediately that $u^I(\cdot)$ cannot explode at the bounds of $[0,1]$ since $q(\cdot)$ does not, and, therefore, the dual price function always exists in our problem. Next, we define the class of distributions in which the solution to the intermediary's problem belongs to.

\begin{definition}\label{bi_pooling}
    A distribution $G\in MPC(F_0)$ is a \textit{bi-pooling} distribution if it partitions $[0,1]$ in disjoint intervals such that in each interval:
    \begin{itemize}
        \item[(i)] Either types are fully disclosed, in which case $G=F_0$, or
        \item[(ii)] Types are pooled and $G$ has an atom equal to the measure of the interval at the mean of the interval, or 
        \item[(ii)] Types are bi-pooled, that is, $G$ has two atoms $y_1$ and $y_2$ in this interval with weights $\alpha$ and $(1-\alpha)$, such that $\alpha y_1 +(1-\alpha) y_2= \hat{y}$ where $\hat{y}$ is the mean of $F_0$ when restricted to this interval.
    \end{itemize}
\end{definition}

\cite{arieli_et_all_2023} prove that if $u(\cdot)$ is upper semicontinuous, there is always a solution $G$ to the linear persuasion problem that belongs to the class of bi-pooling distributions. This, in our framework, means that for every increasing allocation function $q$, there is a solution $G_q$ of the intermediary's problem which is a bi-pooling distribution. Finally, if distribution $G\in MPC(F_0)$ partitions $[0,1]$ into intervals and pools types in each interval at its mean, it is called a \textit{monotone pooling} distribution.

The two examples that follow illustrate how the dual price function is used to solve Bayesian Persuasion problems in our framework.

\begin{example}
    Suppose the prior distribution $F_0$ is the standard Uniform distribution and the cost function is given by $c(q)=q^2/2$. It is well known that, in this case, the solution to the standard monopolistic screening problem where the buyer knows his valuation is given by $q^{MR}(\theta) =(2\theta-1)_+$.\footnote{We use the notation $(x)_+=\max(x,0)$.} Then the intermediary's utility function given $q^{MR}$ is $u^I(w)=0$ for $w<1/2$ and
    \begin{equation*}
        u^I(w)=w^2+(2b-1)w+\frac{1}{4}-b
    \end{equation*}
    for $w\in[1/2,1]$.
It is straightforward to see that $u^I(\cdot)$ is continuous everywhere and strictly convex in $[1/2,1]$. Therefore, the dual price function can be chosen to be $p(w)=0$ for $w\in[0,1/2)$ and $p(w)=u^I(w)$ for $w\in[1/2,1]$. Thus, the solution to the intermediary's problem is to fully disclose types in $[1/2,1]$ and either pool or fully disclose types in $[0,1/2)$. Therefore, the outcome $(q^{MR},F_0)$ is IC.
\end{example}

\begin{example}
    Suppose now that the prior distribution is still the standard uniform but the cost function is given by $c(q)=q^3/3$. In this case, $q^{MR}(\theta)=\sqrt{(2\theta-1)_+}$. Again, a straightforward calculation yields that $u^I(w)=0$ if $w\in[0,1/2)$ and 
    \begin{equation*}
     u^I(w)=b (2\theta-1)^{\frac{1}{2}}+\frac{1}{3} (2\theta-1)^{\frac{3}{2}}   
    \end{equation*}
    if $w\in[1/2,1]$.
Now, $u^I(\cdot)$ is concave up to a point and then becomes convex. The solution to the intermediary's problem is to pool types in some interval $[v_1,v_2]$ and assign an atom of size $(v_2-v_1)$ at $m=\EE(\theta|v_1\leq \theta\leq v_2)$. The dual price function in this case is given by

\[
p(w) = \begin{cases}
0 & \text{if}~w \in [0, v_1) \\
\frac{u^I(v_2)-u^I(v_1)}{v_2-v_1}(w-v_1) & \text{if}~ w\in [v_1,v_2) \\
u^I(w) & \text{if}~ w\in[v_2,1]
\end{cases},
\]

In Figure \ref{figure1}, we plot the intermediary's utility and the dual price function in this example. As expected by Proposition 3 in \cite{arieli_et_all_2023}, the price function is equal to the intermediary's utility in $[v_2,1]$ where full disclosure happens and affine and tangential to $u^I$ at the mean of $[v_1,v_2]$ where pooling takes place.
    \begin{figure}[htbp]
    \centering
    \includegraphics[scale=0.3]{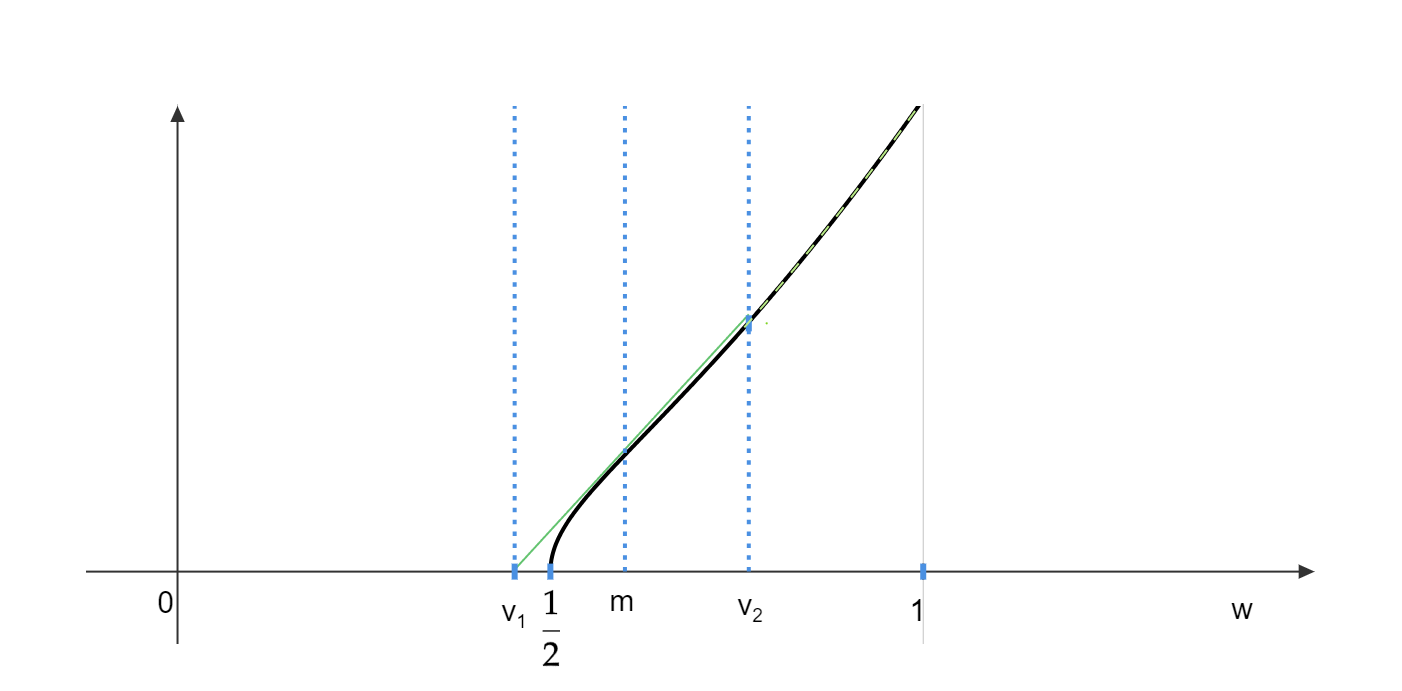}
    \caption{Intermediary's utility (in black) and the dual price function (solid and dashed green) in Example 2.}
    \label{figure1}
\end{figure}

 It follows that the outcome $(q^{MR}, F_0)$ is not IC. Still, the quality mapping $q^{MR}$ can be implemented by the distribution $G_{q^{MR}}$ that takes the aforementioned form. Note that this case features \textit{lower-censorship} in the sense of \cite{kolotolin_et_al_2022}: the intermediary pools the states in $[v_1,v_2]$ and reveals the states above $v_2$.

\end{example}

\subsection{Benchmark: Fully-Aligned Preferences}
To start our analysis, we first consider the benchmark case where $b=0$. Now, the intermediary shares the same preferences with the buyer, a situation which can be interpreted as the buyer costlessly choosing the distribution $G$ himself. Let $q^{MR}$ denote the Mussa-Rosen menu that would be optimal if the buyer knew his value $\theta$. The following result is intuitive.

\begin{proposition}
    In the case where $b=0$, the outcome $(q^{MR}, F_0)$  is the solution to the monopolist's problem. 
\end{proposition}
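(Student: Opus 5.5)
Two things need to be shown: (a) the outcome $(q^{MR},F_0)$ is feasible, i.e.\ full disclosure satisfies (I-OB) given $q^{MR}$ when $b=0$; and (b) no feasible outcome $(q,G)$ gives the monopolist more than the Mussa--Rosen profit.

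\textbf{Step (a): feasibility.} With $b=0$ the intermediary's indirect utility is
\[
u^I(w)=\int_0^w q^{MR}(s)\,ds .
\]
This is convex because $q^{MR}$ is increasing. Convexity is immediate from the envelope representation and holds for any increasing $q$, not only $q^{MR}$. It is also continuous. Take the dual price function of Definition~\ref{price_function} to be $p=u^I$. Conditions (i)--(iii) hold trivially with $G=F_0$, and (iv) is an identity. By the result of \cite{d_martini_2019} recalled above, $F_0$ solves the intermediary's problem. Equivalently, by Jensen, $\int u^I\,dG\le\int u^I\,dF_0$ for any $G\in MPC(F_0)$, since $F_0$ is a mean-preserving spread of $G$. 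Hence $(q^{MR},F_0)$ satisfies (I-OB), (B-IC) and (B-IR).

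\textbf{Step (b): optimality.} Fix any feasible outcome $(q,G)$. The plan is to bound its profit by the total-surplus-minus-rents expression and then compare with Mussa--Rosen under full information.

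Profit equals total surplus minus buyer surplus. Total surplus is $\int [wq(w)-c(q(w))]\,dG(w)$. Buyer surplus is $\int U(w)\,dG(w)$, where $U(w)=\int_0^w q(s)\,ds$. The key observation is that, with $b=0$, $U$ coincides with the intermediary's objective.

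Obedience means $G$ maximizes $\int U\,dG$ over $MPC(F_0)$. Since $U$ is convex, full disclosure $F_0$ attains the maximum, so $\int U\,dG=\int U\,dF_0$.

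Next, the same menu $q$ is IC and IR for types $\theta\sim F_0$ as well. If consumers were fully informed, it would deliver seller profit
\[
\int [\theta q(\theta)-c(q(\theta))]\,dF_0-\int U\,dF_0 .
\]
The remaining task is to compare total surplus. For each $w$, the map $w\mapsto \max_{q'}[wq'-c(q')]$ is convex, but the function $w q(w)-c(q(w))$ for the fixed $q$ need not be.

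To handle this, I would construct an alternative menu on $F_0$ that replicates $(q,G)$. Let $\theta\mapsto w$ be the garbling, and give type $\theta$ the randomized item $q(w)$ drawn from $G$'s conditional given $\theta$. In expectation this is the lottery $\tilde q(\theta)=\EE[q(w)\mid\theta]$ with transfer $\tilde t(\theta)=\EE[t(w)\mid\theta]$.

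The expected-value mechanism $(\tilde q,\tilde t)$ yields revenue $\int t\,dG$. By Jensen and convexity of $c$, its cost satisfies $c(\tilde q)\le \EE[c(q(w))]$, so profit weakly increases. It remains IR, because $\theta\tilde q-\tilde t=\EE[wq(w)-t(w)\mid\theta]\ge0$.

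\textbf{Main obstacle.} The difficult step is IC of the averaged mechanism for a type $\theta$ deviating to report $\theta'$. I expect to handle it through the rent identity rather than directly. The replicated outcome gives type $\theta$ surplus $\EE[U(w)\mid\theta]$. Then $\int \EE[U(w)\mid\theta]\,dF_0=\int U\,dG=\int U\,dF_0$ by the obedience step. So the seller leaves exactly the rent of a full-information menu while extracting at most the first-best-minus-information-rent objective.

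Combining, profit is at most
\[
\int [\theta\tilde q-c(\tilde q)]\,dF_0-\text{(rent)} .
\]
This is bounded by the Mussa--Rosen virtual-surplus maximum, $\int \max_{q'}[\phi(\theta)q'-c(q')]\,dF_0$ with $\phi$ the virtual value, which is achieved by $q^{MR}$ under regularity. If the IC verification fails, the fallback is a direct argument: write profit as $\int[wq-c(q)]dG-\int U\,dF_0$, bound the first term via the garbling by $\int \EE[\theta q(w)-c(q(w))\mid\theta]\,dF_0$, and then pointwise maximize virtual surplus.
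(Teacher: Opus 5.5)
Your Step (a) is correct and is the paper's argument. Step (b) has a genuine gap: you use obedience only to equate aggregate rents, $\int U\,dG=\int U\,dF_0$, and the averaging construction cannot carry the rest.

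\emph{IR.} Your computation is wrong. In fact $\theta\tilde q(\theta)-\tilde t(\theta)=\EE[\theta q(w)-t(w)\mid\theta]$. You may replace $\theta$ by $w$ only after conditioning on $w$, i.e.\ ex ante, not type by type. Under no information, $\tilde q\equiv q(m)$ and $\tilde t\equiv t(m)$, so IR fails for $\theta<t(m)/q(m)$. The same slip is behind your claim that type $\theta$ obtains $\EE[U(w)\mid\theta]$.

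\emph{IC.} It fails for general garblings. Suppose $G$ pools a very low and a very high type at some $w$ while revealing an intermediate $w'$ with $q(w')>q(w)$. By single crossing, the high type strictly prefers the item of $w'$. The ``rent identity'' says nothing about this.

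\emph{The final bound.} Without IC and IR, it mixes allocations. The surplus term involves $\tilde q$ (or $q(w)$ in your fallback). The rent term is $\int U\,dF_0=\int_0^1 q(\theta)(1-F_0(\theta))\,d\theta$, which involves $q(\theta)$. Pointwise maximization of virtual surplus needs the same allocation in both terms. That requires $\int_0^1 (q-\tilde q)(1-F_0)\,d\theta\ge 0$, which can fail for non-obedient garblings and which you never derive from obedience.

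The missing idea is the equality case of Jensen's inequality. It is what underlies the paper's claim that ``the trading behavior of the buyer must be the same under $G$ and $F_0$.'' Since $U$ is convex and $\EE[\theta\mid w]=w$, we have $\EE[U(\theta)\mid w]\ge U(w)$ for every $w$. Integrating and using $\int U\,dG=\int U\,dF_0$ forces equality for $G$-a.e.\ $w$.

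Take a supporting line $\ell$ of $U$ at $w$. Equality gives $U=\ell$ on the support of $\theta\mid w$, hence on its convex hull. So the monotone $q$ is constant on the interior of that hull, which contains $w$ unless $\theta=w$ a.s. Therefore $q(\theta)=q(w)$ for $F_0$-a.e.\ $\theta$, because the jump points of $q$ are countable and $F_0$ is atomless. The envelope formula then gives $t(\theta)=t(w)$ as well.

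Consequently the profit of $(q,G)$ equals the profit of the same menu sold to fully informed buyers, $(q,F_0)$. That menu is a feasible Mussa--Rosen mechanism ($q$ increasing, envelope transfers with $u_0=0$), so it earns at most the profit of $q^{MR}$. With this observation your averaged mechanism is just $(q,t)$ itself. IC and IR then hold trivially, and the Jensen step on costs becomes unnecessary.
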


Proposition 2 says that the monopolist's optimal outcome is the Mussa-Rosen menu together with the prior distribution. This follows because, in this case, the intermediary provides information to ensure that the buyer makes efficient ex-post trading decisions. Thus, the seller's profit under any outcome $(q,G)$ must be equal to her profit under outcome $(q,F_0)$. 

Formally, given $q(w)$, the intermediary's problem is
\begin{equation}
    \max_{G\in MPC(F)} \int_0^1 \left[\int_0^w q(s)ds\right]  dG(w)    
\end{equation}
where $u^b(w):=\int_0^w q(s)ds$ is the buyer's rent. Since $q(w)$ must be increasing in order to satisfy IC, it follows that $u^b(\cdot)$ is convex. Thus, choosing signal $F_0$, that is, fully revealing the buyer's true value is always a solution to the intermediary's problem. This yields the intermediary a payoff of 
\begin{equation}
    U^I_{F_0} =  \int_0^1 \left[\int_0^w q(s)ds\right]  dF_0(w)= \int_0^1 q(w) (1-F_0(w))dw
\end{equation}

Choosing any other signal $G$ yields a payoff of $U^I_G= \int_0^1 q(w) (1-G(w))dw$. The extra benefit from switching from signal $G$ to the prior $F_0$ is, then, given by
\begin{equation*}
  U^I_{F_0} -U^I_G= \int_0^1 q(w) [F(w)-G(w)]dw \geq 0
\end{equation*}
by the fact that $I_G(\theta)\geq 0$ for all $\theta\in [0,1]$ and $q(\cdot)\geq 0$. In this case, the intermediary can only lose by choosing a less informative distribution so a signal $G$ is a best response to menu $q$ if and only if 
\begin{equation*}
    \int_0^1 q(w) [F(w)-G(w)]dw = 0
\end{equation*}

Thus, in the case where $b=0$ the seller's problem can be simplified to

\begin{align*}
\max_{q(w), ~G\in MPC(F)} \int_0^1 \left[wq(w)-\int_0^w q(s) ds-c(q(w))\right] dG(w)\\
\text{s.t.}  \int_0^1 q(w) [F_0(w)-G(w)]dw = 0
\end{align*}

For a signal $G$ to satisfy obedience, it must be that it yields the buyer the same expected payoff as the fully revealing signal $F_0$. Equivalently, it must be the case that the buyer makes efficient ex-post trading decisions as to what item to choose from the menu (if any). This implies that, given $q$, the trading behavior of the buyer must be the same under $G$ and $F_0$.  This, in turn, yields that whatever the profit the seller can achieve under $(q, G)$, she can also achieve under $(q, F_0)$. Since the Mussa-Rosen mechanism is optimal when the buyer's valuation is his private information, it is the best the seller can achieve when $b=0$ as well. 

\section{Highly Biased Intermediary}
We now consider the case $b>0$, that is, the intermediary exhibits some bias towards high-quality products. We will see that if this bias is sufficiently high, then, under some additional assumptions, the overall optimal mechanism is a single-item menu, and in particular, coincides with the one that would be optimal if the seller were providing information to the consumer directly. \cite{bergemann_et_al_2022} prove that it is without loss to focus on finite-item menus when the monopolist designs the information structures as well as the menu. Let $(q^*_{BHM}, G^*_{BHM})$ denote the monopolist's optimal outcome in this case. A natural question is under what conditions on the bias term this outcome remains optimal despite the intermediary's presence. Clearly, this will be the case only if the (I-OB) constraint is satisfied. We provide a sufficient condition on the bias term for this to be the case whenever the prior distribution is such that $q^*_{BHM}$ features a single item. We also provide a threshold on the intermediary's bias such that if the bias is smaller than this threshold, $(q^*_{BHM}, G^*_{BHM})$ for sure violates (I-OB) and, thus, cannot be optimal. 

\subsection{Single-Item}

Suppose that the marginal cost is convex, that is, $c'''(q)\geq 0$. Then, under the following additional assumption on the prior distribution, the overall optimal mechanism if the seller were providing information directly to the consumer, is a single-item menu as \cite{bergemann_et_al_2022} show. 
\begin{assumption}\label{single_item}

 The prior distribution $F_0$ has density $f_0$ that satisfies:
\begin{equation*}
    f_0'(\theta)<0 \Rightarrow f_0''(\theta)\leq 0 
\end{equation*}

\end{assumption}

Note that Assumption \ref{single_item} is satisfied by any distribution with an increasing or concave density. Now, the seller's problem amounts to finding a value $v^*$ above which types are pooled and a quality $q^*$ that is offered to type $w^*:=\EE(\theta|v^*\leq \theta \leq 1)$. The price is then given by $p^*=w^* q^*$ so that no rent is left to the posterior type $w^*$. Formally, the monopolist's problem is given by

\begin{align*}
    \max_{(q,~v)}  ~(1-F_0(v))\left(wq-c(q)\right)
\end{align*}
where $w:=\EE(\theta| v\leq \theta\leq 1)$.
The optimal value $v^*$ and quality $q^*$ can then be found by solving the first-order conditions:
\begin{align*}
    w^*=c'(q^*) \ \ \text{and} \ \ v^*q^*=c(q^*) \tag{OSIM}
\end{align*}

\begin{example} \label{mainexample}
    Consider the case where $F_0$ is the standard Uniform distribution and the cost is given by $c(q)=q^2/2$. Then, $v^*=1/3$, $q^*=2/3$, $p^*=4/9$ and types are pooled in the intervals $[0,1/3]$ and $[1/3,1]$. 
\end{example}

Our first result establishes that the monopolist's optimal menu is the one that would be optimal if she were controlling the information directly, if and only if, the intermediary's bias is higher than a threshold.

\begin{proposition}\label{high-b}
    Suppose that $c'''(q)\geq 0$ and that Assumption 1 holds. Then, there exists a value $b^*$ such that the monopolist's optimal outcome is the single-item menu $q^*_{BHM}$ together with distribution $G^*_{BHM}$ as specified by conditions (OSIM), if and only if $b\geq b^*$. 
\end{proposition}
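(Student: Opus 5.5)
The plan is to reduce the claim to a single obedience check. The monopolist's problem with the intermediary is a restriction of the problem in \cite{bergemann_et_al_2022}, where the seller picks both the menu and $G$ without (I-OB). Under $c'''\geq 0$ and Assumption \ref{single_item}, that relaxed problem is solved by the single-item outcome $(q^*_{BHM},G^*_{BHM})$ given by (OSIM). Hence this outcome is optimal in our problem if and only if it satisfies (I-OB). The statement then amounts to showing that the set of $b$ for which $G^*_{BHM}$ is a best response to $q^*_{BHM}$ is an interval $[b^*,\infty)$.

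First compute the intermediary's indirect utility under the single-item menu. Types below $w^*$ get quality $0$ and types at or above $w^*$ get $q^*$, priced so that $w^*$ earns no rent. This gives $u^I(w)=0$ for $w<w^*$ and $u^I(w)=bq^*+(w-w^*)q^*$ for $w\geq w^*$. The function is a step of height $bq^*$ at $w^*$ followed by a line of slope $q^*$; it is upper semicontinuous by the tie-breaking rule. The key observation is that the proposed distribution $G^*_{BHM}$ (an atom at $\EE(\theta|\theta\leq v^*)$ and one at $w^*$) need not be the intermediary's unique optimum. What matters is that it attains the concavification value at the prior. I will verify obedience with a dual price function as in Definition \ref{price_function}. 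The natural candidate is $p(w)=\max\{0,\; q^*(w-w^*)+bq^*\}$ when the kink point $w^*-b$ lies in a suitable range. More generally, $p$ is the concave-envelope-type supporting line through the two atoms: the line joining $(m_L,0)$ and $(w^*,bq^*)$, where $m_L=\EE(\theta|\theta\leq v^*)$, extended as the upper support on $[w^*,1]$.

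Conditions (iii) and (iv) hold by construction if $p$ is affine on $[m_L,w^*]$ and convex. So the check reduces to two inequalities. (a) $p\geq u^I$ on $[w^*,1]$: the slope of the chord, $bq^*/(w^*-m_L)$, must be at least $q^*$ (the slope of $u^I$ to the right), i.e.\ $b\geq w^*-m_L$. Otherwise the intermediary gains by spreading the upper pool and revealing high types. (b) $p\geq 0$ on $[0,m_L]$, together with the requirement that the intermediary not prefer a different bi-pooling split (a larger upper pool). Because the chord must pass through the prior-mean-consistent atoms, the binding test is whether the line through $(w^*,bq^*)$ supporting $u^I$ yields the value at the prior exactly at the split $v^*$.

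The natural threshold is therefore $b^*$ defined by the tangency/chord condition, presumably $b^*=w^*-\EE(\theta|\theta\leq v^*)$ or a closely related expression. Monotonicity in $b$ is then immediate: raising $b$ raises the step and keeps the chord above $u^I$. Conversely, for $b<b^*$, an explicit profitable deviation (disclosing types near $1$ above the pool) shows that $G^*_{BHM}$ is not a best response. Since the relaxed optimum is unique under the assumptions, no other outcome can reach the relaxed value, and optimality fails.

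The main obstacle is step (b) and the converse. One must show that for $b\geq b^*$ the intermediary cannot profit by shifting the cutoff $v$ (pooling more low types into the purchasing atom, which raises purchase probability but pushes the pooled mean below $w^*$). One must also show that uniqueness of the BHM optimum delivers the ``only if'' direction. For the shifting-cutoff deviation, I would use the IHR and Assumption \ref{single_item} to sign the derivative of the intermediary's payoff in $v$ around $v^*$. If this fails, I would define $b^*$ directly as the infimum of the $b$ satisfying (I-OB) and argue closedness and upward-closedness of that set.
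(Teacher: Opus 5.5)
Your reduction is correct and matches the paper: the outcome is the relaxed (BHM) optimum, so it is optimal here iff $G^*_{BHM}$ is an intermediary best response to $q^*_{BHM}$. Uniqueness of the relaxed optimum is not needed for ``only if'': if (I-OB) fails, the outcome is simply infeasible. The gap is in the certificate. A price function for $G^*_{BHM}$ must be affine on each pooled interval $[0,v^*]$ and $[v^*,1]$, by Jensen together with condition (iv). So its kink must sit at the cutoff $v^*$, not at the low atom $m_L=\EE(\theta|\theta\le v^*)$. Your chord through $(m_L,0)$ and $(w^*,bq^*)$ fails in either form, for every $b$. If extended linearly, it is negative on $[0,m_L)$, violating (i). If truncated at $0$, it is non-affine on $[0,v^*]$, so $\int_0^{v^*}p\,dF_0>0=F_0(v^*)p(m_L)$ and (iv) fails. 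Hence your threshold $w^*-m_L$ is wrong; the correct one is $b^*=w^*-v^*$. In Example \ref{mainexample} this is $1/3$, whereas $w^*-m_L=1/2$. Your planned converse would therefore be false for $b\in[w^*-v^*,w^*-m_L)$, where $G^*_{BHM}$ is in fact a best response.

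The fix, which is the paper's proof, is to take $p=0$ on $[0,v^*]$ and $p(w)=\lambda(w-v^*)$ on $[v^*,1]$, with $\lambda=bq^*/(w^*-v^*)$. Convexity, equality at $m_L$ and $w^*$, and (iv) are then automatic. The remaining condition $p\ge u^I$ on $[w^*,1]$ reduces to $\lambda\ge q^*$, i.e.\ $b\ge w^*-v^*$. A valid price function rules out all deviations at once, including shifting the cutoff. So your ``main obstacle'' (b) disappears, and IHR and Assumption \ref{single_item} are used only to make the relaxed optimum single-item.

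For the converse, the paper invokes existence of a price function (Dizdar--Kovac). That function must vanish on $[0,v^*]$ and pass through $(w^*,bq^*)$, which forces slope $bq^*/(w^*-v^*)\ge q^*$. Your deviation idea also works once calibrated correctly. Reveal $[1-\delta,1]$ and restore the pool mean $w^*$ by \emph{dropping} the marginal types $[v^*,v']$ into the low pool. The mean condition gives $\int_{1-\delta}^1(\theta-w^*)\,dF_0=\int_{v^*}^{v'}(w^*-\theta)\,dF_0$. The intermediary's gain is then $q^*\int_{v^*}^{v'}(w^*-\theta-b)\,dF_0(\theta)$, which is positive for $v'$ near $v^*$ exactly when $b<w^*-v^*$. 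This shows the relevant comparison is with the marginal pooled type $v^*$, not with $m_L$.
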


Proposition \ref{high-b} says that in the case where the prior distribution is such that the solution to the monopolist's problem absent the intermediary is a single-item menu, there is a threshold $b^*$ such that if the intermediary's bias is higher than this threshold, the (I-OB) constraint in the monopolist's problem does not bind. Thus, she can achieve the highest profit possible: the one she would attain if she provided information to the buyer. This will not be the case when the intermediary's bias is smaller than $b^*$.

\addtocounter{example}{-1}
\begin{example}[Continued]
In the framework of Example \ref{mainexample}, in order for the single-item menu derived before to remain optimal in the intermediary's presence, it must be the case that upon the seller posting $(q^*,p^*)$, the solution to the intermediary's problem is to pool types in the same way as the seller optimally would. As illustrated in Figure 2, this is the case only if $b\geq p^*/q^* -v^*=1/3$, so that $w^*-b=p^*/q^*-b \leq v^*$. In Figure 2, the solid blue line corresponds to the intermediary's utility, which is given by

\[
u^I(w) = \begin{cases}
0 & \text{if}~w \in [0, \frac{2}{3}) \\
(w+b)\frac{2}{3}-\frac{4}{9} & \text{if}~ w\in [\frac{2}{3}, 1] 
\end{cases},
\]
and the red dashed and solid line corresponds to the dual price function.

\begin{figure}[htbp]
    \centering
    \includegraphics[scale=0.3]{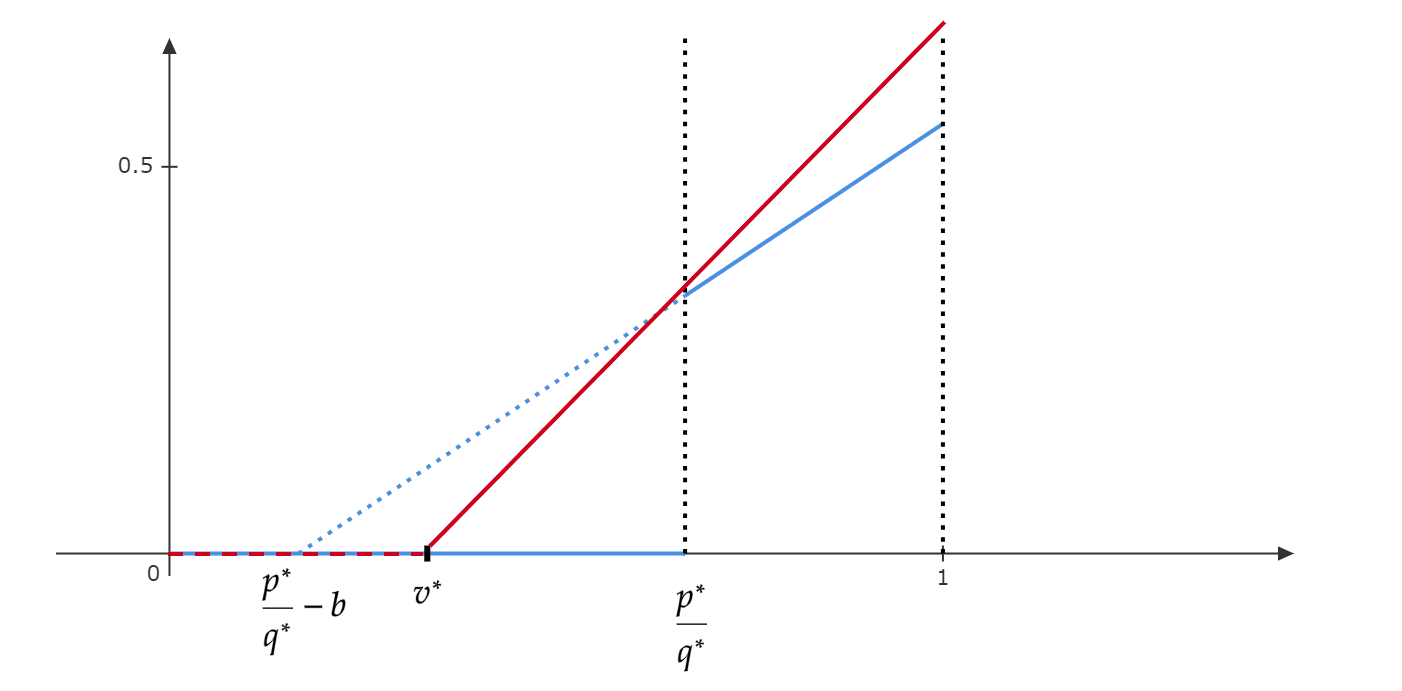}
    \caption{Single-item menu $q^*_{BHM}$ is optimal only if $b\geq 1/3$}
    \label{figure2}
\end{figure}

\end{example}

\subsection{Many Items}

If the optimal menu in the absence of the intermediary features more than one item, things are more complicated. Our next result provides a necessary condition that needs to hold in order for $(q^*_{BHM},G^*_{BHM})$ to satisfy the (I-OB) constraint. 

To this end, suppose that the optimal finite-item menu $q^*_{BHM}$ features $N$ items. Clearly, it must be the case that the downward buyer IC constraints must hold with equality since the seller can always adjust the distribution over posterior means so that no posterior buyer type gets more rent than what is absolutely necessary. Let $_v1=t_1/q_1$, $v_i=(t_i-t_{i-1})/(q_i-q_{i-1})$ for $i=2,...,N$. Then, the monopolist chooses $N$ values $\{v_i\}_{i=1,\dots,N}$ and qualities $\{q_i\}_{i=1,\dots,N}$ to solve:
\begin{align*}
    \max_{\{v_i\}_{i=1,\dots,N},\{q_i\}_{i=1,\dots,N}} \sum_{i=1}^{N-1}\left[ \left( F_0(v_{i+1})-F_0(v_i)\right)(t_i-c(q_i)) \right]+\left( 1-F_0(v_N)\right) (t_N-c(q_N))
\end{align*}
where $t_1=w_1 q_1$, $t_i=t_{i-1}+w_i (q_i-q_{i-1})$ for $i=2,..,N$ and
\begin{align*}
     w_0= & \EE(\theta|0\leq \theta\leq v_1)\\
    w_1 & = t_1/q_1=\EE(\theta|v_1\leq\theta\leq v_2)\\
    w_i & =(t_i-t_{i-1})/(q_i-q_{i-1})=\EE(\theta|v_i\leq \theta\leq v_{i+1}) \ \ \text{for} \ \ i=2,...,N-1\\
    w_N & = \frac{t_N-t_{N-1}}{q_N-q_{N-1}}=\EE(\theta|v_N\leq \theta\leq 1)
    \end{align*}

Let the  values $\{v^*_i\}_{i=1,\dots,N}$ and the qualities $\{q^*_i\}_{i=1,\dots,N}$ be the solution to the monopolist's problem. The optimal outcome is then given by $q^*_{BHM}=\{q^*_i\}_{i=1,\dots,N}$ and $G^*_{BHM}$ is the distribution that pools types in the intervals $[0,v_1^*],[v_1^*,v_2^*],\dots,[v_{i-1}^*,v^*_i],[v_i^*,v^*_{i+1}],\dots,[v_N^*,1]$, has support given by
     \begin{equation*}
        \supp G^*_{BHM}= \{w^*_0,w^*_1,w^*_2,\cdots w^*_N\}
    \end{equation*}
and is given by 

\[
G^*_{BHM}(w) = \begin{cases}
0 & \text{if}~w \in [0, w_0^*) \\
F_0(v_1^*) & \text{if}~ w\in [w_0^*,w_1^*]\\
F_0(v_2^*)& \text{if}~ w\in [w_1^*,w_2^*]\\
\dots\\
1 & \text{if}~ w\in[w_N^*,1]
\end{cases}
\]

Note that the way we wrote the solution, we assumed that there is exclusion of the lowest posterior type. This does not necessarily have to be the case, but the result will not change if the lowest posterior type is served. Define $\hat{b}:=\max(w_1^*-v_1^*,w_2^*-v_2^*,\dots,w_N^*-v_N^*)$.

\begin{proposition}\label{low-b}\label{small_b}
    Suppose $b< \hat{b}$. Then, $(q^*_{BHM}, G^*_{BHM})$ does not satisfy the (I-OB) constraint and, thus, cannot be monopolist-optimal.
\end{proposition}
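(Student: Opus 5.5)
The plan is to exhibit, for $b<\hat b$, an explicit deviation for the intermediary that strictly improves on $G^*_{BHM}$ given the menu $q^*_{BHM}$; then $G^*_{BHM}$ is not a maximizer in (I-OB). Fix an index $k$ with $w_k^*-v_k^*=\hat b>b$. Under the menu, item $k$ is chosen at posterior means in $[v_k^*,v_{k+1}^*)$, where we set $v_{N+1}^*:=1$: the binding downward IC constraints make $v_k^*$ exactly the buyer's indifference point between items $k-1$ and $k$ (with $q_0=t_0=0$), since $v_k^*(q_k-q_{k-1})=t_k-t_{k-1}$ by the definition of $v_k$. The intermediary's indirect utility is therefore piecewise affine:
\begin{equation*}
u^I(w)=(w+b)q_i-t_i \quad \text{on } [v_i^*,v_{i+1}^*),
\end{equation*}
with an upward jump of $b(q_k-q_{k-1})>0$ at each $v_k^*$. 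Tie-breaking toward the seller makes $u^I$ upper semicontinuous, so the persuasion problem is well posed.

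The deviation is a local perturbation of the pool $[v_k^*,v_{k+1}^*]$ whose mean is $w_k^*$. Take a small $\varepsilon>0$ and move the types in $[v_k^*-\varepsilon,v_k^*)$ from pool $k-1$ into pool $k$. The new mean $w_k^\varepsilon$ of pool $k$ decreases continuously from $w_k^*$, while that of pool $k-1$ also changes slightly. The net gain to the intermediary is then computed piecewise. In pool $k$, the type-mass moved now gets the item-$k$ payoff rather than the item-$(k-1)$ payoff, because the new mean $w_k^\varepsilon$ stays above $v_k^*$ for small $\varepsilon$. Since $u^I$ is affine on each region, the expected payoff of a pool whose mean stays in one region equals its mass times the payoff of the moved types' value under the affine piece. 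So the change in the intermediary's expected utility is
\begin{equation*}
\Delta(\varepsilon)=\int_{v_k^*-\varepsilon}^{v_k^*}\big[(\theta+b)(q_k-q_{k-1})-(t_k-t_{k-1})\big]\,dF_0(\theta),
\end{equation*}
up to a correction from pool $k-1$'s mean possibly crossing $v_{k-1}^*$, which does not happen for small $\varepsilon$.

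The integrand is $(q_k-q_{k-1})(\theta+b-v_k^*)$, which is positive near $\theta=v_k^*$ because $b>0$. This already suggests a deviation even without using $\hat b$. Everything therefore hinges on the precise pooling structure, and this is the main obstacle. The right comparison is probably the one in Example \ref{mainexample}: with pooling, the intermediary can push types below $v_k^*$ into pool $k$ as long as the pool mean stays at or above the indifference point. The relevant threshold is then whether $w_k^*-b$ lies above $v_k^*$, i.e.\ the intermediary values item $k$ over item $k-1$ at posteriors down to $v_k^*-\ldots$. I would therefore instead compare $G^*_{BHM}$ with the dual-price characterization (Definition \ref{price_function}). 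An optimal $G$ with support on $\{w_i^*\}$ requires a convex $p\ge u^I$ touching $u^I$ at each $w_i^*$ and affine on each pooling interval. On pool $k$, $p$ must equal the affine piece $(w+b)q_k-t_k$, which vanishes relative to the item-$(k-1)$ piece at $w=v_k^*-b$. Convexity across $[w_{k-1}^*,w_k^*]$, together with tangency, forces the condition $w_k^*-b\le v_k^*$ (the analogue of $b\ge p^*/q^*-v^*$ in the example).

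So the concrete step is this. Assume a price function exists, by \cite{deniz_kovac_2019}. Use condition (iv) together with affinity of $p$ on $[v_k^*,v_{k+1}^*]$ and on $[v_{k-1}^*,v_k^*]$. Then show that if $w_k^*-v_k^*>b$, the line extending pool $k$'s piece lies strictly below $u^I$ just left of $v_k^*$ in a way contradicting $p\ge u^I$ and convexity. Equivalently, and more directly, splitting pool $k$ into a bi-pool, with a small atom exactly at $v_k^*$ receiving item $k$ and the remainder at a higher mean, strictly raises the intermediary's payoff precisely when $w_k^*-v_k^*>b$. I expect verifying this strict-improvement inequality, i.e.\ getting the sign right using the jump size $b(q_k-q_{k-1})$ versus the slope difference, to be the delicate step.
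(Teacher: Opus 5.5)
\paragraph{Gap: the buyer's cutoffs are at $w_k^*$, not $v_k^*$.} Every step after your setup inherits this misidentification. In $(q^*_{BHM},G^*_{BHM})$ the binding downward IC constraints apply to the posterior types in $\supp G^*_{BHM}$. They therefore read $t_k^*-t_{k-1}^*=w_k^*(q_k^*-q_{k-1}^*)$. So the buyer switches from item $k-1$ to item $k$ at the pool mean $w_k^*$, not at the pool boundary $v_k^*$; $v_k^*$ is not even a posterior in the support. The printed definition $v_1=t_1/q_1$ before the BHM program clashes with $w_1=t_1/q_1=\EE(\theta\mid v_1\le\theta\le v_2)$ in the same display. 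In Example~\ref{mainexample} the cutoff is $p^*/q^*=w^*=2/3$, not $v^*=1/3$.

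Hence $u^I=u_i:=(w+b)q_i^*-t_i^*$ on $[w_i^*,w_{i+1}^*)$, with upward jumps of size $b(q_i^*-q_{i-1}^*)$ at each $w_i^*$. Two consequences follow:
\begin{itemize}
\item Your $\varepsilon$-perturbation is a loss, not a gain. Pool $k$'s mean sits exactly at the jump, so adding any types below $v_k^*$ pushes its mean below $w_k^*$. The whole pool then buys item $k-1$, a discrete loss of roughly $\mu_k b(q_k^*-q_{k-1}^*)$, with $\mu_k:=F_0(v_{k+1}^*)-F_0(v_k^*)$, that does not vanish as $\varepsilon\to 0$. This is why your computation seemed to work for every $b>0$ without using $\hat b$.
\item Your bi-pool atom at $v_k^*$ does not ``receive item $k$'': a posterior at $v_k^*<w_k^*$ buys item $k-1$.
\end{itemize}

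\paragraph{The dual-price step you left open.} The route in your last paragraph is the paper's route, but you stop at the one inequality that carries the proof, and it can only be obtained with the correct $u^I$. Here is how it goes.
\begin{itemize}
\item Jensen's inequality plus condition (iv) force $p$ to be affine on each $[v_i^*,v_{i+1}^*]$. Convexity across $[w_{k-1}^*,w_k^*]$ plays no role.
\item On $[v_k^*,v_{k+1}^*]$, condition (iii) gives $p(w_k^*)=u_k(w_k^*)$.
\item Since $p\ge u^I=u_k$ on $[w_k^*,v_{k+1}^*]$, the slope of $p$ there is at least $q_k^*$. Hence $p(v_k^*)\le u_k(v_k^*)$.
\item Because $v_k^*<w_k^*$, we also have $p(v_k^*)\ge u^I(v_k^*)=u_{k-1}(v_k^*)$.
\item Since $u_k(v_k^*)-u_{k-1}(v_k^*)=(q_k^*-q_{k-1}^*)(v_k^*+b-w_k^*)$, these two bounds require $b\ge w_k^*-v_k^*$. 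This contradicts $b<\hat b$ for the maximizing $k$.
\end{itemize}

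\paragraph{A primal alternative.} If you want to avoid the existence result of \cite{deniz_kovac_2019}, reverse the direction of your perturbation. Move $[v_k^*,v_k^*+\varepsilon]$ out of pool $k$ and into pool $k-1$. Pool $k$'s mean rises, so it still buys item $k$. Pool $k-1$'s mean stays below $v_k^*+\varepsilon<w_k^*$. By affinity of $u_{k-1}$ and $u_k$, the intermediary gains $(q_k^*-q_{k-1}^*)\int_{v_k^*}^{v_k^*+\varepsilon}(w_k^*-b-\theta)\,dF_0(\theta)$. This is strictly positive whenever $0<\varepsilon<w_k^*-b-v_k^*$, which is exactly where $b<\hat b$ enters.
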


Proposition \ref{small_b} establishes that when the weight the intermediary places on the product quality is small, the seller cannot hope to achieve the same profit as if she were providing information to the buyer directly. Thus, the presence of such an intermediary \textit{hurts} the seller. This result motivates our main question, which is, what is then the optimal menu the seller can post?

\section{Low Bias}
In this section, our analysis focuses on the monopolist's profit-maximizing finite-item menu. We restrict our attention to finite-item menus for two reasons.\footnote{We note that while we have chosen the consumer's value $\theta$ to take values in a continuum, we could alternatively consider the case where there are $K$ consumer types $(\theta_1,\dots,\theta_K)$ with some prior distribution $F_0$. Then, the posterior type $w$ would take values in the interval $[\theta_1,\theta_K]$. In this case, by Winkler (1988), it would be without loss to focus on distributions $G$ over posterior means that have a support of at most $K$ atoms. As a result, the optimal menu would include at most $K$ items. Thus, our results go through with this alternative specification.} First, in real-world scenarios, finite-item menus are exclusively observed. This can be attributed to the existence of fixed costs associated with designing, producing, or marketing products of varying qualities. Moreover, there's the potential burden of escalating costs, especially in marketing, when offering a broad assortment of products. Should these costs prove significant, a monopolist is likely to introduce only those products that can counterbalance them, thereby inherently limiting the diversity of products available in the market (\cite{spence_1980} and \cite{dixit_stiglitz_1977}).

Second, from a theoretical perspective, recent work has established the optimality of finite-item menus in a setting similar to ours. In particular, \cite{bergemann_et_al_2022} study the problem of a monopolist who designs not only the posted menu but also the information structure whose realization the buyer observes. Formally, they study a relaxed version of our problem where there is no intermediary and, thus, no (I-OB) constraint. Their main results show that the solution to the monopolist's problem, in that case, is a finite-item menu together with a monotone pooling distribution. Moreover, they show that when the marginal cost is convex and the prior distribution $F_0$ has either increasing or concave density, the monopolist's optimal mechanism is a single-item menu. 

One natural question is whether the finiteness of the optimal menu continues to hold in our setup.  At an intuitive level, the introduction of an intermediary shouldn't fundamentally change the forces that govern the seller's behavior. That is, pooling qualities and types should remain beneficial for the monopolist. Nevertheless, as we will see, the intermediary's presence does disrupt the seller's pooling ability. This disruption introduces the possibility that the optimal constrained finite menu may under-perform in contrast to a menu combining standalone items with a continuum. The nuanced difference between our study and that of \cite{bergemann_et_al_2022} lies in the intermediary's potential to induce a different distribution over posterior means than the monopolist's optimal when qualities are pooled. While \cite{bergemann_et_al_2022} can analyze the efficiency-rent trade-off within a fixed interval without worrying about what happens outside this interval, we cannot do so in our model. In our scenario, pooling the qualities meant for the types within an interval prompts the intermediary to choose a distribution over posterior means that influences revenues and costs beyond the designated interval.

From now on, we assume that $b<\hat{b}$. Suppose that the seller's optimal menu is an N-item menu. Let $w_1=t_1/q_1$, $w_i=(t_i-t_{i-1})/(q_i-q_{i-1})$ for $i=2,...,N$. The buyer's IC constraints imply that a buyer with type $w<w_1$ does not participate, while if $w_i \leq w<w_{i+1}$, he chooses the item with quality $q_i$ and pays a price of $t_i$. Thus, the intermediary's payoff is given by

\[
u^I(w) = \begin{cases}
0 & \text{if}~w \in \left[0, \frac{t_1}{q_1}\right) \\
\cdots\\
(w+b)q_i-t_i & \text{if}~ w\in \left[\frac{t1}{q1},\frac{t_2-t_1}{q_2-q_1}\right)\\
\cdots \\
(w+b)q_N-t_N & \text{if}~ w\in \left[\frac{t_N-t_{n-1}}{q_N-q_{N-1}}, 1\right]
\end{cases},
\]

It follows that $u^I(w)$ is a piece-wise linear function with jump discontinuities at points $t_1/q_1$, $\{(t_i-t_{i-1})/(q_i-q_{i-1})\}$, $i=2,\dots,N$.  In principle, the intermediary's problem is an intractable Bayesian Persuasion problem to solve. In general, it is not even possible to argue that the optimal distribution will be monotone partitional, as the intermediary's indirect utility $u^I(w)$ does not satisfy the affine-closure property of \cite{d_martini_2019}, which would yield this result. The main methodological contribution of this paper is the following proposition which illustrates that the joint optimality required for $G^*$, that is, the fact that $G^*$ must solve both the seller's and the intermediary's problems is enough to guarantee that $G^*$ will have the form of monotone pooling. Moreover, the types in the support of $G^*$ are completely pinned down by the (I-OB) constraint. This, in turn, reduces the monopolist's problem to the choice of optimal qualities a la \cite{maskin_riley_1984} with the additional caveat that the number of types that participate in the mechanism and, thus, the number of items in the optimal finite-item menu is chosen by the monopolist.

\begin{proposition}\label{opt_types}
    Suppose that an N-item menu $(q^*_i,t^*_i)_{i=1}^N$, together with the distribution $G^*$ constitute the monopolist's optimal outcome among all finite-item menus. Then, $G^*$ pools types in the intervals 
    \begin{align*}
    [0,w_1^*-b],[w_1^*-b,w_2^*-b],...,[w_{i-1}^*-b,w^*_i-b],[w_i^*-b,w^*_{i+1}-b],...,[w_N^*-b,1]
    \end{align*}
    and has support given by
    
    \begin{equation*}
        \supp G^*= \{w^*_0,w^*_1,w^*_2,\cdots w^*_N\}
    \end{equation*}
    where 
\begin{align*}
    w^*_0= & \EE(\theta|0\leq \theta\leq w_1^*-b)\\
    w^*_1 & = t^*_1/q^*_1=\EE(\theta|w_1^*-b\leq\theta\leq w_2^*-b)\\
    w^*_i & =(t^*_i-t^*_{i-1})/(q^*_i-q^*_{i-1})=\EE(\theta|w_i^*-b\leq \theta\leq w_{i+1}^*-b) \ \ \text{for} \ \ i=2,...,N-1\\
    w_N^* & = \frac{t_N^*-t^*_{N-1}}{q_N^*-q_{N-1}^*}=\EE(\theta|w_N^*-b\leq \theta\leq 1)
    \end{align*}

 Moreover, the optimal qualities are given by    
 \begin{equation*}
    c'(q^*_N)=w_N^*  \tag{$\mathrm{OPT-Q_N}$}\\
\end{equation*}
\begin{equation*}
    c'(q^*_i)=w_i^*-\frac{\sum_{j=i}^{N-1} \left(F_0(w^*_{j+1}-b)-F_0(w^*_j-b)\right) }{F_0(w^*_{i+1}-b)-F_0(w^*_i-b)}(w^*_{i+1}-w^*_{i}) \ \ \text{for} \ \ i=1,\dots, N-1 \tag{$\mathrm{OPT-Q_i}$}
    \end{equation*}
\end{proposition}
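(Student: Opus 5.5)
Given the optimal $N$-item menu, I would first study the intermediary's best response through the dual price function of Definition \ref{price_function}, which exists by the Lipschitz argument of \cite{deniz_kovac_2019}. By \cite{arieli_et_all_2023}, there is also an optimal $G$ in the bi-pooling class. The key observation concerns $u^I$. It is piecewise linear with slope $q_i$ on $[w_i,w_{i+1})$. At each threshold $w_i$ it jumps upward by $b(q_i-q_{i-1})$, since the buyer's rent $\int_0^w q$ is continuous while the term $bq(w)$ jumps. Any price function $p$ is convex and lies above $u^I$, and it touches $u^I$ only at points of $\supp G$. Because of the upward jumps, points immediately to the left of $w_i$ can never lie in the contact set. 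Moreover, on each linear piece, $p$ can touch $u^I$ only at the left endpoint $w_i$, unless $p$ coincides with that linear piece, and the latter is ruled out by convexity together with the jumps. So every optimal $G$ is supported on $\{w_0\}\cup\{w_1,\dots,w_N\}$ together with points in the exclusion region $[0,w_1)$, where $u^I=0$. Optimality of the seller lets me assume the buyer at each atom chooses the intended item.

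Next I would use joint optimality to force monotone pooling with the stated cutoffs. Since the seller extracts all rent at $w_i$, she wants the support to sit exactly at the kink points $w_i$. Fix the atoms at $w_1,\dots,w_N$. The intermediary then chooses which prior types to send to each atom. Sending $\theta$ to atom $w_i$ rather than $w_{i-1}$ changes his payoff by $(\theta+b)(q_i-q_{i-1})-(t_i-t_{i-1})$. This is the persuasion problem with prices $\lambda_i$ on the dual side, and the price function is affine on each pooling interval. Tangency/indifference at the boundary between adjacent pooling intervals pins down the boundary at the type $\theta$ for which $(\theta+b)(q_i-q_{i-1})=t_i-t_{i-1}$, i.e. $\theta=w_i-b$. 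I would make this rigorous with condition (iv) of Definition \ref{price_function}: the price function is $p(x)=\max_i\{(x+b)q_i-t_i\}$, which is convex, and $G$ attains $\int p\,dF_0$ only if each $\theta$ is pooled into the atom maximizing $p$ at $\theta$. This yields intervals with cutoffs $w_i-b$, and the consistency requirement $w_i=\EE(\theta\mid w_i-b\le\theta\le w_{i+1}-b)$ follows from mean preservation. The hypothesis $b<\hat b$ is what guarantees that the seller cannot do better by choosing her own cutoffs (Proposition \ref{small_b}). I would still need to argue that bi-pooling in some interval is never strictly better for the seller, by showing it creates an atom strictly inside a linear piece, which cannot be in the contact set.

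With $G^*$ determined by the menu, the seller's problem reduces to choosing $q_1\le\dots\le q_N$, with transfers given by binding downward IC, $t_i=t_{i-1}+w_i(q_i-q_{i-1})$. I would treat the posterior types $w_i^*$ as fixed and vary the $q_i$. Profit is $\sum_i m_i(t_i-c(q_i))$ with $m_i=F_0(w_{i+1}-b)-F_0(w_i-b)$. The derivative with respect to $q_i$ is $m_i(w_i-c'(q_i))-(w_{i+1}-w_i)\sum_{j>i}m_j$, since raising $q_i$ lowers the marginal price on $q_{i+1}$. Setting this to zero gives (OPT-$\mathrm{Q_i}$), and for $i=N$ it gives (OPT-$\mathrm{Q_N}$).

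The main obstacle is that the $w_i$ themselves depend on the $q_i$ through the fixed point. I would handle this envelope-style: vary $(q_i,t_i)$ jointly so that $w_i$ stays fixed. Equivalently, I would show that the derivative terms through the cutoffs vanish, because at the cutoff the type $w_i-b$ yields profit differences that are offset. This is the step I would check most carefully, together with the exclusion of bi-pooling.
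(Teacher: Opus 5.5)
There is a genuine gap. It sits exactly where the proposition has content: showing that each atom of $G^*$ coincides with the corresponding menu threshold $w_i=(t_i-t_{i-1})/(q_i-q_{i-1})$, i.e.\ that $\EE(\theta\mid w_i-b\le\theta\le w_{i+1}-b)=w_i$ and not merely $\ge w_i$. Your first step tries to get this from the intermediary's best response alone. It rests on the claim that a price function can touch a linear piece of $u^I$ only at its left endpoint, and that claim is false.

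Write $\ell_j(x):=(x+b)q_j-t_j$. A convex majorant of $u^I$ can coincide with $\ell_i$ on a nondegenerate subinterval of $[w_i,w_{i+1})$. Your own $p=\max_j\ell_j$ equals $u^I$ on all of $[w_i,w_{i+1}-b]$. At the claimed optimum this interval is always nondegenerate, since $w_i^*$ is the mean of $[w_i^*-b,w_{i+1}^*-b]$ and so lies strictly below $w_{i+1}^*-b$. Atoms strictly inside a piece really occur. Take a uniform prior and a single item with threshold $v<1-b$: pooling $[v-b,1]$ at its mean $(1+v-b)/2>v$ is intermediary-optimal, certified by $\max\{0,\ell_1\}$. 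The same mistake undermines your planned exclusion of bi-pooling.

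Your second step does deliver something correct and worth keeping:
\begin{itemize}
\item $\max_j\ell_j$ (with $\ell_0\equiv 0$) is a price function exactly when $\EE(\theta\mid w_i-b\le\theta\le w_{i+1}-b)\ge w_i$ for every $i\ge 1$ (with $w_{N+1}-b:=1$).
\item It then certifies every intermediary-optimal $G$, via $\int u^I dG\le\int p\,dG\le\int p\,dF_0$. So every best response pools within $[w_i-b,w_{i+1}-b]$ and puts its posteriors in $[w_i,w_{i+1}-b]$.
\item Under equality, mean preservation forces all that mass onto $w_i$, so $G^*$ is the unique best response.
\end{itemize}
This is a more direct route to the pooling structure than the paper's Mensch--Ravid-style marginal-price and bias-cancelling lemmas.

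It also shows why your heuristics carry no weight:
\begin{itemize}
\item In this regime item $i$'s mass is $F_0(w_{i+1}-b)-F_0(w_i-b)$ whatever the best response. So the seller is indifferent to where the posteriors sit, and ``she wants the support at the kinks'' says nothing.
\item $b<\hat b$ only rules out the BHM outcome (Proposition \ref{small_b}).
\end{itemize}
The equality is a property of the seller's choice of menu. The paper addresses it on the seller's side. After reducing to a Maskin--Riley problem with menu-dependent masses, it argues that local downward IC binds by re-optimizing $t_i$ and shifting all higher transfers by the same amount. Whatever argument you use must exploit the freedom to change the number of items. For $N=1$ in the Uniform--Quadratic case with $b<1/5$, the best single-item menu has threshold $2(1+b)/3<1-b$, so the inequality is strict there. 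Nothing in your argument distinguishes that menu from the global optimum.

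Your third step is fine once this is settled. Holding the $w_i^*$ fixed is legitimate, since they depend only on $b$ and $F_0$; varying $q$ with transfers from binding IC therefore keeps $G^*$ obedient. Drop the alternative ``envelope'' claim, though: with binding obedience the derivatives through the cutoffs do not vanish.

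Finally, your first-order condition has numerator $\sum_{j>i}m_j=1-F_0(w^*_{i+1}-b)$, which is the correct Maskin--Riley condition. It is not literally the displayed (OPT-Q$_i$), whose numerator is $F_0(w_N^*-b)-F_0(w_i^*-b)$. The two agree when all masses are equal, as in the uniform case where each is $2b$, but not in general.
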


Proposition \ref{opt_types} shows that the posterior types of the buyer at the optimal finite-item menu are pinned down by the (I-OB) constraint. One can solve
\begin{equation*}
     w_N^*=\EE(\theta|w_N^*-b\leq \theta\leq 1)
\end{equation*}
to obtain $w_N^*$ as a function of $b$. Given $w_N^*(b)$, solve
\begin{equation*}
    w_{N-1}^*=\EE(\theta|w_{N-1}^*-b\leq \theta\leq w_N^*)
\end{equation*}
to obtain $w_{N-1}^*(b)$ and then proceed analogously to obtain all posterior types in the support of $G^*$ as functions of $b$. Then, finding the corresponding qualities reduces to a problem a la \cite{maskin_riley_1984} with the caveat that the number of types in the support of $G^*$ and, therefore, the number of items offered in the menu is determined by the monopolist. Given $w_1^*(b),...,w_N^*(b)$ the monopolist's problem now can be written as 
\begin{align*}
    \max_{(q_1,..., q_N)} & [F_0(w_2^*(b)-b)-F_0(w_1^*(b)-b)][t_1-c(q_1)] \\
   & + \sum_{i=2}^{N-1} [F_0(w_{i+1}^*(b)-b)-F_0(w_{i}^*(b)-b)][t_i-c(q_i)]+[1-F_0(w^*_N(b)-b)][t_N-c(q_n)]
\end{align*}
where $t_1=w_1^*(b)q_1$ and $t_i=t_{i-1}+w_i^*(b)(q_i^*-q_{i-1}^*)$, for $i=2,..,N$. 

Finally, one could simply calculate the profit from different $N$s and find the optimal number of items in the menu. However, it turns out that the optimal number of items is the highest $N$ such that the quality specified by (OPT-Q\textsubscript{i}) is positive.

\begin{proposition}
    The optimal number of items in the menu, $N^*_b$, is the highest possible number such that the lowest quality offered, obtained as the solution to (OPT-Q\textsubscript{1}), is positive. Moreover, $N_b^*$ increases as the intermediary's bias $b$ decreases.  
\end{proposition}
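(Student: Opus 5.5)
I would reduce the statement to a comparison of profits across $N$ using the structure from Proposition \ref{opt_types}. Fix $b<\hat b$. For each candidate $N$, the posterior types are determined recursively by the (I-OB) constraint. First solve $w_N^*(b)=\EE(\theta\mid w_N^*-b\le\theta\le 1)$, then work downward with $w_i^*=\EE(\theta\mid w_i^*-b\le\theta\le w_{i+1}^*-b)$. This produces a chain of cutoffs $v_i=w_i^*-b$ that descends from $1$.

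The first observation I would make is that the recursion does not depend on $N$ as long as it is indexed from the top. The $k$-th highest posterior type is the same for every $N\ge k$. Adding an item therefore only appends one more lower interval $[v_{N+1}',v_N']$ to the chain and shifts the exclusion cutoff down. Given the cutoffs, the Maskin--Riley-type problem in the display after Proposition \ref{opt_types} is additively separable in the $q_i$. Its solution is (OPT-Q$_i$), with virtual values
\[
w_i^*-\frac{1-F_0(v_{i+1})}{F_0(v_{i+1})-F_0(v_i)}(w_{i+1}^*-w_i^*)
\]
once the sums are telescoped.

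For the first claim, I would write the profit of an $(N+1)$-item menu as the $N$-item profit plus a marginal term. Ranking items from the top, the top $N$ qualities solve the same first-order conditions in both menus, because the upper intervals are unchanged. The only difference is the new bottom interval, whose contribution is
\[
\bigl(F_0(v_2)-F_0(v_1)\bigr)\bigl[\phi_1 q_1-c(q_1)\bigr],
\]
where $\phi_1$ is the virtual value from (OPT-Q$_1$). This term is positive exactly when the maximizer $q_1=c'^{-1}(\phi_1)$ is positive. Adding items is therefore profitable for as long as the lowest quality from (OPT-Q$_1$) is positive, and not once it becomes nonpositive.

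Showing that the sign switches only once is where I would use monotone hazard rate regularity. Along the descending chain, $\phi$ evaluated at the $k$-th interval from the top should be decreasing in $k$. I would establish this from the increasing hazard rate together with the fact that the interval lengths are governed by $b$, mimicking the standard argument that regular virtual values are monotone. This gives $N_b^*$ as the largest $N$ with positive lowest quality.

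For the comparative static, I would differentiate the recursion in $b$. From $w_N^*=\EE(\theta\mid w_N^*-b\le\theta\le1)$, a smaller $b$ raises $w_N^*-b$, since the conditional mean moves by less than one-for-one. Inductively, the intervals $[v_i,v_{i+1}]$ shrink and each cutoff $v_k$ moves up as $b$ falls. More intervals then fit above any fixed threshold where the virtual value turns negative, so the largest $N$ with positive $q_1$ weakly increases. In particular $N_b^*\to\infty$ as $b\to0$, which is consistent with the Mussa--Rosen benchmark.

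The main obstacle is the single-crossing step: the virtual value of the lowest interval must decrease monotonically in the number of intervals, and this must hold uniformly as $b$ varies. The ratio $(1-F_0(v_2))/(F_0(v_2)-F_0(v_1))$ interacts with the endogenous interval lengths. I would first try to bound the interval length by $b$-dependent quantities and apply the increasing hazard rate directly. If that fails, I would verify monotonicity via the envelope of the recursion.
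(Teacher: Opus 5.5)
Your skeleton matches the paper's. It uses the nesting $w_i^N(b)=w_{i+1}^{N+1}(b)$ of top-indexed types and compares $N$ and $N+1$ items through the appended bottom item; your explicit marginal term $\bigl(F_0(v_2)-F_0(v_1)\bigr)\bigl[\phi_1q_1-c(q_1)\bigr]$ is a sharper form of the paper's dummy-item embedding. It also uses cutoffs that rise as $b$ falls, which is the paper's lemma on $x_i=w_i-b$. Your rent term $(1-F_0(v_{i+1}))/(F_0(v_{i+1})-F_0(v_i))$ is the right one; the sum printed in (OPT-Q$_i$) equals $F_0(v_N)-F_0(v_i)$ instead, and the two agree only in the uniform case. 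The gap is that the two steps carrying the result are not established.

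Write $\phi^{(k)}$ for the virtual value of the $k$-th interval from the top and $v^{(k)}$ for its lower cutoff, with $v^{(0)}=1$. You leave monotonicity of $\phi^{(k)}$ in $k$ as an intention. Yet you need it already for the decomposition, not only for a single sign change. If $\phi^{(N+1)}>\phi^{(N)}$, the separable maximizers violate $q_1\le q_2$, and ironing ties the new item to those above it. Then the claim that the top $N$ qualities are unchanged fails. The paper does not prove this either; it rests on the unproved remark in the proof of Proposition~\ref{opt_types} that virtual valuations are increasing. Citing the paper will therefore not close it.

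The comparative-static step fails as written, because there is no $b$-independent threshold at which virtual values turn negative. In the uniform--quadratic case, Proposition~\ref{opt_types} gives $q_i^*=2w_i^*-1+b$. So an interval is served if and only if its mean exceeds $(1-b)/2$, a threshold that \emph{rises} as $b$ falls. $N_b$ grows only because the interval length $2b$ shrinks faster than the threshold rises. What you must show is that $\phi^{(k)}(b)$ decreases in $b$ for each fixed $k$. Rising cutoffs control $w^{(k)}$ but not the rent term, and the paper's own proof makes the same unargued jump from monotonicity of $w_i^N(b)$ to the conclusion.

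A workable route is as follows. Since $w_j=v_j+b$ for every $j$, the gap $w^{(k-1)}-w^{(k)}$ equals the length of the $k$-th interval. Hence $\phi^{(k)}=w^{(k)}-\bigl(1-F_0(v^{(k-1)})\bigr)/\bar f^{(k)}$, where $\bar f^{(k)}$ is the average of $f_0$ on that interval. When $f_0$ is nondecreasing, $\phi^{(k)}-\phi^{(k+1)}$ is at least the sum of the two interval lengths. Also $\phi^{(k)}$ rises as $b$ falls, since the tail mass falls while $\bar f^{(k)}$ rises, so both steps hold in that case. Only where $f_0$ decreases must the increasing hazard rate actually be used, and neither your plan nor the paper does that.
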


To see this, fix $b$ and suppose that the optimal finite-item menu consists of $N^*_b$ items. What happens if $b$ changes? Notice that Proposition \ref{opt_types} implies that as $b$ decreases, all types $\{w_i^*\}_{i=0,..,N}$ increase. As a result, they all get higher rents than necessary. The seller, instead of offering the optimal $N_b$-item menu, can do better by adding items to the menu. As $b$ decreases, if the seller adds an extra item, there will be enough demand, due to the refined consumer learning that will take place, to support the inclusion of this lower-quality product. This allows the monopolist to fully extract the surplus of the lowest type and make every higher type just indifferent between the quality prescribed for them and the next lower one. However, to achieve this while respecting the optimal behavior of the intermediary, this lower new quality must be offered when $b$ is sufficiently low. Otherwise, the intermediary will steer some consumers to this new product instead of them purchasing higher-quality options, in an undesirable way for the seller.. This would yield a higher rent for them while reducing the measure of posterior types who choose the higher-quality options. As a result, the monopolist's profit would decrease.  The exact value of $b$ below which a new item should be introduced is the one that makes the lowest quality offered in the menu exactly equal to zero. 

This expansion of product offerings will mitigate the profit loss due to the decrease in $b$. However, this loss cannot be reversed. In particular, as $b$ decreases, the monopolist's profit from the best finite-item menu monotonically decreases. At the same time, the buyer's rents and the total welfare in the economy change non-monotonically as a result of the addition of new products to the menu, the price adjustments, and the fact that the probability with which each posterior type happens depends on the bias $b$. A clearer illustration of these facts will be provided in the next section, where we will analyze the simpler and fully tractable Uniform-Quadratic framework.

\section{The Uniform-Quadratic Environment}

We now illustrate the logic of our main results in the Uniform-Quadratic framework. We assume that $b<1/3$ since, otherwise, the solution to the monopolist's problem is the solution to the relaxed problem where the information is provided to the consumer directly by the seller. From Proposition \ref{opt_types}, one can easily see that if an $N$-item menu is the optimal one, it must be the case that 
\begin{align*}
    w_N^*=1-b\\
    w_i^*=w_{i+1}^*-2b=1-(1+2(N-i))b \ \ \text{for} \ \ i=1,\dots N-1 
\end{align*}
Moreover, the probability that each posterior type that trades\footnote{There is also a posterior type $w_0$ who is excluded from the menu, so we ignore this type.} occurs is equal for all such posterior types and given by 
\begin{equation*}
   \prob_{G^*}(w=w_i)=2b \ \ \text{for} \ \ i=1,\dots, N
\end{equation*}
The corresponding qualities are then given by
\begin{align*}
    q_N^*=1-b\\
    q_i^*=w_{i}^*-2(N-i)b=1-(1+4(N-i))b \ \ \text{for} \ \ i=1,\dots N-1 
\end{align*}

For each $b$, the optimal number of items is given by the integer $N_b$ that satisfies 
\begin{equation*}
N_b = \left\lfloor \frac{1}{4} \left( \frac{1}{b} - 1 \right) \right\rfloor + 1
\end{equation*}
where $\left\lfloor \cdot \right\rfloor $  denotes the floor function, which returns the greatest integer less than or equal to its argument.
This means, for instance, that if $b=0.1$, the optimal finite-item menu features 3 items, while if $b=0.01$ it features 25 items and if $b=0.001$ it has 250 items. We now proceed to shed light on the forces that shape how the optimal finite item menu looks like.

\subsection{Optimal Single-Item Menu and Product Variety Expansion}
Suppose the seller posts a single item menu $(q,t)$. Then, the intermediary pools types in $[0,\max(\frac{t}{q}-b,\hat{v})]$ and $ [v,1]$, where $v$ solves 
\begin{equation*}
    \EE(\theta| v \leq \theta\leq 1)=\frac{t}{q}\Leftrightarrow v=2\frac{t}{q}-1
\end{equation*}
Let $w=t/q$. The monopolist's problem is then given by
\begin{equation*}
    \max_{(q,~w)} \left(1-\max(w-b,v)\right)\left(wq-\frac{1}{2}q^2\right)
\end{equation*}

Suppose $(q,p)$ are chosen so that $w-b>v$. The first order conditions yield that, at the optimum,  $w^*=q^*=2(b+1)/3$. However, then we have that $w-b>v$ only if $b\leq 1/5$. Now, suppose that $(t,q)$ are chosen so that $w-b<v$. Then, the first order conditions yield that $w^*=q^*=2/3$ and the constraint is satisfied only if $b\geq 1/3$. For $1/5\leq b<1/3$ the (I-OB) constraint binds and it must be the case that $w^*=q^*=1-b$. Thus, the optimal single-item menu is given by 

\[
w^*=q^* = \begin{cases}
\frac{2}{3} & \text{if}~b \geq \frac{1}{3} \\
1-b & \text{if}~ b\in [\frac{1}{5}, \frac{1}{3})\\
\frac{2(b+1)}{3} & \text{if}~ b\in [0, \frac{1}{5})
\end{cases},
\]

\begin{figure}
\begin{subfigure}{.5\textwidth}
  \centering
  \includegraphics[scale=0.15]{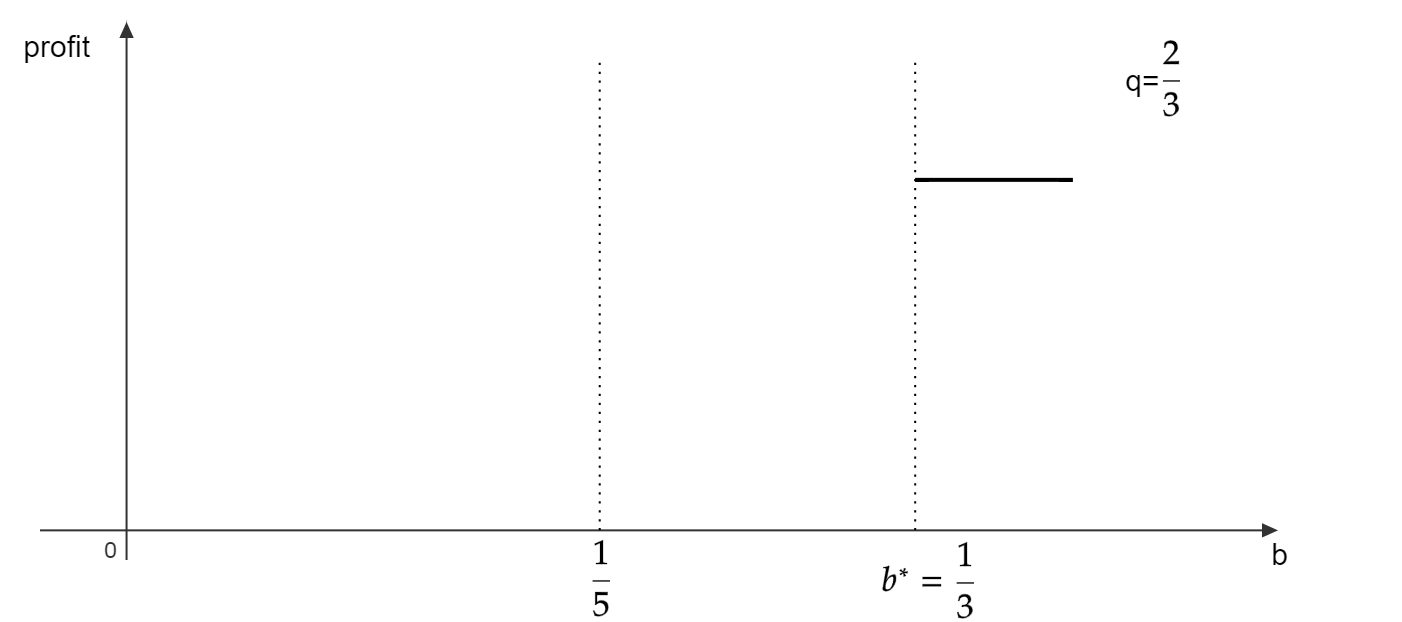}
  \caption{Menu $q_{BHM}=2/3$ is optimal for $b\geq b^*$.}

\end{subfigure}%
\begin{subfigure}{.5\textwidth}
  \centering
  \includegraphics[scale=0.18]{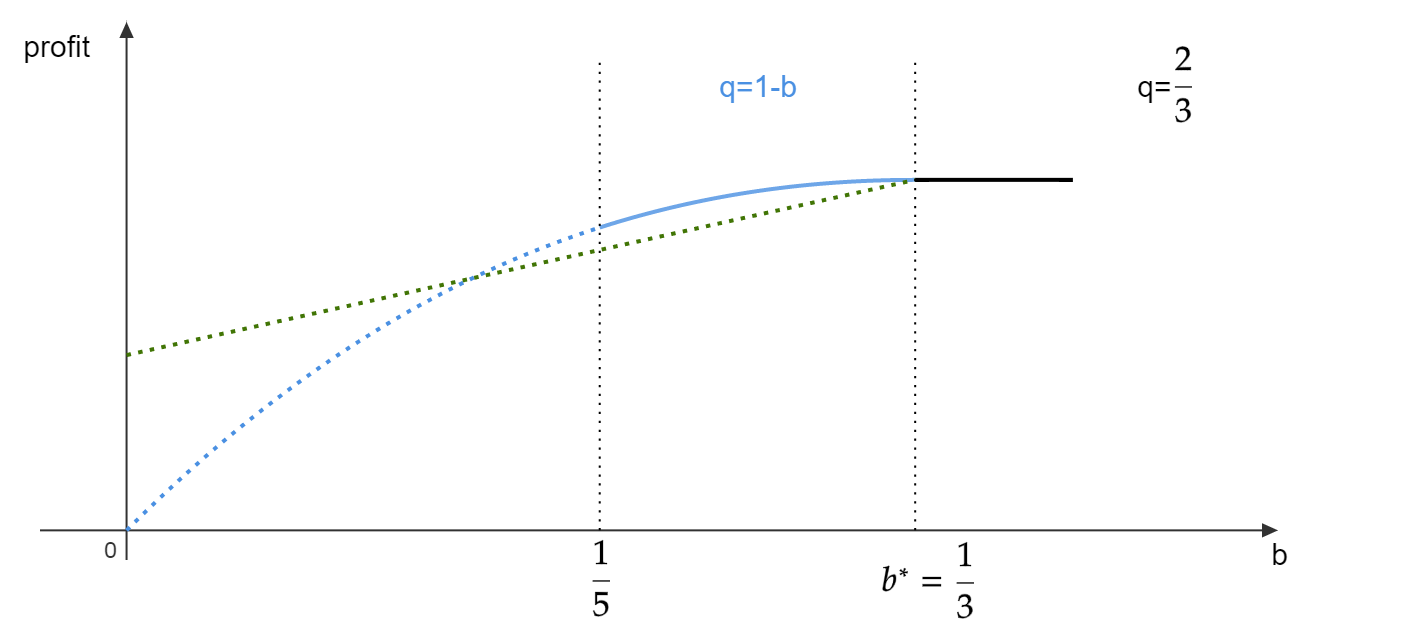}
  \caption{For $b\in[1/5, 1/3)$, $q=1-b$ is optimal.}
 
\end{subfigure}
\begin{subfigure}{.5\textwidth}
  \centering
  \includegraphics[scale=0.18]{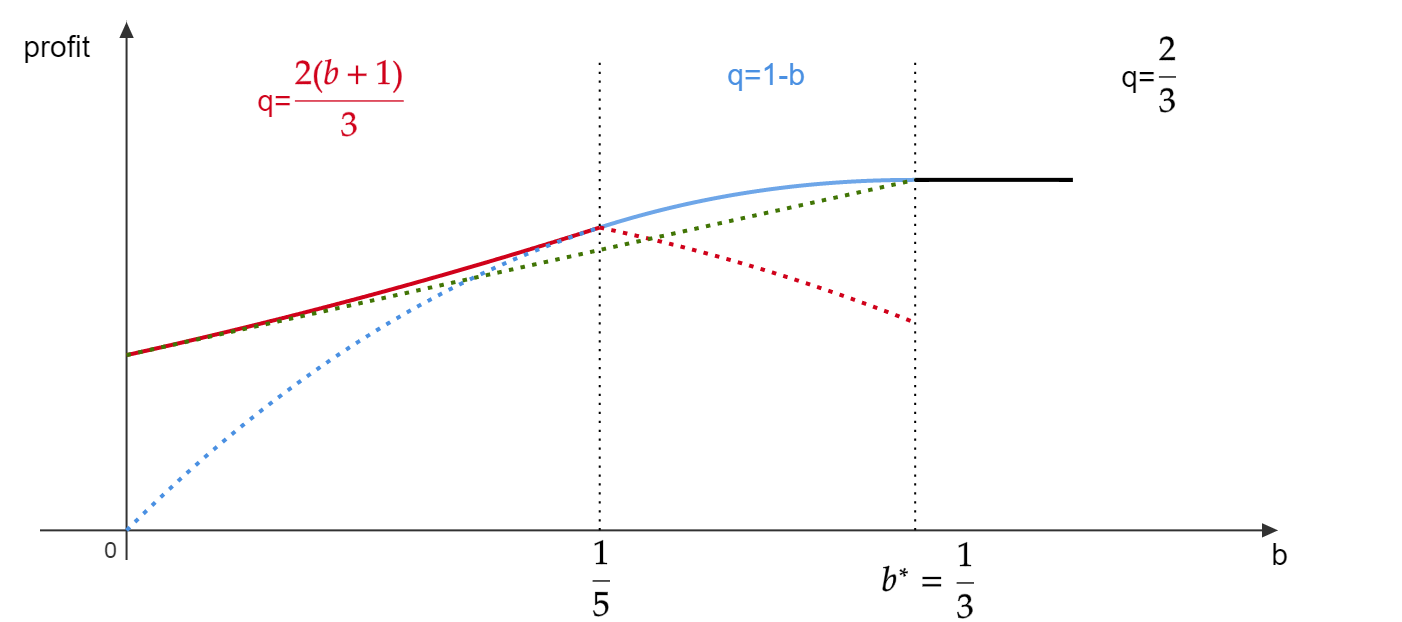}
  \caption{For $b<1/5 $, $q=2(b+1)/3$ is optimal.}
  
\end{subfigure}
\begin{subfigure}{.5\textwidth}
  \centering
  \includegraphics[scale=0.18]{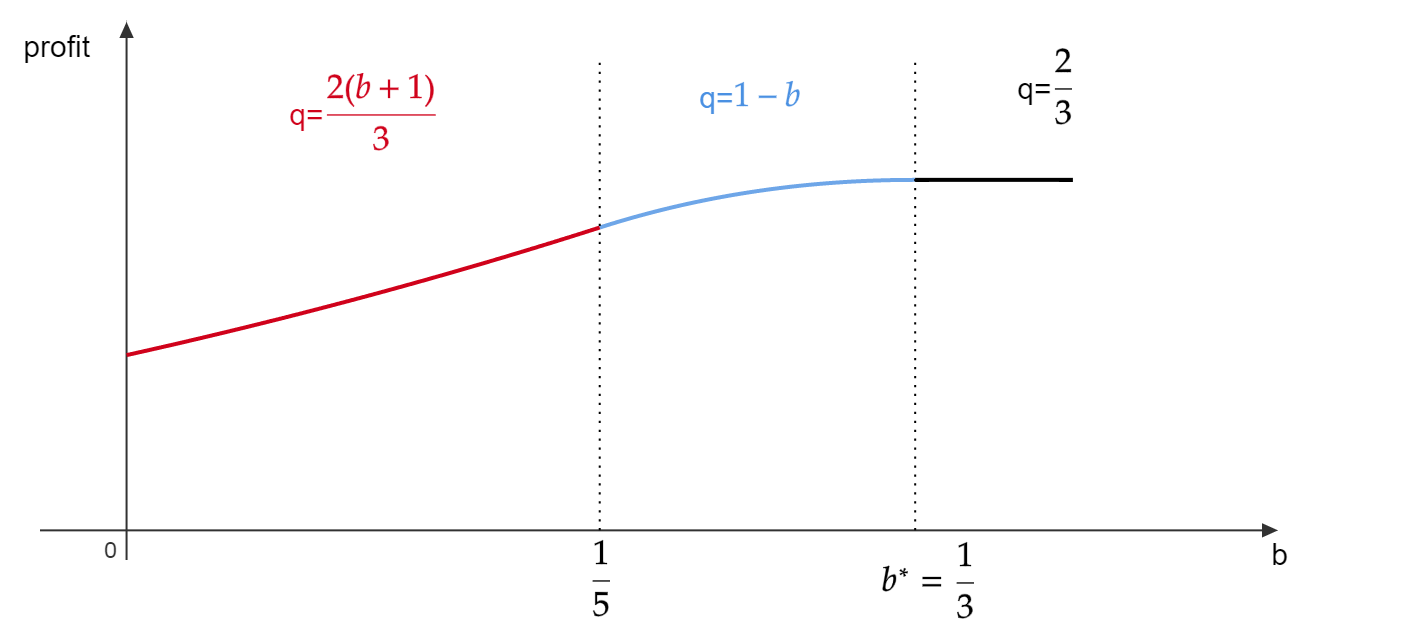}
  \caption{The Optimal Single-Item Menu}
 
\end{subfigure}
\caption{Optimal Single-Item Menu}
\label{opt_1_menu}
\end{figure}

The construction of the optimal single-item menu is illustrated in Figure \ref{opt_1_menu}. Panel (a) shows that when $b\geq b^*$ the menu $q_{BHM}$ is optimal. For $b\in[1/5,1/3)$ the (I-OB) constraint binds and determines the posterior type of the buyer, and, hence, the optimal quality, as illustrated in panel (b). For $b<1/5$ the optimal quality obtained by solving the FOCs to the monopolist's problem satisfies the (I-OB) constraint and so it is optimal, as illustrated in panel (c). 

A similar approach yields the optimal 2-item menu. In particular, we have that it reduces to the optimal single-item menu for $b\geq 1/5$. It includes two quality options $q_1=1-5b$ and $q_2=1-b$ for $1/9 \leq b<1/5$ and for $b<1/9$ it includes $q_1=2(b+1)/5$ and $q_2=4(b+1)/5$. The optimal 2-item menu is illustrated in Figure 5 (purple solid and dashed line).

\begin{figure}

  \centering
  \includegraphics[scale=0.3]{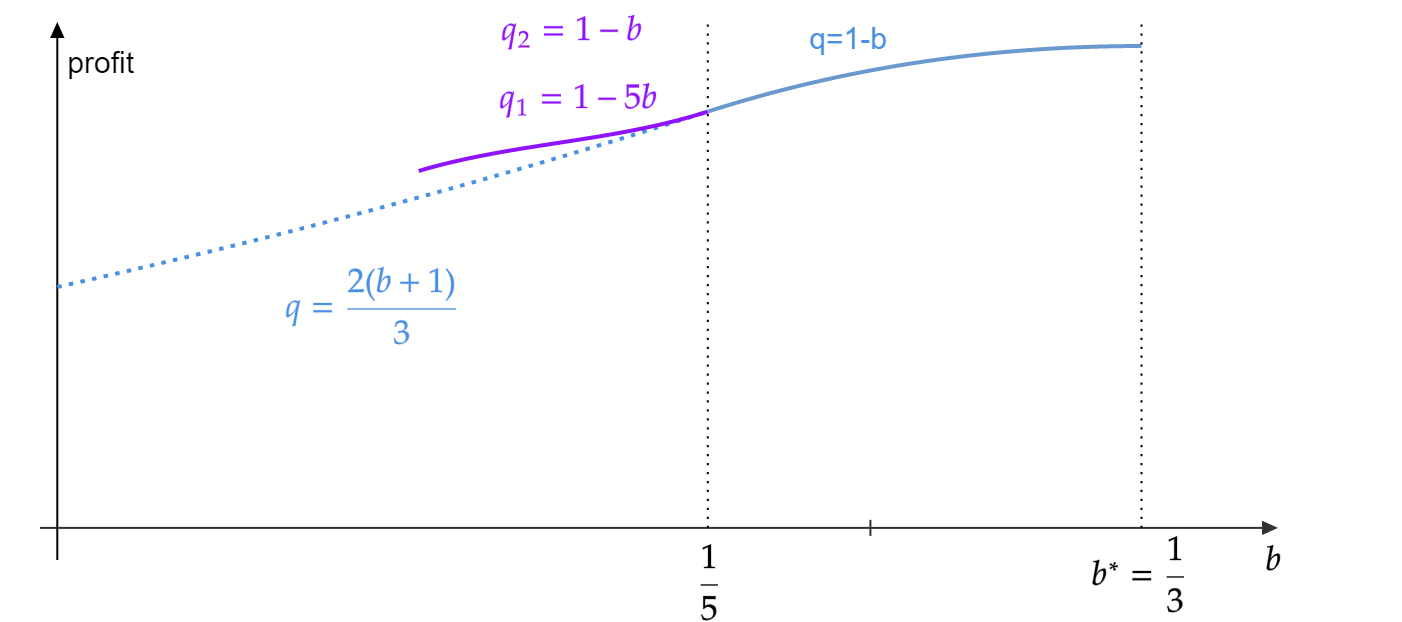}
  \caption{Introducing a lower-quality option and increasing the existing one yields a higher profit to the monopolist.}

\end{figure}

However, the monopolist can do better by introducing lower-quality items in the menu as $b$ decreases. In particular, for $b<1/5$, instead of offering the single quality $q=2(b+1)/3$ and leaving rents to the posterior buyer type, she can increase this quality and introduce a lower-quality option, that is, offer two items. The low posterior type of the buyer will get zero surplus under the new menu and the high posterior type is just indifferent between the two options. This is illustrated in Figure 4. The monopolist must keep doing that as $b$ decreases to mitigate the profit loss. In particular, introducing a lower-quality product is optimal whenever the marginal cost of producing the lowest quality prescribed by Proposition \ref{opt_types} is positive.

\begin{figure}

  \centering
  \includegraphics[scale=0.3]{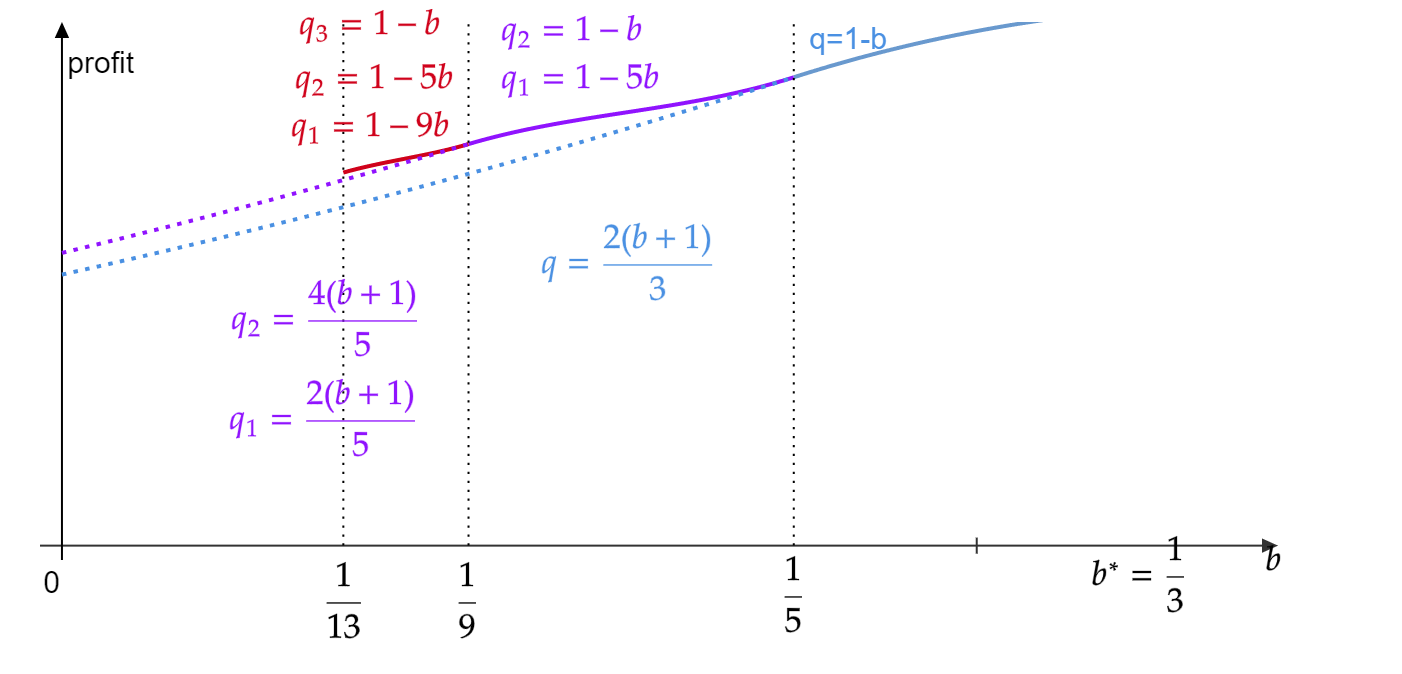}
  \caption{The monopolist must keep adding options to mitigate the profit loss as $b$ decreases.}

\end{figure}

\subsection{Comparative Statics:  Total Surplus, Profits, Rents, and Distortions}
We now proceed to further explore the effects of an intermediary's presence on market outcomes of interest. To facilitate a clear and comprehensive analysis, we will maintain our focus on the Uniform-Quadratic environment leveraging its full tractability.

Our argument hinges on the observation that the intermediary’s presence necessitates adjustments from the monopolist, specifically in the form of augmenting the optimal menu to counteract potential profit losses. This strategic response has important implications for the overall functioning of the market. In particular, the consumer's rents, the total surplus, and the average distortions in the market are non-monotone in the bias term. This leads to a surprising observation; on one hand, consumers would exhibit, if they had the option, a clear preference for intermediaries with minimal bias, as this configuration aligns more closely with their interests. On the other hand, when the disparity in bias between two intermediaries is marginal, a scenario may arise where an intermediary with a slightly higher level of bias could be deemed preferable. This interplay between intermediary bias and consumer surplus underscores the need for an understanding of how intermediary behavior influences market mechanisms and consumer welfare, especially when one has the goal of designing policies that facilitate transparency in the market.

Note that the total surplus, rents and distortions are given respectively by:
\begin{align*}
    TS=\EE_{G^*}[wq(w)-c(q(w))] \tag{Total Surplus}\\
    R=\EE_{G^*}[wq(w)-t(w)] \tag{Rents}\\
     D=\EE_{G^*}[q^{FB}(w)-q(w)] \tag{Distortons}
\end{align*}

Where $q^{FB}(w)$ is the efficient quality that type $w$ should receive. Since in our framework we have $c'(q)=q$, it follows that $q^{FB}(w)=w$.
The next three figures illustrate how these functions behave for different values of the intermediary's bias $b$. The different colors capture the different finite-item menus that are optimal as $b$ decreases.

Some comparisons are in order. When juxtaposed with a situation where the seller is the information provider, several differences become apparent. In our intermediary-driven context, while the consumer rents increase, there is a decrease in the average quality of products presented on the menu, as a result of the product range expansion. The intermediary's presence leads to a ``trade-off" of average product quality for enhanced buyer surplus. Additionally, fewer posterior consumer types participate in the mechanism as a result of their superior information regarding their match with products. Those consumers who do engage, receive products of inferior quality on average, yet they achieve higher rents. Consequently, the overall market surplus experiences a reduction.

In contrast, when compared to the standard monopolistic screening scenario---where consumers possess private knowledge of their valuations---the predictions change. In particular, the market exhibits an increase in total surplus as a result of higher monopolist profits despite the decrease in consumer rents. Furthermore, the extent of quality degradation is less pronounced, that is, on average, posterior consumer types receive qualities that are closer to the efficient level.

\begin{figure}
\begin{subfigure}{.5\textwidth}
  \centering
  \includegraphics[scale=0.5]{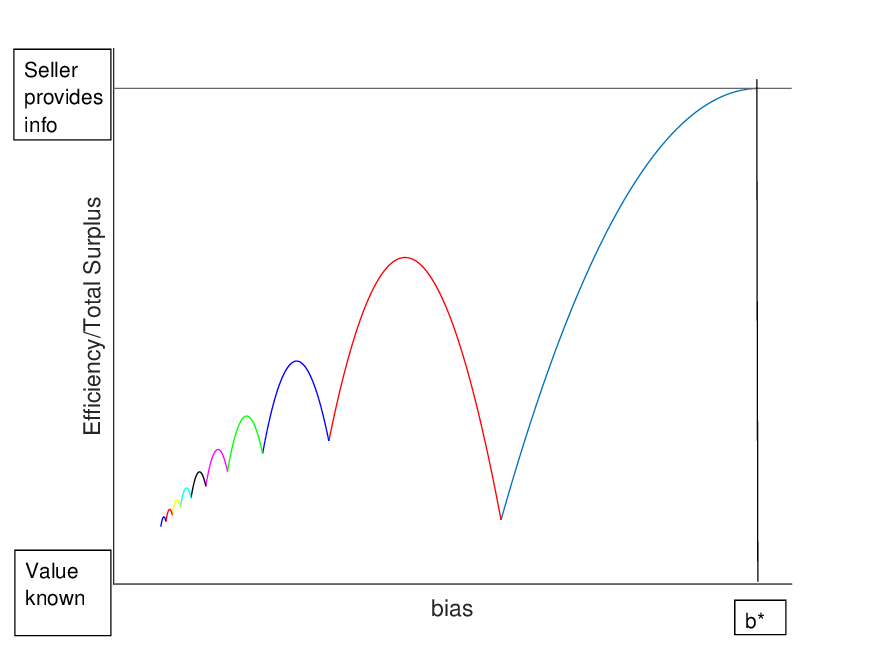}
  \caption{Total Surplus as $b$ changes.}

\end{subfigure}%
\begin{subfigure}{.5\textwidth}
  \centering
  \includegraphics[scale=0.5]{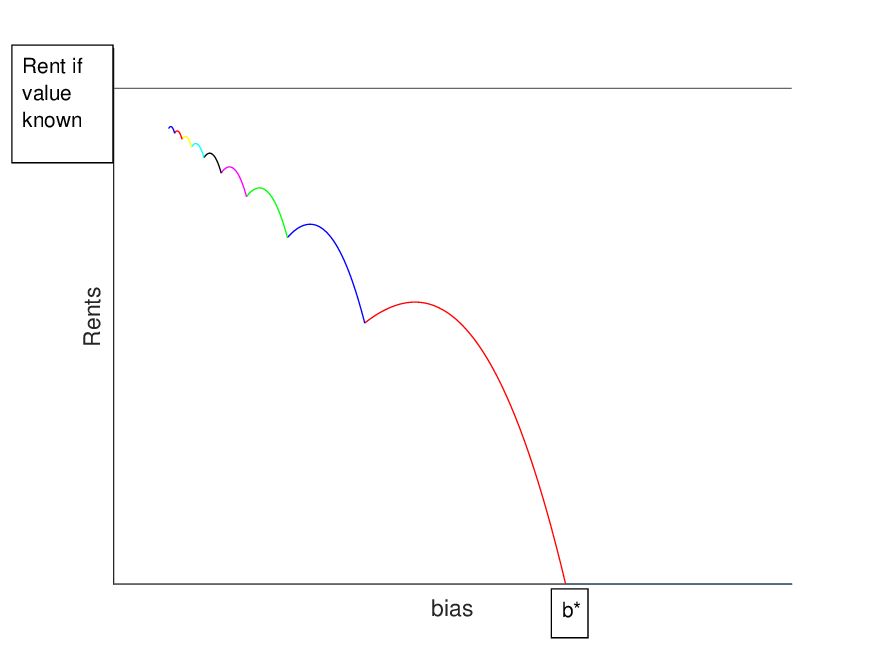}
  \caption{Rents as $b$ changes.}

\end{subfigure}
\caption{Total surplus and rents}
\end{figure}

The significance of these findings is important for applied researchers. Screening models effectively encapsulate the critical features of significant markets, encompassing sectors from cable television and health care to mobile telecommunications. There exists an extensive body of empirical research that delves into these models, with a specific focus on nonlinear pricing; a recent comprehensive survey can be found in \cite{perrigne_vuong}.

Notably, bringing data to these models enables researchers to evaluate quality deterioration and assess the degree to which firms wielding market power spawn inefficiencies within contexts plagued by informational asymmetries. This literature typically uses the first-order conditions of the monopolist's problem to identify variables of interest, thus placing substantial emphasis on the precision of the model specification.  Our results suggest that, by ignoring the presence of intermediaries, this approach may incorrectly estimate the real level of quality deterioration. Suppose that a researcher ignores the intermediary and tries to estimate quality degradation using the screening results of \cite{mussa_rosen_1978}. The researcher observes the distribution of signed contracts and the associated product qualities. Her goal is to estimate the distribution of types in the economy with the purpose of estimating the quality wedge between average posterior buyer types and qualities offered. As can be seen in Figure \ref{distortions}, ignoring the intermediary would result in overestimating the inefficiency in the market. 

\begin{figure}
  \centering
  \includegraphics[scale=0.5]{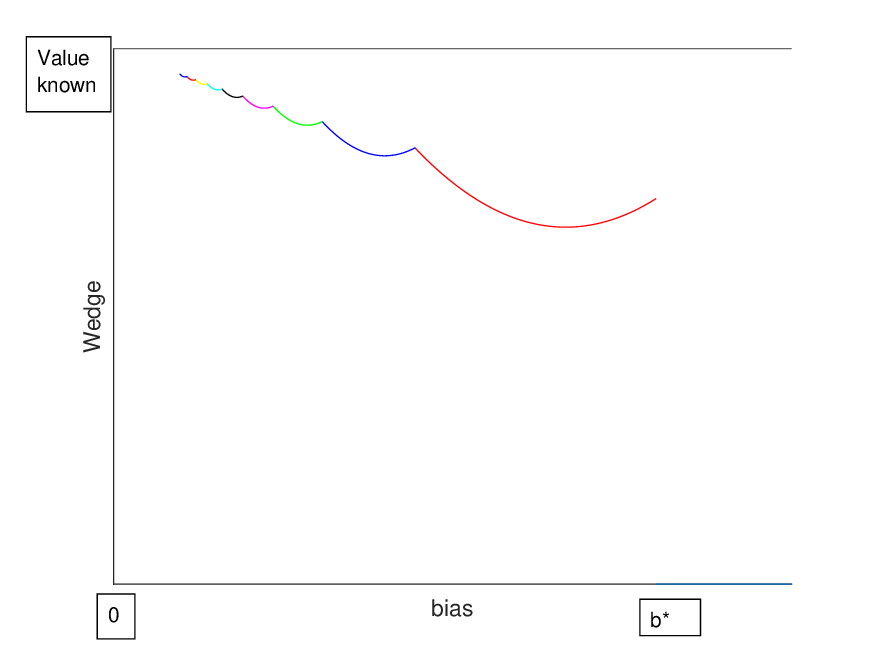}
  \caption{Distortions as $b$ changes.}
  \label{distortions}
  \end{figure}

If, on the other hand, she based her estimation on the theory developed in \cite{bergemann_et_al_2022}, with the purpose of estimating the inefficiencies present when the monopolist provides information to the consumer, she would underestimate such inefficiencies.

\subsection{Exogenous Restrictions on the Number of Items}

The analysis of finite-item menus garners significance due to practical considerations, typically implying limitations on the extent to which sellers can diversify their product offerings.To elucidate, consider the scenario where 
$b=0.001$ within the uniform-quadratic environment; the derived optimal menu features 250 items, a quantity that, realistically, a monopolist may find challenging to offer. This brings us to the crucial analysis of scenarios where the number of items a monopolist can include in the menu is externally constrained.

In situations characterized by such constraints, we observe that consumers tend to receive higher informational rents compared to scenarios where the seller can offer the optimal number of items. Despite this increase in informational rents, consumers find themselves at a disadvantage as their overall payoff is compromised relative to conditions allowing for product expansion. Similarly, the monopolist's profit also takes a hit under this imposed limitation, resulting in a net decrease in total surplus.

This underscores an important insight: the capability of a monopolist to augment the menu with an optimal variety of items stands to benefit both the seller and the consumer. Figures 8 and 9 provide a visual representation, illustrating the advantageous outcomes associated with optimal product expansion, and thus, emphasizing the criticality of such flexibility in product offerings for maximizing overall economic welfare.

Specifically, suppose that (i) the monopolist can offer up to two items on the menu and (ii) the monopolist can offer up to three items on the menu.\footnote{We use at most two or three items as a restriction observable in practice. Of course, the exact upper bound on the number of items is irrelevant.} In the previous section, we derived the optimal two-item menu. One can show that with at most three items, the monopolist-optimal menu is given by:
\begin{figure}
  \centering
  \includegraphics[scale=0.8]{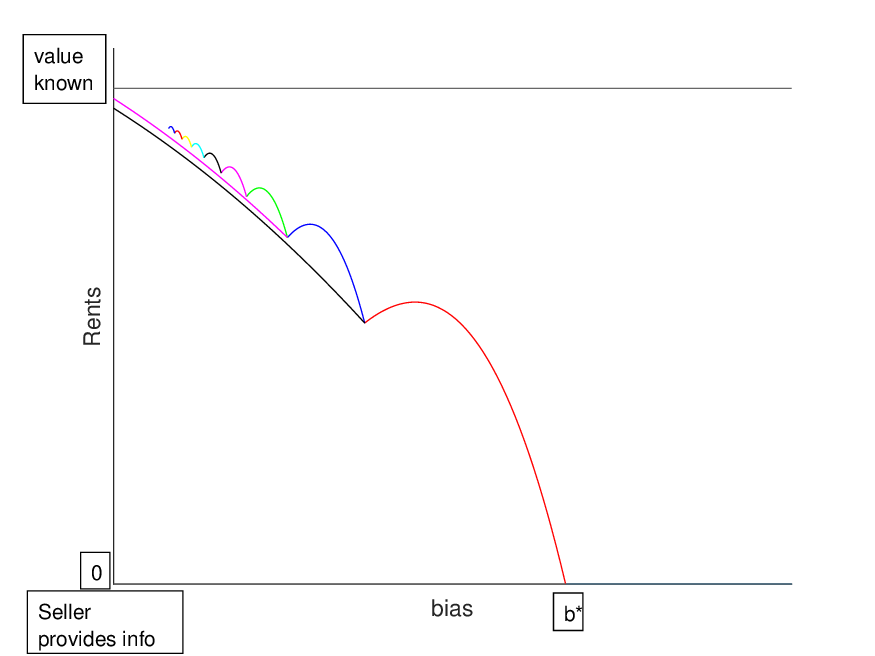}
  \caption{(i)Red-Black line: rent from at most 2 items.
   (ii) Red-Blue-Magenda line: rent from at most 3 items.}
  \label{restriction_on_items_rents}
  \end{figure}

\begin{itemize}
    \item If $b\geq 1/3$, offer a single item $q_1=\frac{2}{3}$.
    \item If $\frac{1}{5}\leq b <\frac{1}{3}$, offer a single item $q_=1-b$.
    \item If $\frac{1}{9}\leq b<\frac{1}{5}$, offer two items, $q_1=1-5b$ and $q_2=1-b$.
    \item If $\frac{1}{13}\leq b<\frac{1}{9}$, offer three items, $q_1=1-9b$, $q_2=1-5b$ and $q_3=1-b$.
    \item If $b<\frac{1}{13}$, offer three items, $q_1=\frac{2(b+1)}{7}$, $q_2=\frac{4(b+1)}{7}$ and $q_3=\frac{6(b+1)}{7}$.
\end{itemize}

\begin{figure}
  \centering
  \includegraphics[scale=0.8]{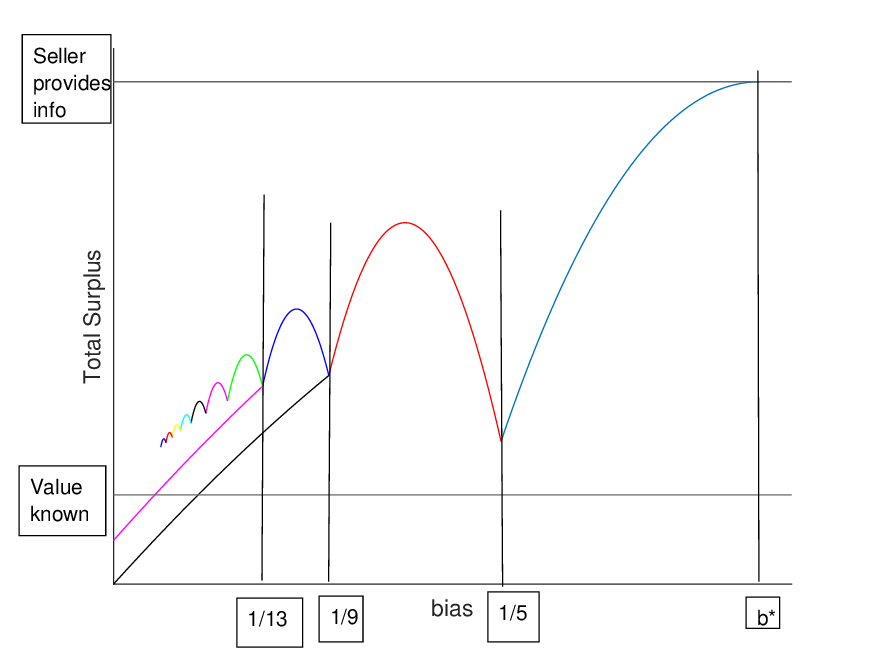}
  \caption{(i) Cyan-Red-Black line: total surplus from at most 2 items. 
  (ii) Cyan-Red-Blue-Magenta line: total surplus from at most 3 items.}
  \label{restriction_on_items_ts}
  \end{figure}

The limitation on the assortment of items becomes particularly pertinent when considering low values of $b$. Specifically, when b falls below $1/9$, the monopolist finds it advantageous to add a third product to her offerings. Similarly, when 
$b$ drops below $1/13$, incorporating a fourth item becomes the strategy. In both cases, the trajectory follows that with a further decrease in the intermediary's bias term, the monopolist would seek to introduce additional products of lesser quality to the menu.

\begin{figure}
  \centering
  \includegraphics[scale=0.8]{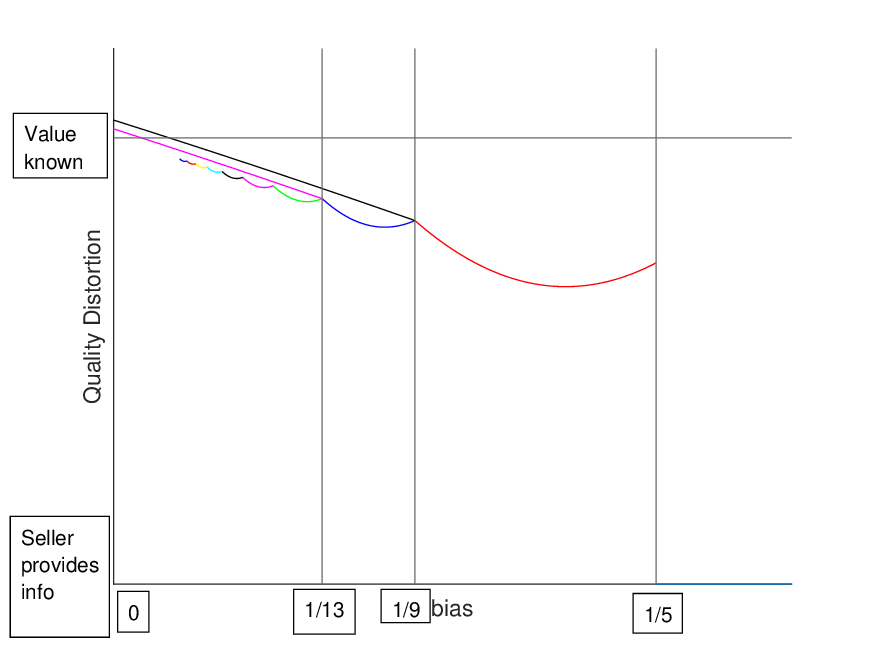}
  \caption{(i) Red-Black line: quality distortions from at most 2 items. 
  (ii) Red-Blue-Magenta line: quality distortions from at most 3 items.}
  \label{restriction_on_items_d}
  \end{figure}

Moreover, as we saw, the quality distortions when an intermediary is present are lower than in the standard screening environment. However, if the maximum number of products on the menu is exogenously restricted, these distortions will be higher than in the standard screening case when the intermediary's bias is sufficiently small, as illustrated in Figure 10.   

\section{Discussion: Connection to the Information Acquisition Framework}

Two recent papers by \cite{thereze_2022} and \cite{mensch_ravid_2022}, study a similar environment to ours with a notable distinction: rather than incorporating an intermediary, these works assume that consumers can obtain information about their product valuation, albeit at a cost. This alternate setup is also clearly natural, as it mirrors the real-world coexistence of both intermediaries and information-seeking consumers. We briefly compare and contrast our approach and results with that of \cite{thereze_2022} and \cite{mensch_ravid_2022}.

A commonality across all three studies is the assumption that consumers initially lack private information regarding the value of various products upon observing the menu posted by the monopolist. Consequently, the seller must anticipate consumer learning that occurs post-revelation of the menu. This requires the seller to strategically design the menu to influence either the intermediary's choices in our model or the consumer's information acquisition in \cite{thereze_2022} and \cite{mensch_ravid_2022}, all to her benefit. Methodologically, this strategic element imposes an extra constraint in the monopolist’s optimization problem in both scenarios. Yet, these constraints differ fundamentally, leading to a primary divergence: in \cite{thereze_2022} and \cite{mensch_2022}, the monopolist is compelled to leave the consumer moral-hazard rents because she cannot contract based on the consumer's learning decisions. This occurs as the consumer can shift the agency relationship's power balance by opting to learn, potentially against the seller's preferences. The rents the consumer receives stem not just from actual private information but also from potential information he could acquire. This implies that the distortions arising from costly information acquisition are more pronounced than what standard monopolistic screening predicts as per \cite{mussa_rosen_1978} and \cite{maskin_riley_1984}. Both \cite{thereze_2022} and \cite{mensch_ravid_2022} show that the no distortion-at-the-top result may be reversed. Contrarily, our model does not mirror this; the optimal finite-item menu features no distortion at the top. The reason is that in our framework, the monopolist can freely increase the quality she provides to the highest posterior type of consumer and the corresponding price, increasing her profits without violating the intermediary's obedience constraint. Additionally, our framework implies lesser overall distortions compared to a scenario where consumers have private information about their valuations.  Hence, researchers using our model might overestimate market inefficiencies, while those employing the models of \cite{thereze_2022} or \cite{mensch_ravid_2022} will underestimate them. Moreover, \cite{thereze_2022} shows that profits and consumer surplus are non-monotonic in the level of information costs, which is the parameter of interest in his model. In our work, the non-monotonicity of consumer payoffs is also present, but profits are decreasing in the intermediary's bias, which is our parameter of interest.

However, our core contribution relative to \cite{thereze_2022} and \cite{mensch_ravid_2022}, apart from analyzing this different framework which is clearly relative to understanding real-world questions, is highlighting the monopolist’s necessity to diversify the product range in the optimal menu as intermediary bias diminishes, thereby refining consumer learning. In essence, the seller must contemplate how her product offerings will impact consumer learning, given that consumer willingness to pay is information-dependent. For example, it might be futile to introduce innovative product attributes if consumers don’t make an effort to learn about them pre-purchase. The presence of an intermediary changes this, as consumers receive this information ``without charge." This insight also explains why sellers might pay commissions to influencers or specialized reviewers. This strategy not only allows some level of control over consumer learning, benefiting the seller but also reduces the necessity to offer an extensive variety of products, which might be impractical for various external reasons.

\section{Conclusion}

In this paper, we study a screening in which the consumer's endogenous private information is provided by an intermediary. While this intermediary wishes to maximize the consumer's payoff, he is also biased toward high-quality products. 

We characterize the profit-maximizing finite-item menu. Specifically, this menu coincides with the Mussa-Rosen menu if the intermediary's and the consumer's preferences coincide. Conversely, when the bias is large, it coincides with the one the seller would offer if she were providing information directly to the buyer.

As the intermediary's bias diminishes, the monopolist's optimal menu involves an expanded variety of products. This expansion mitigates the profit loss due to refined buyer's learning that the smaller intermediary's bias makes possible. Interestingly, consumer surplus exhibits a non-monotonic relationship with the intermediary's bias. While buyers prefer an intermediary with minimal bias over one with extreme bias, there can be scenarios where they favor the highly biased intermediary when the bias difference is not too significant. Furthermore, the seller's ability to respond to more accurate buyer's learning by offering a wider array of products proves beneficial for both the seller and the consumer and, therefore, increases the overall efficiency in the economy, especially relative to the case where the seller is restricted on the number of products she can offer.

Our model serves as a stylized benchmark and leaves several questions unanswered for future research. Firstly, we have assumed no interaction between the intermediary and the seller and have ruled out any transfer schemes that involve the intermediary. In reality, intermediaries often maintain contractual relationships with sellers. Investigating a model where sellers can use transfers to influence intermediary behavior is both interesting and applicable. Second, we have assumed that the information provision by the intermediary to the consumer takes place \textit{after} the seller posts the menu. The reverse timing is plausible as well. In future work, we plan to study how the optimal menu and the intermediary's information provision change in settings in which the intermediary provides information to the consumer before the monopolist posts the menu. Lastly, we intend to consider a broader range of the intermediary's objectives.

\medskip

\clearpage

\bibliographystyle{te}
\bibliography{references}

@unpublished{bergemann_et_al_2022,
    author = {Dirk Bergemann and Tibor Heumann and Stephen Morris},
    title = {Screening with Persuasion},
    note = {{a}rXiv preprint arXiv:2212.03360},
    year = {2022},
}

@article{deniz_kovac_2019,
    author = {Dizdar Deniz and Eugen Kovac},
    title = {A Simple Proof of Strong Duality in the Linear Persuasion Problem},
    journal = {Games and Economic Behavior},
    volume = {122},
    year = {2020},
    pages = {407--412},
}

@article{dixit_stiglitz_1977,
    author = {Avinash K. Dixit and Joseph E. Stiglitz},
    title = {Monopolistic Competition and Optimum Product Diversity},
    journal = {American Economic Review},
    volume = {67},
    year = {1977},
    pages = {297--308},
}

@article{spence_1980,
    author = {Michael A. Spence},
    title = {Multi-product Quantity-Dependent Prices and Profitability Constraints},
    journal = {Review of Economic Studies},
    volume = {47},
    year = {1980},
    pages = {821--841},
}

@article{perrigne_vuong,
    author = {Isabelle Perrigne and Quang Vuong},
    title = {Econometrics of Auctions and Nonlinear Pricing},
    journal = {Annual Review of Economics},
    volume = {11},
    year = {2019},
    pages = {27--54},
}

@article{bryn_hu_smith_2003,
    author = {Erik Brynjolfsson and Yu (Jeffrey) Hu and Michael D. Smith},
    title = {Consumer Surplus in the Digital Economy: Estimating the Value of Increased Product Variety at Online Booksellers},
    journal = {Management Science},
    volume = {49},
    number = {11},
    year = {2003},
    pages = {1580--1596},
}

@article{roessler_szentes_2017,
    author = {Anne-Katrin Roesler and Balázs Szentes},
    title = {Buyer-Optimal Learning and Monopoly Pricing},
    journal = {American Economic Review},
    volume = {107},
    number = {7},
    year = {2017},
    pages = {2072--2080},
}

@article{roessler_ravid_szentes_2022,
    author = {Doron Ravid and Anne-Katrin Roesler and Balázs Szentes},
    title = {Learning before Trading: On the Inefficiency of Ignoring Free Information},
    journal = {Journal of Political Economy},
    volume = {130},
    number = {2},
    year = {2022},
    pages = {346--387},
}

@article{ksm_2021,
    author = {Andreas Kleiner and Benny Moldovanu and Philipp Strack},
    title = {Extreme Points and Majorization: Economic Applications},
    journal = {Econometrica},
    volume = {89},
    number = {4},
    year = {2021},
    pages = {1557-1593},
}

@article{d_martini_2019,
    author = {Piotr Dworczak and Giorgio Martini},
    title = {The Simple Economics of Optimal Persuasion},
    journal = {Journal of Political Economy},
    volume = {127},
    number = {5},
    year = {2019},
    pages = {1993--2048},
}

@article{kolotolin_et_al_2022,
    author = {Anton Kolotilin and Timofiy Mylovanov and Andriy Zapechelnyuk},
    title = {Censorship as Optimal Persuasion},
    journal = {Theoretical economics},
    volume = {17},
    year = {2022},
    pages = {561--585},
}

@article{arieli_et_all_2023,
    author = {Itai Arieli  and Yakov Babichenko and Rann Smorodinsky and Takuro Yamashita},
    title = {Optimal Persuasion via bi-pooling},
    journal = {Theoretical Economicsy},
    volume = {18},
    number = {1},
    year = {2023},
    pages = {15--36},
}

@article{malenko_tsoy_2019,
    author = {Andrey Malenko and Anton Tsoy},
    title = {Selling to Advised Buyers},
    journal = {American Economic Review},
    volume = {109},
    number = {4},
    pages = {1323--1348},
    year = {2019},
}

@article{maskin_riley_1984,
    author = {Eric Maskin and John Riley},
    title = {Monopoly with Incomplete Information},
    journal = {RAND Journal of Economics},
    volume = {15},
    number = {2},
    pages = {171--196},
    year = {1984},
}

@article{mensch_2022,
    author = {Jeffrey Mensch},
    title = {Screening Inattentive Buyers},
    journal = {American Economic Review},
    volume = {112},
    note = {6},
    year = {2022},
    pages = {1949--1984},
}

@unpublished{mensch_ravid_2022,
    author = {Jeffrey Mensch and Doron Ravid},
    title = {Monopoly, Product Quality, and Flexible Learning},
    note = {{arXiv} preprint arXiv:2202.09985},
    year = {2022},
}

@article{mussa_rosen_1978,
    author = {Michael Mussa and Sherwin Rosen},
    title = {Monopoly and Product Quality},
    journal = {Journal of Economic Theory},
    volume = {18},
    number = {2},
    pages = {301--317},
    year = {1978},
}

@unpublished{thereze_2022,
    author = {{Jo\~{a}o} Thereze},
    title = {Screening Costly Information},
    note = {Working Paper, Princeton University},
    year = {2022},
}
\newpage
\medskip

\section*{Appendix}

\subsection*{Proof of Proposition \ref{existence}}

Fix the menu that the monopolist offers. Let $X=[0,\Bar{q}]\times \mathbb{R}_+$ and let $\Delta (X\times[0,1])$ be the set of Borel measures over $X\times [0,1]$. We endow $\Delta (X\times[0,1])$ with the weak* topology. Given the menu the monopolist offers, the intermediary's problem is a standard linear persuasion problem. Assuming that the buyer breaks indifferences in favor of the intermediary, the intermediary's utility function $u^I(\cdot)$ is upper-semi-continuous. The constraint set, that is, the set of mean preserving contractions of the prior, $MPC(F_0)$ is compact, and, thus, the intermediary's problem has a solution set $K(M)$ which is non-empty for every menu $M$. Let $\Tilde{X}$ be the collection of compact subsets of $X$ that contain the non-participation option $(0,0)$. Also, notice that we can assume that $\Bar{X}=[0,\Bar{q}]\times[0,\Bar{q}]$ since the buyer strictly prefers non-participation that paying transfer higher than $\Bar{q}$. The monopolist's problem is then written as
\begin{align*}
\max_{(M,G)\in\Tilde{X}\times\Delta (X\times[0,1])} \int t-c(q) G(d(q,t,\theta)) \\
\text{s.t.} \ \ G\in K(M)
\end{align*}
Let $\mathbf{\Tilde{X}}$ be the set of all compact non-empty subsets of $\Bar{X}$ and endow it with the Hausdorff metric. Denote by $\Bar{K}$ the restriction of $K(M)$ to $M\in\mathbf{\Tilde{X}}$. Then, let $GR (\Bar{K})$ be the graph of the restriction:
\begin{equation*}
    GR (\Bar{K})=\{(M,G)\in\mathbf{\Tilde{X}}\times\Delta (\Bar{X}\times[0,1]): G\in \Bar{K}(M)  \}
\end{equation*}
We can re-write the monopolist's problem as

\begin{align*}
\max_{(M,G)\in GR(\Bar{K})} \int t-c(q) G(d(q,t,\theta)) 
\end{align*}

Now, since $\mathbf{\Tilde{X}}$ is compact, it follows by Berge's maximum theorem that the correspondence $\Bar{K}$ is upper-hemi-continuous and has a closed graph. It follows that $GR (\Bar{K})$ is compact since it is a subset of the compact set $\mathbf{\Tilde{X}}\times\Delta (\Bar{X}\times[0,1])$. Thus, the monopolist's problem admits a solution. \qed

\subsection*{Proof of Proposition 2}

The proof follows from the arguments in the main text preceding the Proposition. \qed

\subsection*{Proof of Proposition 3}
Under Assumption 1 and $c'''(q)\ge 0$, the solution to the relaxed problem is a single-item
menu $(q^*,t^*)$ together with a cutoff $v^*$ and posterior mean
\[
w^*:=E(\theta\mid v^*\le \theta\le 1),
\]
where $t^*=w^*q^*$ and $(v^*,q^*)$ solve (OSIM). Let
\[
w_0^*:=E(\theta\mid 0\le \theta\le v^*).
\]

If the seller posts the single-item menu $(q^*,t^*)$, then the buyer purchases if and only if
\[
wq^*-t^*\ge 0
\quad\Longleftrightarrow\quad
w\ge \frac{t^*}{q^*}=w^*.
\]
Hence the intermediary's indirect utility is
\[
u_I(w)=
\begin{cases}
0,& w\in[0,w^*),\\
(w+b)q^*-t^*,& w\in[w^*,1].
\end{cases}
\]

The distribution induced by pooling states in $[0,v^*]$ and $[v^*,1]$ is the two-atom
distribution
\[
G_{BHM}^*(w)=
\begin{cases}
0,& w<w_0^*,\\
F_0(v^*),& w_0^*\le w<w^*,\\
1,& w\ge w^*.
\end{cases}
\]

We claim that $G_{BHM}^*$ solves the intermediary's problem if and only if
\[
b\ge w^*-v^*.
\]

First suppose that $b\ge w^*-v^*$. Define
\[
\lambda:=\frac{bq^*}{w^*-v^*}.
\]
Then $\lambda\ge q^*$. Consider
\[
p^*(w)=
\begin{cases}
0,& w\in[0,v^*],\\
\lambda (w-v^*),& w\in[v^*,1].
\end{cases}
\]
We verify that $p^*$ is a price function for $G_{BHM}^*$.

(i) $p^*\ge u_I$:
for $w<w^*$, we have $u_I(w)=0$, while $p^*(w)\ge 0$.
For $w\ge w^*$,
\[
p^*(w)-u_I(w)
=
\lambda (w-v^*)-q^*(w+b-w^*)
=
(\lambda-q^*)(w-w^*)\ge 0,
\]
where we used $\lambda(w^*-v^*)=bq^*$.

(ii) $p^*$ is convex, since its slope is $0$ on $[0,v^*]$ and $\lambda$ on $[v^*,1]$.

(iii) $p^*=u_I$ on $\operatorname{supp} G_{BHM}^*$:
because $w_0^*\le v^*$, we have $p^*(w_0^*)=0=u_I(w_0^*)$, and
\[
p^*(w^*)=\lambda (w^*-v^*)=bq^*=u_I(w^*).
\]

(iv) $\int p^*\,dG_{BHM}^*=\int p^*\,dF_0$:
since $p^*=0$ on $[0,v^*]$ and affine on $[v^*,1]$,
\[
\int p^*(w)\,dF_0(w)
=
(1-F_0(v^*))\,p^*(w^*)
=
\int p^*(w)\,dG_{BHM}^*(w).
\]

Thus $p^*$ is a price function for $G_{BHM}^*$, so $G_{BHM}^*$ solves the intermediary's
problem. Therefore $(q_{BHM}^*,G_{BHM}^*)$ satisfies (I-OB). Since it is optimal in the
relaxed problem, it is also optimal in the original problem.

Now suppose that $b<w^*-v^*$ and, toward a contradiction, that $G_{BHM}^*$ solves the
intermediary's problem. Since $u_I$ is piecewise affine with a single jump, it is Lipschitz on
a neighborhood of $0$ and on a neighborhood of $1$. Hence, by Deniz and Kovac (2020),
there exists a price function $p$ for $G_{BHM}^*$.

Because $G_{BHM}^*$ pools the interval $[0,v^*]$ at its mean $w_0^*$ and the interval
$[v^*,1]$ at its mean $w^*$, Jensen's inequality and condition (iv) imply that $p$ must be
affine on each of the two intervals $[0,v^*]$ and $[v^*,1]$.

On $[0,v^*]$, we have $u_I(w)=0$, $p\ge u_I$, and $p(w_0^*)=u_I(w_0^*)=0$. Since $p$
is affine on $[0,v^*]$, it follows that $p(w)=0$ for all $w\in[0,v^*]$.

Hence on $[v^*,1]$ we must have
\[
p(w)=\beta (w-v^*)
\]
for some slope $\beta$. Since $w^*\in \operatorname{supp}G_{BHM}^*$, condition (iii) gives
\[
p(w^*)=u_I(w^*)=bq^*,
\]
so
\[
\beta=\frac{bq^*}{w^*-v^*}.
\]
But $p\ge u_I$ on $[w^*,1]$, and both functions are affine there, with $p(w^*)=u_I(w^*)$.
Therefore the slope of $p$ must be at least the slope of $u_I$, that is, $\beta\ge q^*$. This
implies
\[
\frac{bq^*}{w^*-v^*}\ge q^*
\quad\Longrightarrow\quad
b\ge w^*-v^*,
\]
a contradiction.

Thus $G_{BHM}^*$ solves the intermediary's problem if and only if
\[
b\ge b^*:=w^*-v^*.
\]
This proves the proposition.

\qed

\subsection*{Proof of Proposition 4}
Let $v_0^*:=0$ and $v_{N+1}^*:=1$. After eliminating any redundant items, we may assume
\[
q_0^*:=0,\qquad t_0^*:=0,\qquad q_i^*>q_{i-1}^* \ \text{ for all } i=1,\dots,N.
\]
Write
\[
\mu_i:=F_0(v_{i+1}^*)-F_0(v_i^*),\qquad
w_i^*:=E(\theta\mid v_i^*\le \theta\le v_{i+1}^*),\qquad i=0,\dots,N.
\]
Thus $G_{BHM}^*$ is the atomic distribution that places mass $\mu_i$ at $w_i^*$.

Suppose, toward a contradiction, that for the posted menu
\[
(q_1^*,t_1^*),\dots,(q_N^*,t_N^*)
\]
the distribution $G_{BHM}^*$ solves the intermediary's problem. The intermediary's indirect
utility is piecewise affine with finitely many jumps, so it is Lipschitz on a neighborhood of $0$
and on a neighborhood of $1$. Hence, by Deniz and Kovac (2020), there exists a price
function $p$ for $G_{BHM}^*$.

For each $i=0,\dots,N$, Jensen's inequality gives
\[
\int_{v_i^*}^{v_{i+1}^*} p(\theta)\,dF_0(\theta)
\ge
\mu_i\,p(w_i^*).
\]
Summing over $i$ and using condition (iv) in Definition 1, we obtain equality. Therefore
equality must hold interval-by-interval, which implies that $p$ is affine on every interval
$[v_i^*,v_{i+1}^*]$.

For $i=1,\dots,N$, define
\[
u_i(w):=(w+b)q_i^*-t_i^*,\qquad u_0(w):=0.
\]
Fix some $i\in\{1,\dots,N\}$ and let $\beta_i$ denote the slope of $p$ on the interval
$[v_i^*,v_{i+1}^*]$.

Because $w_i^*\in \operatorname{supp}G_{BHM}^*$, condition (iii) gives
\[
p(w_i^*)=u_i(w_i^*).
\]
Also, since $v_i^*<w_i^*$, a buyer with posterior mean $v_i^*$ chooses item $i-1$ (or the
outside option when $i=1$). Hence
\[
p(v_i^*)\ge u_{i-1}(v_i^*).
\]
It follows that
\[
\beta_i
=
\frac{p(w_i^*)-p(v_i^*)}{w_i^*-v_i^*}
\le
\frac{u_i(w_i^*)-u_{i-1}(v_i^*)}{w_i^*-v_i^*}.
\]
Using
\[
w_i^*(q_i^*-q_{i-1}^*)=t_i^*-t_{i-1}^*,
\]
we obtain
\[
u_i(w_i^*)-u_{i-1}(v_i^*)
=
b(q_i^*-q_{i-1}^*)+q_{i-1}^*(w_i^*-v_i^*).
\]
Therefore
\[
\beta_i
\le
q_{i-1}^*+\frac{b(q_i^*-q_{i-1}^*)}{w_i^*-v_i^*}.
\]

On the other hand, $p$ is affine on $[v_i^*,v_{i+1}^*]$, it satisfies
$p(w_i^*)=u_i(w_i^*)$, and it must lie above $u_i$ on $[w_i^*,v_{i+1}^*]$. Since $u_i$ is
affine on that interval with slope $q_i^*$, this is possible only if
\[
\beta_i\ge q_i^*.
\]
Combining the two inequalities yields
\[
q_i^*
\le
q_{i-1}^*+\frac{b(q_i^*-q_{i-1}^*)}{w_i^*-v_i^*}.
\]
Because $q_i^*>q_{i-1}^*$, we conclude that
\[
b\ge w_i^*-v_i^*.
\]
Since this must hold for every $i=1,\dots,N$, we get
\[
b\ge \max_{i=1,\dots,N}(w_i^*-v_i^*)=\hat b,
\]
contradicting the hypothesis $b<\hat b$.

Hence $G_{BHM}^*$ cannot solve the intermediary's problem. Therefore
$(q_{BHM}^*,G_{BHM}^*)$ violates (I-OB) and cannot be monopolist-optimal.
\qed

\subsection*{Proof of Proposition \ref{opt_types}}
\subsubsection*{Preliminaries}
In order to prove the Proposition, we will be using results established in \cite{mensch_ravid_2022}. First, for an increasing function $l:[0,1]\to [x,y]$, we say that $l $ is constant around $w$ if it is constant in some open neighbor of $w$. If it is not constant, we say $l$ is strictly increasing. We let $l_- (w)$ and $l_+ (w)$ be the left and right limits of $l$ at $w$, respectively. If $l $ is also differentiable, we let $l_+ (w)$ denote its right derivative at $w$.

Now, we say that an allocation $q$ jumps towards efficiency if

$$q(w)\in \argmax_{\tilde{q}\in [q_-(w), q_+ (w)]} [w\tilde{q} -c(\tilde{q}]$$

\cite{mensch_ravid_2022} show that one can take any allocation and replace it with an allocation that jumps towards efficiency without reducing the monopolist’s
profits. Consequently, focusing on allocations that jump towards efficiency is without loss
of optimality. Second, they show that allocations that jump towards efficiency convey a
technical benefit: the monopolist's profit function is an upper-semicontinuous function. Therefore, we restrict attention to allocations that jump towards efficiency. 

Next, we define a \textbf{bias-cancelling} mechanism. We call an increasing function $pd:[0,1]\to [0,\Bar{q}]$,  with the property that is constant around any $w$ at which $I_G (w)>0$ a $G$-marginal dual price function. We let $\underline{w}\equiv \min \supp G$ and $\Bar{w}\equiv \max\supp G$. The following Lemma is the direct analog of Lemma 1 in \cite{mensch_ravid_2022} for our environment, and its proof goes through, accounting for minor changes, verbatim.

\begin{lemma}\label{l_marginal_price}
Fix any allocation $q$. Then $G$ is a solution to the intermediary's problem if and only if a $G$-marginal price $pd$ exists such that the function

$$p_{q,pd}(w) := bq(\underline{w}) +\int^{\underline{w}}_0 q(\Bar{w}) d\Bar{w}+\int_{\underline{w}}^w pd(\Bar{w}) d\Bar{w}$$

lies weakly below $bq(w) +\int^{w}_0 q(\Bar{w}) d\Bar{w}$ for all $w$ and is equal to this expression for all $w\in \supp G$.
Moreover, one can choose $pd(w)=bq'_+ (w)+ q(w)$ for all $w\in\supp G$.
\end{lemma}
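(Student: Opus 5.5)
The plan is to follow the structure of Lemma 1 in \cite{mensch_ravid_2022} and translate the dual price function of Definition \ref{price_function} into its ``marginal'' form. Write $u^I(w)=bq(w)+\int_0^w q(s)\,ds$ and $p:=p_{q,pd}$. Because $pd$ is increasing, $p$ is convex and absolutely continuous with right derivative $pd$. Moreover, $p(\underline{w})=u^I(\underline{w})$ by construction.

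\emph{Step 1 (key identity).} The first step is an integration-by-parts identity, valid for every $G\in MPC(F_0)$ and every increasing $pd$ (I will not need $pd$ to be a $G$-marginal price for the identity itself). Writing $p(w)=p(0)+\int_0^w pd$ and using Fubini and $I_G(1)=0$, I obtain
\begin{equation*}
\int_0^1 p\,dF_0-\int_0^1 p\,dG=\int_0^1 pd(w)\,(G-F_0)(w)\,dw=\int_{[0,1]} I_G(w)\,d\,pd(w)\ \ge 0 .
\end{equation*}
Since $I_G\ge 0$ and $pd$ is increasing, the right-hand side is nonnegative. It vanishes exactly when $pd$ puts no increase where $I_G>0$, which is precisely the defining property of a $G$-marginal price. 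Hence, for a $G$-marginal $pd$, condition (iv) of Definition \ref{price_function} holds: $\int p\,dG=\int p\,dF_0$. For an arbitrary $H\in MPC(F_0)$, the same identity gives $\int p\,dH\le\int p\,dF_0$.

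\emph{Step 2 (sufficiency).} Given a $G$-marginal $pd$, equality $p=u^I$ on $\supp G$ yields $\int u^I\,dG=\int p\,dG=\int p\,dF_0$. I then compare with any $H\in MPC(F_0)$ through the one-sided comparison between $u^I$ and $p$ that the lemma imposes, together with $\int p\,dH\le \int p\,dF_0$ from Step 1. This is exactly the Dworczak--Martini verification argument, with conditions (ii)--(iv) of Definition \ref{price_function} supplied by convexity, equality on the support, and Step 1.

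\emph{Step 3 (necessity).} If $G$ is optimal, I invoke \cite{deniz_kovac_2019}; its hypotheses hold because $u^I$ is Lipschitz near $0$ and near $1$, as noted in the text. This gives a convex dual price $\tilde p$ with $\tilde p=u^I$ on $\supp G$ and $\int\tilde p\,dG=\int\tilde p\,dF_0$. I set $pd:=\tilde p'_+$, which is increasing. Step 1 applied to $\tilde p$ shows that $\int I_G\,d\,pd=0$, so $pd$ is constant around every $w$ with $I_G(w)>0$; that is, $pd$ is $G$-marginal. Then $p_{q,pd}$ and $\tilde p$ share the same derivative and agree at $\underline{w}$, so they coincide on $[\underline{w},1]$. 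On $[0,\underline{w})$, $pd$ can be taken constant, since $I_G>0$ there unless $\underline{w}=0$.

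\emph{Step 4 (the choice $pd=bq'_++q$ on $\supp G$).} At a support point $w$, $p$ touches $u^I$. If $w$ is interior to an interval of support points, the right derivatives must coincide, giving $pd(w)=(u^I)'_+(w)=bq'_+(w)+q(w)$. At an isolated atom, $pd$ may be chosen anywhere between the one-sided slopes admitted by the tangency. I then argue that the right derivative of $u^I$ is one admissible choice, using that $q$ jumps towards efficiency and is right-continuous under our tie-breaking.

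The main obstacle is the orientation of the comparison in Steps 2--3. The certificate logic of Steps 1--2 requires that the inequality between $p_{q,pd}$ and $u^I$, combined with $\int p\,dH\le\int p\,dF_0$, bound $\int u^I\,dH$ in the right direction. I will therefore carry out the argument exactly as in the proof of Lemma 1 of \cite{mensch_ravid_2022}, checking at each line that the one-sided relation between $p_{q,pd}$ and $u^I$, with equality on $\supp G$, is the one used there. I will also check carefully, at the jump points of $q$, that upper semicontinuity of $u^I$ preserves the tangency conditions.
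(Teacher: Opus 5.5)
Your Steps 1--3 follow the route the paper indicates: it only asserts that Lemma 1 of \cite{mensch_ravid_2022} carries over verbatim, with $p_{q,pd}$ used as a Dworczak--Martini certificate and, conversely, a marginal price obtained by differentiating a dual price function. The gap is the one you call the ``main obstacle''. It cannot be closed by checking Mensch--Ravid line by line, because with ``$p_{q,pd}$ lies weakly below $u^I$'' the equivalence is false in both directions.

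\emph{Sufficiency fails.} Take $q(w)=w$ (with $\bar q\ge 1+b$), so $u^I(w)=bw+w^2/2$, and take $G=\delta_m$ with $m=\EE\theta$. Since $I_G>0$ on $(0,1)$, the constant $pd\equiv b+m$ is $G$-marginal. Then $p_{q,pd}$ is the tangent line to $u^I$ at $m$, so it lies below $u^I$ and touches it on $\supp G$. Yet full disclosure beats $\delta_m$ by $\tfrac12\mathrm{Var}(\theta)>0$.

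\emph{Necessity fails.} In Example~\ref{mainexample} with $b\ge 1/3$, the two-atom $G^*_{BHM}$ is optimal by Proposition~\ref{high-b}. But no continuous $p$ can satisfy $p\le u^I=0$ on $[1/3,2/3)$ together with $p(2/3)=u^I(2/3)=2b/3>0$.

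Your own Step 3 in fact delivers $\tilde p\ge u^I$. The verification you invoke in Step 2 needs condition (i) of Definition~\ref{price_function}, which is exactly the reverse of what the lemma supplies. So what your argument proves is the lemma with ``weakly above''. That is also what Propositions~\ref{high-b}--\ref{opt_types} actually use, since their certifying $p$ lies above $u^I$. With that correction, Step 2 is just $\int u^I\,dH\le\int p\,dH\le\int p\,dF_0=\int p\,dG=\int u^I\,dG$, and Steps 1--3 go through with one caveat. The Dizdar--Kovac hypothesis is not automatic for an arbitrary $q$: every jump of $q$ is a jump of $u^I$, so $u^I$ is not Lipschitz near $1$ if $q$ jumps at $1$ or its jumps accumulate there.

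Step 4 has a separate gap, and the ``moreover'' clause is not true as stated. First, jumping towards efficiency is not a property of an arbitrary fixed $q$. More importantly, it does not help at support points where $q$ jumps. If $w\in\supp G$ is an atom strictly inside its pooling interval, then $I_G(w)>0$. So $pd$ must be constant near $w$, and $pd(w)$ is the slope of the affine price function there. You have no freedom ``between one-sided slopes''. If $q$ jumps at $w$, then $u^I$ jumps up at $w$ (as $b>0$), the left side imposes nothing, and tangency from above only gives $pd(w)\ge bq'_+(w)+q(w)$.

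Concretely, take Example~\ref{mainexample} with $b>1/3$, where $q$ is right-continuous and does jump towards efficiency. Any certifying $p$ (with the corrected orientation) must vanish on $[0,1/3]$ and be affine on $[1/3,1]$ through $(2/3,2b/3)$. Hence $pd\equiv 2b$ on $(1/3,1)$, whereas $bq'_+(2/3)+q(2/3)=2/3$. If $2b>\bar q$, the same example also breaks necessity through the codomain $[0,\bar q]$ imposed on marginal prices.

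What you can establish is the clause at support points where $u^I$ is differentiable. There $p'_-(w)\le (u^I)'(w)\le p'_+(w)$, so you may set $pd(w)=(u^I)'(w)$. You should restrict the claim accordingly. Note that jump points of $q$ are exactly the support points at which the paper invokes this clause in the proof of Proposition~\ref{opt_types}.
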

The key observation is that marginal dual price functions are (almost everywhere) derivatives of dual price functions. The requirement that marginal price
functions are increasing corresponds to the convexity of dual price functions. The restriction
that marginal prices are constant around signal realizations with a slack MPS constraint corresponds to price functions being affine over the same region. Consequently, one can
use $p_{q,pd}$ as a price function certifying the optimality of $G$ whenever $p_{q,pd}$ satisfies the
lemma’s desiderata. Conversely, whenever $G$ is optimal, one can obtain an $G$- marginal price function satisfying the lemma’s requirements by taking a derivative of the dual price function delivered by \cite{d_martini_2019}.

Now, define an allocation $q^{BC}$ as

\[
q^{BC}(w) = \begin{cases}
pd(w) - bq'_+ (w) & \text{if}~w \in (\underline{w}, \Bar{w}) \\
\max\{pd(\underline{w})-bq'_+ (w),0\} & \text{if}~ w\in [0,\underline{w}]\\
\min\{pd(\Bar{w})-bq'_+ (w),\Bar{q}\} & \text{if}~ w\in [\Bar{w},1] 
\end{cases}
\]

We say an allocation $q$ is $G$-bias-canceling if $q=q^{BC}$ for some $G$-marginal price $pd$. We refer to a mechanism as bias-cancelling if its allocation is $G$-bias-canceling. 
The following result is a direct analog of Theorem 2 in \cite{mensch_ravid_2022} in our environment. Again, its proof goes through verbatim accounting for minor details.

\begin{lemma}\label{BC}
It is without loss of optimality to restrict attention to bias-canceling mechanisms.

\end{lemma}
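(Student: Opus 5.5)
The plan is to adapt the argument behind Theorem 2 of \cite{mensch_ravid_2022}. We start from an arbitrary outcome $(q,G)$ that satisfies (B-IC), (B-IR) and (I-OB), and we replace $q$ by a $G$-bias-canceling allocation $q^{BC}$ without lowering profit. As discussed above, we may assume throughout that $q$ jumps towards efficiency. Since $G$ solves the intermediary's problem given $q$, Lemma \ref{l_marginal_price} supplies a $G$-marginal price $pd$ with three properties: $p_{q,pd}\le u^I_q:=bq+\int_0^{\cdot}q$ everywhere, equality holds on $\supp G$, and $pd=bq'_++q$ on $\supp G$. We then define $q^{BC}$ from this $pd$ as in the display above. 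On $(\underline w,\bar w)\cap\supp G$ this gives $q^{BC}=q$, so on the support the only changes are in the gaps of $\supp G$ and outside $[\underline w,\bar w]$.

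The proof then has three steps. \emph{Step 1 (buyer IC).} Show that $q^{BC}$ is increasing and takes values in $[0,\bar q]$. The function $pd$ is increasing and constant on every region where $I_G>0$; in particular it is constant on each gap of $\supp G$ lying inside a pooling interval. On such gaps we will argue that we may take $q'_+=0$ without loss, namely after the jump-towards-efficiency normalization, so that $q^{BC}$ inherits monotonicity from $pd$. The truncations by $0$ and $\bar q$ at the ends preserve monotonicity. \emph{Step 2 (obedience).} Show that $G$ still solves the intermediary's problem under $q^{BC}$. By construction, the derivative of $u^I_{q^{BC}}$ on $(\underline w,\bar w)$ is $pd$, so $p_{q^{BC},pd}$ coincides with $u^I_{q^{BC}}$ there. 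Below $\underline w$ and above $\bar w$, the truncations make $u^I_{q^{BC}}$ lie weakly above the affine continuation, and $pd$ remains a valid $G$-marginal price. Lemma \ref{l_marginal_price} then certifies that $G$ is optimal. \emph{Step 3 (profit).} Show that profit under $(q^{BC},G)$ is at least profit under $(q,G)$. Because $q^{BC}=q$ on $\supp G$, costs and the gross value $wq(w)$ are unchanged on the support. Profit therefore changes only through the buyer rents $\int_0^w q^{BC}$ evaluated at support points.

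Step 3 is the main obstacle. Take consecutive support points $w_1<w_2$. Equality of $p_{q,pd}$ and $u^I_q$ at both endpoints gives
\[
\int_{w_1}^{w_2}pd=b\bigl(q(w_2)-q(w_1)\bigr)+\int_{w_1}^{w_2}q .
\]
At the same time, $\int_{w_1}^{w_2}q^{BC}=\int_{w_1}^{w_2}pd-b\int_{w_1}^{w_2}q'_+$. Since $q$ is increasing, its absolutely continuous part satisfies $\int q'_+\le q(w_2)-q(w_1)$, with the difference equal to the sum of the jumps of $q$. The comparison therefore reduces to controlling those jumps. We would use jump-towards-efficiency together with the inequality $p_{q,pd}\le u^I_q$ on the interior of the gap, aiming to show that on every gap the rent increments satisfy $\int q^{BC}\le\int q$. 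Below $\underline w$, the choice $q^{BC}=\max\{pd(\underline w),0\}$ is constant and at most $q$ there, because $p_{q,pd}\le u^I_q$; hence the rent of the lowest type weakly falls. Summing these comparisons over gaps yields weakly lower rents for every $w\in\supp G$. We then use (B-IR) to set $u_0=0$, and conclude that $(q^{BC},G)$ is feasible and at least as profitable. Hence restricting attention to bias-canceling mechanisms is without loss of optimality.
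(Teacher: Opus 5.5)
Your outline is the natural way to flesh out the paper's one-line appeal to Theorem 2 of Mensch and Ravid, but it fails exactly where that analogy fails. In Mensch--Ravid, the term being compensated is the buyer's learning cost, which does not depend on the allocation. Adding its derivative to $pd$ therefore makes the agent's net payoff have derivative exactly $pd$. Here the distortion is $bq(w)$, which moves with the allocation. After the replacement, the intermediary's payoff is $u^I_{q^{BC}}=bq^{BC}+\int_0^{\cdot}q^{BC}$. Its increments on $(\underline w,\bar w)$ are $pd\,dw+b\,(dq^{BC}-q'_+\,dw)$, not $pd\,dw$. In particular, wherever $q^{BC}$ jumps, $u^I_{q^{BC}}$ jumps by $b$ times as much, while $p_{q^{BC},pd}$ is continuous. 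So Step 2 is false.

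Step 3 fails for the same reason, and your own identity shows it. For consecutive support points,
\[
\int_{w_1}^{w_2}q^{BC}-\int_{w_1}^{w_2}q=b\Bigl[q(w_2)-q(w_1)-\int_{w_1}^{w_2}q'_+\Bigr]\geq 0 ,
\]
so rents weakly \emph{rise}, and strictly whenever $q$ jumps between the two points. Jump-towards-efficiency only selects $q(w)\in[q_-(w),q_+(w)]$ at a discontinuity. It does not remove jumps, so it cannot flip this sign. For step functions, which is precisely how the lemma is used in the proof of Proposition \ref{opt_types}, $q'_+\equiv 0$ and the entire increment is jump.

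A concrete check: take the uniform--quadratic model with $b\in[1/9,1/5)$ and the two-item menu $q_1=1-5b$, $q_2=1-b$ with pooling intervals $[0,1-4b]$, $[1-4b,1-2b]$, $[1-2b,1]$. This menu is obedient, certified by $p=\max\{0,(w+b)q_1-t_1,(w+b)q_2-t_2\}$. Taking $pd=p'$ gives $q^{BC}=pd$, which simply moves the cutoffs from $1-3b$ and $1-b$ down to $1-4b$ and $1-2b$. Two things then go wrong:
\begin{itemize}
\item The posterior types $1-3b$ and $1-b$ gain rents $bq_1$ and $bq_2$, so profit falls.
\item $G$ stops being obedient. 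The new intermediary payoff equals $bq_1>0$ at $w=1-4b$, whereas any price function for $G$ must vanish on $[0,1-4b]$: it is affine there, nonnegative, and zero at the interior atom.
\end{itemize}
So $q\mapsto q^{BC}$ preserves neither (I-OB) nor profit, and no refinement of Steps 1--3 can rescue the plan. What is missing is a way to cancel a bias that is evaluated at the \emph{new} allocation.

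The difficulty lies in the notion as defined, not only in your write-up. For a step function, $q=q^{BC}$ forces $q=pd$ on $(\underline w,\bar w)$, so $q$ can jump only where $I_G=0$. The paper's own optimal menus jump at atoms $w_i^*$ of $G$, around which $pd$ must be constant, so they are not bias-canceling.

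Two further slips:
\begin{itemize}
\item Lemma \ref{l_marginal_price} as printed has the inequality reversed relative to Definition \ref{price_function}: a certificate must lie weakly above $u^I$. With ``below,'' full pooling would be certified optimal for every convex $u^I$. Both your Step 2 truncation argument and your lower-tail comparison in Step 3 rely on the printed direction.
\item Jump-towards-efficiency does not let you set $q'_+=0$ on gaps in Step 1. Changing $q$ on a gap changes $u^I$ there and the rents of every higher type.
\end{itemize}
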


Equipped with these results, we are ready to prove the Proposition.

\begin{proof}
    By Lemma \ref{BC}, we focus on bias-cancelling mechanisms. Since we also restrict attention to finite item menus, it follows that the allocation $q$ is a step function. That is, we have

    \[
q(w) = \begin{cases}
0 & \text{if}~w \in [0, w_1) \\
q_1& \text{if}~ w\in [w_1,w_2)\\
\cdots \\
q_N & \text{if}~ w\in [w_N,1] 
\end{cases}
\]

    Thus, for all $w\in\supp G$ we have that $q'_+(w)=0$ and thus, by Lemma \ref{l_marginal_price} we can choose $pd(w)=q(w)$.
    This implies that the slope of the dual price function must be equal to the allocation for $w\in\supp G$, in order for $G$ to satisfy the (I-OB) constraint, given allocation $q$. Since the intermediary's indirect utility function $u^I(w)$ also has a slope equal to the allocation for all $w\in\supp G$, it follows that it must be the case that $u^I(w)=p(w)$ for all $w\in \bigcup_{i=1}^N [w_i,w_{i+1}]\cup [w_N,1]$. 

    Now, consider a finite item menu $(q_i,t_i)_{i=1}^N$ together with a distribution $G$. The previous argument, together with the remaining requirements for a valid dual price function, implies that the dual price function that makes $G$ satisfy the (I-OB) constraint must be given by 

    \[
p(w) = \begin{cases}
0 & \text{if}~w \in \left[0, \frac{t_1}{q_1}-b\right) \\
(w+b)q_1-t_1 & \text{if}~ w\in \left[\frac{t_1}{q_1}-b, \frac{t_2-t_1}{q_2-q_1}-b\right]\\
\cdots\\
(w+b)q_N-t_N & \text{if}~ w\in \left[\frac{t_N-t_{N-1}}{q_N-q_{N-1}}-b, 1\right]
\end{cases}
\]

and, thus, it must be the case that $G$ pools types in the intervals 
    \begin{align*}
    \left[0,\frac{t_1}{q_1}-b\right],\left[\frac{t_1}{q_1}-b, \frac{t_2-t_1}{q_2-q_1}-b\right],...,\left[\frac{t_i-t_{i-1}}{q_{i}-q_{i-1}}-b,\frac{t_{i+1}-t_{i}}{q_{i+1}-q_{i}}-b\right],...,\left[\frac{t_N-t_{N-1}}{q_{N}-q_{N-1}}-b,1\right]
    \end{align*}
    and has support given by
    
    \begin{equation*}
        \supp G= \{w_0,w_1,w_2,\cdots w_N\}
    \end{equation*}

where 
\begin{align*}
    w_0= & \EE\left(\theta|0\leq \theta\leq \frac{t_1}{q_1}-b\right)\\
    w_1= & \EE\left(\theta|\frac{t_1}{q_1}-b\leq\theta\leq \frac{t_2-t_1}{q_2-q_1}-b\right)\\
    w_i= & \EE\left(\theta|\frac{t_i-t_{i-1}}{q_{i}-q_{i-1}}-b\leq \theta\leq \frac{t_{i+1}-t_{i}}{q_{i+1}-q_{i}}-b\right) \ \ \text{for} \ \ i=2,...,N-1\\
    w_N= & \EE\left(\theta|\frac{t_N-t_{N-1}}{q_{N}-q_{N-1}}-b\leq \theta\leq 1\right)
    \end{align*}
and the expectation is taken with respect to the prior distribution $F_0$.

Thus, from now on, we need only consider finite-item menus and distributions that take the aforementioned form and look for the optimal among them. Therefore, for each $N$, for a monopolist who wants to implement an $N$ item menu, the problem reduces to one a la \cite{maskin_riley_1984} with the caveat that the probability of each type in the support depends on the menu the monopolist posts. 
 Let $v_1=t_1/q_1$, $v_i=(t_i-t_{i-1})/(q_i-q_{i-1})$ for $i=2,...,N$. Then, each posterior type happens with probabilities given by $\omega_0:=F_0(v_1-b)$, $\omega_i := F_0(v_i-b)-F_0(v_{i-1}-b)$, $i=2,\cdots, N$ and $\omega_N :=1-F_0( v_N-b)$.

 The monopolist's problem, thus, reduces to 

 \begin{align*}
    \max_{\{t_i\}_{i=1,\dots,N},\{q_i\}_{i=1,\dots,N}} \sum_{i=1}^{N-1}\left[ \omega_i(t_i-c(q_i)) \right]+\omega_N(t_N-c(q_N))
\end{align*}
subject to:
the (B-IR) constraints, $\{w_i q_i-t_i\}\geq 0$, $i=1,\cdots, N$, downward and upward (B-IC) constraints. Standard arguments imply that the only non-redundant constraints are the local downward (B-IC) constraints and the (B-IR) constraint for type $w_1$. 

We now argue that these constraints must be binding\footnote{We show this for the local downward (B-IC) constraints as the argument for the (B-IR) constraint for type $w_1$ is exactly the same}. Suppose that for some $i$, we have $w_i q_i-t_i>0$. Consider the following monopolist problem: the menu stays the same except $t_i$, and now the monopolist chooses $t_i$ to maximize the profit received from types in the interval $[v_i-b,v_{i+1}-b]$. The solution to this problem will be the same as the monopolist setting $t_i=t_{i-1}+\EE(\theta|v_i-b\leq \theta\leq v_{i+1})q_i$ and choosing $v_i$ to maximize profit from the interval $[v_i-b, v_{i+1}-b]$. Note that here we take the given value $v_{i+1}=(t_{i+1}-t_{i})/(q_{i+1}-q_i)$ from the original menu to calculate the expectation. The monopolist's problem is, thus, given by

\begin{align*}
    \max_{v_i} \left[ \left(F_0(v_{i+1}-b)-F_0(v_i-b) \right) \left(t_{i-1}+\EE(\theta|v_i-b\leq \theta\leq v_{i+1})q_i-c(q_i)\right) \right]
\end{align*}

Denote $\hat{v_i}$ the solution and $\hat{t}_i=t_{i-1}+\EE (\theta|\hat{v}_i-b\leq \theta\leq v_{i+1}-b)q_i$. Let $\epsilon:=\hat{t}_i-t_i$. Consider the menu $\{\Bar{t}_j,q_j\}$ where $\Bar{t}_j=t_j$ for $j<i$ and $\Bar{t}_j=t_j+\epsilon$ for $j\geq i$. By construction, the new menu satisfies all constraints, and the monopolist's profit under the new menu with binding downward IC constraint for type $i$ is higher than the original profit. Thus, local downward (B-IC) constraints must be binding at the optimum, which implies that $w_i=v_i$ for $i=1,\cdots,N$.

 Therefore, posterior types are obtained by starting from the last equation and obtaining $w_N$, then using this value to obtain $w_{N-1}$ and continuing this way until all $N$ posterior types are obtained. Notice that for each prior $F_0$ and each $N$, there is a unique distribution $G$ with support $\{w_0,w_1,\cdots,w_N\}$ that can be the posterior distribution of types induced by a finite item menu that satisfies the (I-OB) constraint. This follows by the increasing hazard rate property of $F_0$. Specifically, because $F_0$ has increasing hazard rate, the equation 
$w_N=  \EE\left(\theta|w_N-b\leq \theta\leq 1\right)$  has a unique solution and likewise, so does equation $w_{N-1}= \EE\left(\theta|w_{N-1}-b\leq \theta\leq w_N\right)$ that pins down $w_{N-1}$ and so on. Thus, in this sense, this constraint completely pins down the possible posterior buyer types.with the $N$ types of buyer corresponding to $\{w_0,w_1,\cdots,w_N\}$, with probabilities $F(w_1-b)$, $\left(F_0(w_{j+1}-b)-F_0(w_j-b)\right)$ for $j=1,\cdots,N-1$ and $(1-F_0 (w_n-b))$ respectively. It immediately follows that we must have the following: 

\begin{align*}
    w^*_0= & \EE(\theta|0\leq \theta\leq w_1^*-b)\\
    w^*_1 & = t^*_1/q^*_1=\EE(\theta|w_1^*-b\leq\theta\leq w_2^*-b)\\
    w^*_i & =(t^*_i-t^*_{i-1})/(q^*_i-q^*_{i-1})=\EE(\theta|w_i^*-b\leq \theta\leq w_{i+1}^*-b) \ \ \text{for} \ \ i=2,...,N-1\\
    w_N^* & = \frac{t_N^*-t^*_{N-1}}{q_N^*-q_{N-1}^*}=\EE(\theta|w_N^*-b\leq \theta\leq 1)
    \end{align*}

Since for any optimal $G$ the virtual valuations will be increasing, it immediately follows that the optimal qualities are given by    
 \begin{equation*}
    c'(q^*_N)=w_N^*  \tag{$\mathrm{OPT-Q_N}$}\\
\end{equation*}
\begin{equation*}
    c'(q^*_i)=w_i^*-\frac{\sum_{j=i}^{N-1} \left(F_0(w^*_{j+1}-b)-F_0(w^*_j-b)\right) }{F_0(w^*_{i+1}-b)-F_0(w^*_i-b)}(w^*_{i+1}-w^*_{i}) \ \ \text{for} \ \ i=1,\dots, N-1 \tag{$\mathrm{OPT-Q_i}$}
    \end{equation*}
The proof is now complete.
\end{proof}

\subsection*{Proof of Proposition 6}
For each $N\ge 1$ and each bias $b$, let $\Pi_N(b)$ denote the monopolist's maximal profit
among $N$-item menus, and let
\[
w_1^N(b),\dots,w_N^N(b)
\]
be the posterior types pinned down by Proposition 5, i.e.
\[
w_N^N(b)=E(\theta\mid w_N^N(b)-b\le \theta\le 1),
\]
and, for $i=1,\dots,N-1$,
\[
w_i^N(b)=E(\theta\mid w_i^N(b)-b\le \theta\le w_{i+1}^N(b)-b).
\]

A key observation is that
\[
w_i^N(b)=w_{i+1}^{N+1}(b)\qquad \text{for all } i=1,\dots,N.
\]
This follows immediately from the backward recursion above: the $(N+1)$-item problem adds
exactly one extra equation at the bottom, while the top $N$ equations coincide with those of
the $N$-item problem after an index shift.

Therefore the associated probabilities also satisfy
\[
\omega_i^N(b)=\omega_{i+1}^{N+1}(b)\qquad \text{for all } i=1,\dots,N.
\]

Now take any feasible $N$-item menu
\[
(q_1,\dots,q_N;\ t_1,\dots,t_N).
\]
Define an $(N+1)$-item menu by adding a dummy first item:
\[
\tilde q_1=0,\qquad \tilde t_1=0,\qquad
\tilde q_{i+1}=q_i,\qquad \tilde t_{i+1}=t_i,\qquad i=1,\dots,N.
\]
Because the type/probability vectors are nested as above, this $(N+1)$-item menu is feasible
and yields exactly the same profit as the original $N$-item menu. Hence
\[
\Pi_{N+1}(b)\ge \Pi_N(b)\qquad \text{for every } N.
\]

If the optimal $(N+1)$-item menu has $q_1^{N+1}(b)=0$, then by IR also
$t_1^{N+1}(b)=0$, and removing the first item leaves an $N$-item menu with exactly the same
profit. Hence in that case
\[
\Pi_{N+1}(b)=\Pi_N(b).
\]

If instead $q_1^{N+1}(b)>0$, then the optimal $(N+1)$-item menu strictly improves on the
dummy-item embedding of the $N$-item optimum, so
\[
\Pi_{N+1}(b)>\Pi_N(b).
\]

Therefore the seller's optimal finite menu contains exactly the largest number of
non-degenerate items, i.e. the largest $N$ such that the lowest allocated quality is strictly
positive.

Now, the fact that $N^*_b$ is a decreasing function of $b$ follows simply by observing that as $b$ decreases, all types specified by Proposition 5 increase. 

\begin{lemma}
Fix $N$. For every $i=1,\dots,N$, the posterior type $w_i^N(b)$ is strictly decreasing in $b$.
\end{lemma}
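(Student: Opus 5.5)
The plan is backward induction on $i$, from $i=N$ down to $i=1$, using the recursion from Proposition~\ref{opt_types}. I will argue by comparative statics on the unique root of each equation rather than by differentiating the roots. Write $m(a,c):=\EE(\theta\mid a\le\theta\le c)$ for the truncated mean under $F_0$. Because $f_0>0$, $m$ is continuous and strictly increasing in both arguments.

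\textbf{Step 1 ($i=N$).} Define $\varphi_N(w,b):=w-m(w-b,1)$, so that $w_N^N(b)$ is its root in $w$.
\begin{itemize}
\item For fixed $w$, increasing $b$ lowers the lower truncation point $w-b$. This lowers $m$, so $\varphi_N(\cdot,b)$ is strictly increasing in $b$.
\item For the crossing direction, I will use the increasing hazard rate assumption. Under IHR, $\partial m/\partial a\le 1$ for truncations of the form $[a,1]$. Hence $w\mapsto w-m(w-b,1)$ is nondecreasing and negative-to-positive. The same fact gives the uniqueness already invoked in the proof of Proposition~\ref{opt_types}.
\item Take $b<b'$ and $w=w_N^N(b)$. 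Then $\varphi_N(w,b')>\varphi_N(w,b)=0$. Since $\varphi_N(\cdot,b')$ crosses zero once from below, its root lies strictly to the left, so $w_N^N(b')<w_N^N(b)$.
\end{itemize}

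\textbf{Step 2 (inductive step, $i<N$).} Suppose $w_{i+1}^N(\cdot)$ is strictly decreasing. Then $w_i^N(b)$ is the root of
\[
\varphi_i(w,b):=w-m\bigl(w-b,\;w_{i+1}^N(b)-b\bigr).
\]
When $b$ increases, the lower limit $w-b$ strictly falls. The upper limit $w_{i+1}^N(b)-b$ also strictly falls, since both of its terms move down. As $m$ is increasing in both arguments, $\varphi_i(w,\cdot)$ is strictly increasing. The single-crossing argument from Step 1 then gives $w_i^N(b')<w_i^N(b)$ for $b<b'$. Running the induction down to $i=1$ completes the lemma. It also yields the claim that $N^*_b$ is decreasing, using the nesting $w_i^N(b)=w_{i+1}^{N+1}(b)$ and the characterization of $N^*_b$ via positivity of $q_1$.

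\textbf{Main obstacle.} The delicate point is the single-crossing property $\partial m/\partial a<1$ for two-sided truncations $[a,c]$ in the inductive step. IHR alone controls only upper-unbounded truncations.

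My first attempt will use the fact that IHR together with $f_0>0$ is usually paired with log-concavity. For log-concave densities the truncated mean has slope in each endpoint strictly below one; this is the standard Bagnoli–Bergstrom result. If only IHR is available, I will instead lean on the uniqueness asserted in the proof of Proposition~\ref{opt_types}. Uniqueness plus the boundary signs give the crossing direction: at $w=b$ the truncation starts at $0$, so $\varphi_i\le 0$; at the upper end, $\varphi_i\ge 0$. This pins down the negative-to-positive sign change even without monotonicity of $\varphi_i$ in $w$, and that sign change is all the comparative-statics argument needs.
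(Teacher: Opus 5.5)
\textbf{Gap: the two-sided single-crossing step.} Your backward induction has the same skeleton as the paper's proof. The paper sets $x_i^N(b):=w_i^N(b)-b$ and writes the recursion as $R(x_i^N,x_{i+1}^N)=b$, with $R(x,y):=\EE(\theta\mid x\le\theta\le y)-x$ and $x_{N+1}^N:=1$. It then uses that $R$ is strictly increasing in $y$ and strictly decreasing in $x$. Your $\varphi_i$ is the same object in $w$-coordinates. Your argument that $\varphi_i(w,\cdot)$ is strictly increasing in $b$ (both truncation points fall) is fine and needs only $f_0>0$.

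The gap is the other half, which you correctly single out but do not close: single crossing of $\varphi_i(\cdot,b')$ in $w$, i.e.\ that $a\mapsto m(a,c)-a$ is strictly decreasing for two-sided truncations. Neither fallback works.
\begin{itemize}
\item Log-concavity of $f_0$ is not assumed in the paper, and IHR is strictly weaker. Bagnoli--Bergstrom would therefore prove a different lemma.
\item In the second fallback, the sign at $w=b$ does not follow from ``the truncation starts at $0$.'' We have $\varphi_i(b,b)=b-m\bigl(0,w_{i+1}^N(b)-b\bigr)$, which is nonpositive only if $m\bigl(0,w_{i+1}^N(b)-b\bigr)\ge b$. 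That is a genuine restriction; in the Uniform case it reads $w_{i+1}^N(b)\ge 3b$.
\item Without monotonicity in $w$, a unique root could be a tangency, so uniqueness plus one boundary sign is not enough.
\item The uniqueness you would borrow from Proposition~\ref{opt_types} is itself justified there, for the two-sided equations, by appeal to IHR. That is the very property you doubt.
\end{itemize}

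\textbf{The missing idea: IHR survives truncation from above.} Your diagnosis ``IHR alone controls only upper-unbounded truncations'' is wrong. For $c\le 1$ and $s<c$,
\[
\frac{f_0(s)}{F_0(c)-F_0(s)}=\frac{f_0(s)}{1-F_0(s)}\cdot\Bigl(1+\frac{1-F_0(c)}{F_0(c)-F_0(s)}\Bigr),
\]
which is a product of two positive increasing functions. Hence $s\mapsto F_0(c)-F_0(s)$ is log-concave on $[0,c)$. Integrating by parts,
\[
m(a,c)-a=\int_0^{c-a}\frac{F_0(c)-F_0(a+u)}{F_0(c)-F_0(a)}\,du .
\]
For each fixed $u$ the integrand is decreasing in $a$ by log-concavity, and it is strictly positive. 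The range of integration also strictly shrinks as $a$ rises. Hence $m(a,c)-a$ is strictly decreasing in $a$.

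This gives strict single crossing at every step of your induction, using only the paper's assumptions. With $c=1$ it also upgrades your weak inequality in Step 1 to a strict one, and it delivers the uniqueness used in Proposition~\ref{opt_types}.

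The paper's proof attributes the monotonicity of $R$ in $x$ to $f_0>0$ alone, which does not suffice: a narrow spike in $f_0$ makes $R$ locally increasing in $x$. So you were right that this is where the content lies; it just comes from IHR rather than log-concavity.

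Finally, the paper's $x$-coordinates keep every comparison point inside the admissible range automatically. In your version, evaluating $\varphi_i(w_i^N(b),b')$ requires you to check that $w_i^N(b)\in[b',\,w_{i+1}^N(b')]$, or to extend $m$ outside that range.
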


\begin{proof}
Set $x_i^N(b):=w_i^N(b)-b$. Then $x_N^N(b)$ solves
\[
E(\theta\mid x_N^N(b)\le \theta\le 1)-x_N^N(b)=b,
\]
and, for $i=1,\dots,N-1$, $x_i^N(b)$ solves
\[
E(\theta\mid x_i^N(b)\le \theta\le x_{i+1}^N(b))-x_i^N(b)=b.
\]
Define
\[
R(x,y):=E(\theta\mid x\le \theta\le y)-x.
\]
Because $f_0>0$ on $[0,1]$, the function $R(x,y)$ is strictly decreasing in $x$ and strictly
increasing in $y$. A backward induction on $i$ therefore gives
\[
x_i^N(b')>x_i^N(b)\qquad \text{whenever } b'<b.
\]
Since
\[
w_i^N(b)=E(\theta\mid x_i^N(b)\le \theta\le x_{i+1}^N(b))
\]
for $i<N$ and
\[
w_N^N(b)=E(\theta\mid x_N^N(b)\le \theta\le 1),
\]
it follows that $w_i^N(b')>w_i^N(b)$ whenever $b'<b$.
\end{proof}

Thus, the lowest positive quality that solves (OPT-Q\textsubscript{1}) is attained for a higher number of items. \qed

\end{document}